\newtheoremstyle{mytheorem}{\topsep}{\topsep}{\sffamily}{}{\bfseries}{.}{.5em}{}
\theoremstyle{mytheorem}
\newtheorem{theorem}{Theorem}
\newtheorem{lemma}[theorem]{Lemma}
\newtheorem{corollary}[theorem]{Corollary}
\newtheorem{proposition}[theorem]{Proposition}
\theoremstyle{definition}
\newtheorem{definition}[theorem]{Definition}
\DeclareMathOperator{\dist}{dist}
\DeclareMathOperator{\tw}{tw}
\newcommand{\eps}{\epsilon}
\newcommand{\prob}[1]{\textit{#1}}
\newcommand{\DD}{\mathcal{D}}
\newcommand{\eS}{\mathcal{S}}
\newcommand{\C}{\mathcal{C}}
\newcommand{\B}{\mathcal{B}}
\newcommand{\T}{\mathcal{T}}
\newcommand{\algo}[1]{\textsc{#1}}
\newcommand{\OPT}{\mbox{\rm OPT}}
\newcommand{\Group}{\mathcal{G}}
\newcommand{\myell}{p}
\newif\ifappendix
\newif\ifmain
\newif\iffullorapp
\newif\ifabstract
\newif\iffull
\newcommand{\dynproof}[2]{\noindent \textit{Proof of \eqref{#1} left
    #2 \eqref{#1} right:} \medskip}
\begin{document}
\ifabstract
 \date{ver.~STOC-3}
\fi
\iffull
 \date{ver.~16}
\fi
\date{}
\title{Approximation Schemes for Steiner Forest on
  Planar Graphs \\ and Graphs of Bounded Treewidth}
\author{MohammadHossein Bateni\thanks{Department of Computer Science, Princeton University, Princeton, NJ 08540; Email: \textsf{mbateni@cs.princeton.edu}.
The author was supported by NSF ITR grants
                      CCF-0205594, CCF-0426582 and NSF CCF 0832797,
                      NSF CAREER award CCF-0237113,
                      MSPA-MCS award 0528414,
                      NSF expeditions award 0832797,
  as well as a Gordon Wu fellowship.}
 \and MohammadTaghi Hajiaghayi\thanks{AT\&T Labs--Research, Florham
   Park, NJ 07932; Email: \textsf{hajiagha@research.att.com}.}
\and
 D\'{a}niel Marx\thanks{School of Computer Science, Tel Aviv
   University, Tel Aviv, Israel, \textsf{dmarx@cs.bme.hu.}}
}
\maketitle
\begin{abstract}
We give the first \emph{polynomial-time approximation scheme} (PTAS)
for the \prob{Steiner forest} problem on planar graphs and, more
generally, on graphs of bounded genus. As a first step, we show how to build a
\emph{Steiner forest spanner} for such graphs. The crux of the process
is a clustering procedure called {\em prize-collecting clustering}
that breaks down the input instance into separate subinstances which are
easier to handle; moreover, the terminals in different subinstances are 
far from each other. Each subinstance has a
relatively inexpensive Steiner tree connecting all its terminals,
and the subinstances can be solved (almost) separately. 
Another building block is a PTAS for \prob{Steiner forest} on graphs of
bounded treewidth.
Surprisingly, \prob{Steiner forest} is NP-hard even on graphs
of treewidth 3. Therefore, our PTAS for bounded treewidth graph needs
a nontrivial combination of approximation arguments and dynamic
programming on the tree decomposition. We further show that \prob{Steiner forest} can be solved
in polynomial time for series-parallel graphs (graphs of treewidth
at most two) by a novel combination of dynamic programming and minimum
cut computations, completing our
thorough complexity study of \prob{Steiner forest} in the range of bounded
treewidth graphs, planar graphs, and bounded genus graphs.
\end{abstract}
\ifabstract
\thispagestyle{empty}
\newpage
\setcounter{page}{1}
\fi
\section{Introduction}

One of the most fundamental problems in combinatorial optimization
and network design with both practical and theoretical significance
is the Steiner forest problem, in which given a graph $G=(V,E)$ and
a set consisting of pairs of terminals, called {\em demands}, $\DD=
\{(s_1, t_1), (s_2,t_2),\dots, (s_k,t_k)\}$, the goal is to find a
minimum-cost forest $F$ of $G$ such that every pair of terminals in $\DD$ is
connected by a path in $F$. The first and the best approximation
factor for this problem is 2 due to Agrawal, Klein and
Ravi~\cite{AKR95} (see also Goemans and Williamson~\cite{GW95}). The
conference version of the Agrawal, Klein and Ravi~\cite{AKR91}
appeared in 1991, and there have been no improved approximation
algorithms invented for Steiner forest. Recently Borradaile, Klein
and Kenyon-Mathieu~\cite{BKM08:euc-for} obtain a Polynomial Time
Approximation Scheme (PTAS) for Euclidean Steiner forest where the
terminals are in the Euclidean plane. They pose obtaining a PTAS for
Steiner forest in planar graphs, the natural generalization of
Euclidean Steiner forest, as the main open problem. We note that in
network design, planarity is a natural restriction, since in
practical scenarios of physical networking, with cable or fiber
embedded in the ground, crossings are rare or nonexistent. In this
paper, we settle this open problem by obtaining a PTAS for planar
graphs (and more generally, for bounded genus graphs)
via a novel technique of {\em prize-collecting clustering} with
potential use to obtain other PTASs in planar graphs.

The special case of the Steiner forest problem when all pairs have a common terminal is
the classical Steiner tree problem, one of the first problems shown NP-hard by
Karp~\cite{cr:18}. The problem remains hard even on planar
graphs~\cite{GJ77}. In contrast to Steiner forest, a long sequence
of papers give approximation factors better than 2 for this
problem~\cite{ZelAnnals,Zelikovsky93-Steiner,BermanR92,Zelik96,PromelS97,KarpinskiZ97,HP99,RobinsZ05};
the current best approximation ratio is 1.55~\cite{RobinsZ05}. Since
the problem is APX-hard in general graphs~\cite{cr:7,Thimm03}, we do
not expect to obtain a PTAS for this problem in general graphs.
However, for the Euclidean Steiner tree problem,
Arora~\cite{arora98:ptas} and Mitchell~\cite{cr:26} present a PTAS.
Obtaining a PTAS for Steiner tree on planar graphs, the natural
generalization of Euclidean Steiner tree, remained a major open
problem since the conference version of Arora~\cite{A96} in 1996.
Only in 2007, Borradaile, Kenyon-Mathieu, and
Klein~\cite{BKM07:planar} settle this problem with a nice technique
of constructing \emph{light spanners} for Seiner trees in planar graphs. In
this paper, we also generalize this result to obtain light spanners
for Steiner forests.

Most approximation schemes for planar graph problems use (implicitly
or explicitly) the fact that the problem is easy to solve on
bounded-treewidth graphs. In particular, a keystone blackbox in the
algorithm of Borradaile et al.~\cite{BKM07:planar} for Steiner tree is
the result that, for every fixed value of $k$, the problem is
polynomial-time solvable on graphs having treewidth at most $k$. There
is a vast literature on algorithms for bounded-treewidth graphs and in
most cases polynomial-time (or even linear-time) solvability follows
from the well-understood standard technique of dynamic programming on
tree decompositions.  However, for Steiner {\em forest,} the obvious
way of using dynamic programming does not give a polynomial-time
algorithm.  The difficulty is that, unlike in Steiner tree, a solution
of Steiner forest induces a partition on the set of terminals and a
dynamic programming algorithm needs to keep track of exponentially
many such partitions. In fact, this approach seems to fail even for
series-parallel graphs (which have treewidth at most 2); the
complexity of the problem for series-parallel graphs was stated as an
open question by Richey and Parker~\cite{MR862895} in 1986. We resolve
this question by giving a polynomial-time algorithm for Steiner forest
on series-parallel graphs.  The main idea is that even though
algorithms based on dynamic programming have to evaluate subproblems
corresponding to exponentially many partitions, the function
describing these exponentially many values turn out to be submodular
and it can be represented in a compact way by the cut function of a
directed graph. On the other hand, Steiner forest becomes NP-hard on
graphs of treewidth at most 3. Thus perhaps this is the first example
when the complexity of a natural problem changes as treewidth
increases from 2 to 3.  In light of this hardness result, we
investigate the approximability of the problem on bounded-treewidth
graphs and show that, for every fixed $k$, Steiner forest admits a
PTAS on graphs of treewidth at most $k$. The main idea of the PTAS is
that if the dynamic programming algorithm considers only an
appropriately constructed polynomial-size subset of the set of all
partitions, then this produces a solution close to the optimum. Very
roughly, the partitions in this subset are constructed by choosing a
set of center points and classifying the terminals according to the
distance to the center points.  Our PTAS for planar graphs (and more
generally, for bounded genus graphs) uses this PTAS for
bounded-treewidth graphs.  This completes our thorough study of
Steiner forest in the range of bounded treewidth graphs, planar graphs
and bounded genus graphs.

\subsection{Our results and techniques}
 Our main result in this paper is a PTAS for the planar Steiner
forest problem.
\begin{theorem}\label{thm:main}
For any constant $\bar{\eps} > 0$, there is a polynomial-time
$(1+\bar{\eps})$-approximation algorithm for the Steiner forest
problem on planar graphs and, more generally, on graphs of bounded genus.
\end{theorem}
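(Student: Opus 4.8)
The plan is to run the by-now-standard three-stage recipe for planar approximation schemes --- first replace the input by a light \emph{spanner}, then reduce the spanner to bounded treewidth, and finally solve by dynamic programming --- but \prob{Steiner forest} forces genuinely new ingredients into the first and last stages: unlike \prob{Steiner tree}, an optimal solution need not be connected, so there is in general no single cheap connected structure to build a spanner around, and (as the introduction notes) no exact polynomial algorithm is available even at treewidth~$3$.

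\textbf{Stage 1 (spanner via prize-collecting clustering).} I would first compute a $2$-approximate Steiner forest, so that $\OPT$ is known up to a factor $2$. The prize-collecting clustering is then applied to the demand set $\DD$; it produces a partition of the terminals into groups $\eS_1,\dots,\eS_m$ lying in separate regions of $G$ with two properties: (i) for each $i$ the terminals of $\eS_i$ can be joined by a Steiner tree of cost $O_{\bar\eps}(\OPT_i)$, where $\OPT_i$ is what an optimum spends ``around'' $\eS_i$; and (ii) the groups are so well separated that an optimum never routes a demand between distinct groups, so the instance decomposes --- up to a $(1+\bar\eps)$ factor --- into independent subinstances $(\eS_i,\DD_i)$ with $\sum_i \OPT_i \le (1+\bar\eps)\OPT$. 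Property (i) is exactly the hypothesis needed to run the planar (and bounded-genus) Steiner-\emph{tree} spanner construction of Borradaile, Kenyon-Mathieu, and Klein~\cite{BKM07:planar} on each region, yielding $H_i \subseteq G$ of weight $O_{\bar\eps}(\OPT_i)$ that contains a $(1+\bar\eps)$-approximate Steiner tree for $\eS_i$. Taking $H = \bigcup_i H_i$ and invoking (ii), $H$ has weight $O_{\bar\eps}(\OPT)$ and contains a $(1+O(\bar\eps))$-approximate Steiner \emph{forest} for the whole instance.

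\textbf{Stages 2 and 3 (treewidth reduction and dynamic programming, per subinstance).} I would then handle each subinstance separately. Fix $i$ and apply the contraction-decomposition theorem for bounded-genus graphs (Demaine--Hajiaghayi--Mohar) to $H_i$: for an integer $k$ the edges of $H_i$ split into $k$ classes, the contraction of any one of which leaves a graph of treewidth $O(k)$, with the hidden constant depending only on the genus. Choosing $k = \Theta_{\bar\eps}(1)$ large enough that the lightest class has weight at most $\bar\eps\cdot\OPT_i$, contract that class. Contraction only merges vertices and the preimage of each contracted vertex is connected, so a Steiner forest in the contracted graph lifts to one in $G$ after adding the (cheap) contracted class back, and the contracted graph still contains a $(1+O(\bar\eps))$-approximate solution for $(\eS_i,\DD_i)$ while having treewidth $\poly(1/\bar\eps)$. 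Now I invoke the PTAS for \prob{Steiner forest} on bounded-treewidth graphs established earlier in the paper --- the nontrivial dynamic program over a polynomial-size family of terminal partitions --- to obtain a $(1+\bar\eps)$-approximate solution for subinstance $i$. Lifting, restoring the contracted edges, and taking the union over all $i$ yields a Steiner forest of cost $(1+O(\bar\eps))\OPT$; rescaling $\bar\eps$ completes the proof, and the bounded-genus case is covered throughout because both the spanner construction and the contraction decomposition above are available for bounded genus.

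\textbf{Main obstacle.} I expect the crux to be Stage~1, specifically the prize-collecting clustering: properties (i) and (ii) pull against each other --- small clusters keep the Steiner trees cheap but make the separation property hard to certify, whereas coarse clusters separate well but need not admit a cheap spanning tree --- and reconciling them requires a prize-collecting Goemans--Williamson-type argument that charges the inter-cluster ``boundary'' to the optimum. A secondary difficulty, and the reason Stage~3 is not a black box, is that the bounded-treewidth PTAS must itself be developed from scratch (no exact algorithm exists at treewidth~$3$) and then shown to compose cleanly with the contractions of Stage~2.
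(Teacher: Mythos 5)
Your architecture matches the paper's: prize-collecting clustering to split $\DD$ into groups each spanned by a cheap tree, a Borradaile--Kenyon-Mathieu--Klein-style spanner built around each such tree, contraction-decomposition to reach bounded treewidth, and the new bounded-treewidth PTAS at the bottom. But there are two places where your Stage 1/Stage 2 accounting does not go through as written.

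First, in Stage 1 you claim each region yields an $H_i$ "that contains a $(1+\bar\eps)$-approximate Steiner \emph{tree} for $\eS_i$" and then conclude that $H=\bigcup_i H_i$ contains a $(1+O(\bar\eps))$-approximate Steiner \emph{forest}. This is a non sequitur, and it is exactly the gap between Steiner tree and Steiner forest that makes the problem hard: connecting \emph{all} terminals of a group $\eS_i$ by a single tree can be far more expensive than the optimal forest for the demand pairs $\DD_i$ (the groups produced by the clustering are generally coarser than the components of an optimal forest --- that is unavoidable, since the clustering must merge components that are close to each other). What the spanner must guarantee is $\OPT_{\DD_i}(H_i)\le(1+\eps)\OPT_{\DD_i}(G_{in})$, i.e., it must approximately preserve arbitrary \emph{forests} satisfying $\DD_i$, not just a spanning tree of $\eS_i$. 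The BKM Steiner-tree spanner theorem is therefore not "exactly the hypothesis needed" as a black box; the panel/portal structure theorem has to be restated and reproved for forests strictly enclosed in a panel (the paper's Lemma~\ref{lem:struct} and the spanning-property argument of Lemma~\ref{lem:spanning}), and the error must then be summed against $\sum_i\OPT_{\DD_i}(G_{in})\le(1+\eps)\OPT$ from the clustering theorem.

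Second, your per-subinstance treewidth reduction chooses $k$ "large enough that the lightest class \lbrack of $H_i$\rbrack\ has weight at most $\bar\eps\cdot\OPT_i$." No constant $k$ achieves this in general, because the clustering gives only \emph{aggregate} bounds: $\sum_i\ell(T_i)\le(4/\eps+2)\OPT$ and $\sum_i\OPT_i\le(1+\eps)\OPT$, with no per-cluster relation $\ell(T_i)=O_{\eps}(\OPT_i)$ (a cluster can be forced together by the potentials even though its own optimal forest is tiny), hence no bound $\ell(H_i)=O_{\bar\eps}(\OPT_i)$. The repair is to charge the contracted edges globally --- the lightest class of $H$ has weight at most $\ell(H)/k\le f(\eps)(4/\eps+2)\OPT/k$ --- which is what the paper does by running the contraction-decomposition and the bounded-treewidth PTAS once on the whole spanner $H$ rather than on each $H_i$ separately. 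With that change (and the forest-spanner lemma above) your argument coincides with the paper's proof.
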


 To this end, we build a \emph{Steiner forest spanner} for the
input graph and the set of demands in Section~\ref{sec:spanner}.
Roughly speaking, a Steiner forest spanner is a subgraph of the
given graph whose cost is no more than a constant factor times the
cost of the optimal Steiner forest, and furthermore, it contains a
nearly optimal Steiner forest. Denote by $\OPT_{\DD}(G)$ the minimum
cost of a Steiner forest of $G$ satisfying (connecting) all the
demands in $\DD$. We sometimes use $\OPT$ instead of
$\OPT_{\DD}(G)$. A subgraph $H$ of $G$ is a \emph{Steiner forest
spanner} with respect to demand set $\DD$ if it has the following
two properties:
\begin{description}
 \item[Spanning Property:] There is a forest in $H$ that connects all demands in $\DD$ and has
length at most $(1 + \eps)OPT_{\DD}(G)$, namely, $\OPT_{\DD}(H) \leq
(1+\eps)\OPT_{\DD}(G)$.
 \item[Shortness Property:] The total length of $H$ is at most $f(\eps)OPT_{\DD}(G)$.
\end{description}

\begin{theorem}\label{thm:spanner}
 Given any $\eps > 0$,
a bounded genus graph $G_{in}(V_{in},E_{in})$ and demand pairs $\DD$, we can
compute in polynomial time a Steiner forest spanner $H$ for $G_{in}$
with respect to demand set $\DD$.
\end{theorem}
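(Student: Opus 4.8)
\emph{Proof plan.} The plan is to follow the blueprint used for planar Steiner \emph{tree} by Borradaile--Kenyon-Mathieu--Klein, but to first decompose the instance so that this blueprint becomes applicable. We may assume $G_{in}$ is connected with nonnegative edge lengths. First I would compute a constant-factor approximate Steiner forest $F_0$ for $\DD$ (Agrawal--Klein--Ravi), so that $\len(F_0)=O(\OPT_{\DD}(G_{in}))$, and take its connected components as a first-cut grouping of the demand pairs. The construction then has three stages: (1)~a \emph{prize-collecting clustering} that turns $F_0$ into demand groups $\DD_1,\dots,\DD_m$ together with pairwise far-apart connected clusters $\C_1,\dots,\C_m\subseteq G_{in}$, where $\C_i$ contains all terminals of $\DD_i$; (2)~for each $i$, running a light Steiner-tree spanner on the host graph $\C_i$ with terminal set $Q_i$ := (the terminals of $\DD_i$) to obtain $H_i\subseteq\C_i$; (3)~outputting $H:=\bigcup_i H_i$.

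\textbf{What the clustering must deliver.} I want the clustering to guarantee: \emph{(A)}~every demand pair has both endpoints in one group; \emph{(B)}~the clusters are separated by a margin large relative to their internal sizes, so that an optimal forest may be assumed to decompose across them; and \emph{(C)}~inside each $\C_i$ the whole terminal set $Q_i$ can be connected into a single tree at a cost that, summed over all clusters, exceeds the ``decomposed optimum'' by at most $\eps\cdot\OPT$. Together (B) and (C) give the key inequality $\sum_i \mathrm{st}(\C_i,Q_i)\le(1+\eps)\OPT_{\DD}(G_{in})$, where $\mathrm{st}(\C_i,Q_i)$ is the minimum length of a Steiner tree for $Q_i$ inside $\C_i$; this is exactly the ``each subinstance has an inexpensive Steiner tree connecting all its terminals'' phenomenon, and it also yields $\sum_i\mathrm{st}(\C_i,Q_i)=O(\OPT)$. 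I would obtain such clusters by a moat-growing / primal-dual procedure in the spirit of Goemans--Williamson run around the components of $F_0$: moats grow simultaneously, and when the moats of two components collide a prize-collecting rule --- comparing the length needed to bridge the components against the dual mass (``prize'') accumulated in the colliding moats --- decides whether to merge them into one cluster or to freeze them as separate, well-separated clusters, with the thresholds tuned so that the total bridging length ever spent is $\le\eps\cdot\OPT$ and the separation of distinct final clusters is at least $\Omega(1/\eps)$ times their internal costs.

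\textbf{Assembling the spanner and checking the two properties.} For each $i$, since $\C_i$ has a cheap tree spanning $Q_i$, the Steiner-tree instance $(\C_i,Q_i)$ is well behaved, and I apply the planar light Steiner-tree spanner of Borradaile--Kenyon-Mathieu--Klein, extended to bounded genus via the standard genus-reduction / bounded-genus subset-connectivity machinery, to get $H_i\subseteq\C_i$ that (i)~contains a Steiner tree $T_i$ of $Q_i$ with $\len(T_i)\le(1+\eps)\,\mathrm{st}(\C_i,Q_i)$ and (ii)~has $\len(H_i)\le f_0(\eps)\,\mathrm{st}(\C_i,Q_i)$. For Shortness, $\len(H)=\sum_i\len(H_i)\le f_0(\eps)\sum_i\mathrm{st}(\C_i,Q_i)=O(f_0(\eps))\cdot\OPT_{\DD}(G_{in})$, so $f(\eps)=O(f_0(\eps))$ works. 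For Spanning, by (A) each demand pair lies in some group, so $\bigcup_i T_i\subseteq H$ is a forest connecting every pair of $\DD$, of length $\sum_i\len(T_i)\le(1+\eps)\sum_i\mathrm{st}(\C_i,Q_i)\le(1+\eps)^2\,\OPT_{\DD}(G_{in})$; rescaling $\eps$ gives $\OPT_{\DD}(H)\le(1+\eps)\OPT_{\DD}(G_{in})$. All steps --- the AKR approximation, the moat-growing, the genus reduction, and the BKM spanner --- run in polynomial time.

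\textbf{Main obstacle.} The technical heart is the prize-collecting clustering: making (A), (B), (C) hold simultaneously forces a delicate balance --- merge nearby components aggressively enough that inside each cluster one may legitimately replace a Steiner \emph{forest} for $\DD_i$ by a Steiner \emph{tree} on all of $Q_i$, yet conservatively enough that the total cost of these trees stays within $(1+\eps)\OPT$, while still guaranteeing a separation margin between clusters that is large relative to their internal costs. Securing all three at once is precisely what the prize-collecting (moat/dual) accounting is for, and the bookkeeping --- charging each merge's bridging length against accumulated moat duals, and controlling the geometric range of scales that appear --- is where the real work lies. A secondary but genuinely needed ingredient is lifting the Borradaile--Kenyon-Mathieu--Klein Steiner-tree spanner from planar host graphs to bounded-genus ones.
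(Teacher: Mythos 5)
Your overall architecture---prize-collecting clustering of the demands around a constant-factor solution, a per-cluster spanner in the style of Borradaile et al., union over clusters---matches the paper's. But your Spanning argument rests on a clustering guarantee, your condition (C), namely $\sum_i \mathrm{st}(\C_i,Q_i)\le(1+\eps)\,\OPT$ where $\mathrm{st}(\C_i,Q_i)$ is the cost of a single Steiner \emph{tree} on all terminals of group $i$, and this is a genuine gap: the prize-collecting clustering does not deliver it, and no known polynomial-time procedure does. What the clustering actually guarantees (Theorem~\ref{thm:break}) is two \emph{separate} bounds: (i) trees $T_i$ spanning each group whose \emph{total} cost is only $O(\OPT/\eps)$---used solely for the Shortness property---and (ii) $\sum_i \OPT_{\DD_i}(G_{in})\le(1+\eps)\OPT$, where $\OPT_{\DD_i}$ is the optimal Steiner \emph{forest} for group $i$, which may have many components. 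The gap between (i) and a $(1+\eps)$-bound for single trees is essential: the clustering merges components of the $2$-approximation whenever their distance is small relative to $1/\eps$ times their cost, so stitching a group's terminals into one tree can cost $\Theta(1/\eps)$ times the group's forest optimum. Moreover, if your (C) were achievable in polynomial time, the spanner would be unnecessary---you could run a Steiner-tree PTAS on each $(\C_i,Q_i)$ and output the union for a $(1+\eps)^2$-approximation---and since the clustering works in general graphs, it would reduce general Steiner forest to Steiner tree with only a $(1+\eps)$ loss, improving the long-standing factor $2$. The paper's construction is shaped precisely by this obstacle: because only the \emph{forest} version (ii) of the decomposition bound holds, each $H_i$ must contain a near-optimal Steiner forest for $\DD_i$ (not merely a near-optimal tree on $Q_i$), so $H_i$ is built by cutting $G_{in}$ open along the Euler tour of the (possibly expensive) tree $T_i$, forming strips, panels, supercolumns and portals, and invoking a structure theorem (Lemma~\ref{lem:struct}) that reroutes an arbitrary forest through $O(1)$ portals per panel. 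The Borradaile et al.\ Steiner-tree spanner cannot be invoked as a black box on $(\C_i,Q_i)$.

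Two smaller points. First, you build $H_i$ inside the cluster subgraph $\C_i$, but the near-optimal forest for $\DD_i$ may use edges of $G_{in}$ far from the cluster; the paper builds each $H_i$ from all of $G_{in}$ (re-embedded so that the doubled $T_i$ bounds the outer face). Second, your Shortness argument does survive the weaker clustering bound, since $\ell(H)\le f_0(\eps)\sum_i\ell(T_i)\le f_0(\eps)\cdot(4/\eps+2)\OPT$ and the extra $1/\eps$ is absorbed into $f(\eps)$; it is only the Spanning property whose proof, as written, does not go through.
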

The algorithm that we propose achieves this in time $O(n^2\log n)$.
The entire algorithm for Steiner forest runs in polynomial time but the exponent of the polynomial depends on $\epsilon$ and the genus of the input graph.

The proof of Theorem~\ref{thm:spanner} heavily relies on a novel
clustering method presented in Theorem~\ref{thm:break} that allows
us to (almost) separately build the spanners for smaller and far
apart sets of demands. Construction of the spanner for each of the
sets itself uses  ideas of Borradaile et al.~\cite{BKM07:planar},
although there are still several technical differences.
The clustering technique works for general graphs as opposed to
the rest of the construction which requires the graphs to
have bounded genus.
\begin{theorem}\label{thm:break}
 Given an  $\eps > 0$,
a graph $G_{in}(V_{in},E_{in})$, and a set $\DD$ of pairs of
vertices, we can compute in polynomial time a set of trees $\{T_1,
\dots, T_k\}$, and a partition of demands $\{\DD_1, \dots, \DD_k\}$,
 with the following properties.
\begin{enumerate}
 \item
   All the demands are covered, i.e., $\DD=\bigcup_{i=1}^k \DD_i$.
 \item
   All the terminals in $\DD_i$ are spanned by the tree $T_i$.
 \item
   The sum of the costs of all the trees $T_i$ is no more than $(\frac{4}{\eps}+2)\OPT$.
 \item
   The sum of the costs of  minimum Steiner forests of all demand sets $\DD_i$ is no more than $1+\eps$ times the cost of a minimum Steiner forest of $G_{in}$; i.e., $\sum_i\OPT_{\DD_i}(G_{in}) \leq (1+\eps)\OPT_{\DD}(G_{in})$. 
\end{enumerate}
\end{theorem}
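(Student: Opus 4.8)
The plan is to iteratively peel off clusters of terminals by growing a moat-style region around a carefully chosen "active" demand and charging the boundary to the optimal solution. Fix an optimal Steiner forest $F^*$ of cost $\OPT$. Work in the metric completion of $G_{in}$, and process demands one at a time: pick any demand $(s,t)$ not yet assigned to a cluster, and grow a ball (in the shortest-path metric) around, say, the connected component of $F^*$ containing $s$ — or more simply, grow a region $B$ starting from the terminal set of a seed demand, increasing a radius parameter $r$ continuously. As $r$ grows, the region $B_r$ sweeps up more terminals; we stop at the first "lucky" radius where the cost of the cut $\partial B_r$ (equivalently, the cost of a cheap tree reaching the frontier) is small relative to the cost of $F^*$ restricted to $B_r$. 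A standard averaging argument over a scale of $\Theta(1/\eps)$ nested radii shows such a radius exists with the portion of $F^*$ inside paying for the stopping cut up to an $\eps$ factor. We then set $\DD_i$ to be all demands with \emph{both} endpoints inside $B_r$, contract/cut along the frontier, and recurse on the remainder.

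The key steps, in order, are: (i) show that at the chosen stopping radius, all terminals of every demand in $\DD_i$ lie inside $B_r$ while the two endpoints of every not-yet-assigned demand lie outside — this uses that we only ever stopped when the frontier is "clean," and that $F^*$ connects paired terminals, so a demand is split by the frontier only if $F^*$ crosses it, which we can avoid by growing past such crossings (the charge for growing past them is what the averaging argument controls); (ii) build $T_i$ as a shortest-path tree (or a Steiner-tree $2$-approximation) spanning the terminals in $\DD_i$ together with the stopping frontier, of cost $O(1)$ times the cost of $F^*$ inside $B_r$ plus the cheap cut — summing the geometric-series-like bound across all iterations yields the $(\tfrac{4}{\eps}+2)\OPT$ bound in item~3; (iii) for item~4, observe that $F^*$ itself, once we cut it along all the chosen frontiers, decomposes into forests $F^*_i$ feasible for $\DD_i$ (since no demand in $\DD_i$ is separated by any frontier), and the only extra cost incurred is the sum of the stopping cuts, each bounded by $\eps$ times a disjoint piece of $\OPT$, giving $\sum_i \OPT_{\DD_i}(G_{in}) \le \sum_i \mathrm{cost}(F^*_i) + (\text{cuts}) \le (1+\eps)\OPT$.

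The main obstacle — and the place where the construction has to be done with care — is simultaneously guaranteeing the two "balanced" properties: the trees $T_i$ must be cheap (item~3, which wants the stopping radius \emph{small}) while the cut where we separate the instances must be even cheaper relative to the enclosed optimum (item~4, which wants a \emph{lucky} radius). Making one averaging argument serve both bounds, and in particular getting the clean constant $\tfrac{4}{\eps}+2$ rather than something with a worse dependence, requires choosing the right potential to charge against: I would charge the tree cost to the part of $\OPT$ swept into the cluster (so that clusters, being vertex-disjoint in terms of the $\OPT$-mass they consume, telescope) and charge the separating cut to the \emph{same} mass but discounted by $\eps$, found via a dyadic/telescoping search over $\Theta(1/\eps)$ radii so that total slack sums to $\eps\cdot\OPT$. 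A secondary subtlety is handling demands whose endpoints are far apart and whose $F^*$-component is not fully captured by any single ball; these should simply never be "seed" demands unless forced, and when they are, the growth continues until the whole component is enclosed, which is affordable precisely because $\OPT$ already paid for that component. The rest is bookkeeping: polynomial running time follows since there are at most $|\DD|$ iterations, each a single-source shortest-path computation.
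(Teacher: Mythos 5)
There are two genuine gaps here, and together they mean the argument does not go through. First, as written your construction is not a polynomial-time algorithm: both the stopping rule (``the cut is small relative to the cost of $F^\ast$ restricted to $B_r$'') and the frontier-cleanliness rule (``grow past crossings of $F^\ast$'') are defined in terms of the \emph{optimal} forest, which is not available. The paper's proof is organized precisely to avoid this: it starts from a computable $2$-approximate forest (Goemans--Williamson), contracts its tree components, assigns each contracted component a potential $\phi_v=\frac{1}{\eps}\ell(T^\ast_v)$, and runs a primal-dual moat growth around \emph{all} components simultaneously (with moat depth capped by the potentials), followed by pruning. Every algorithmic decision is made relative to the $2$-approximation; the comparison with $\OPT$ appears only in the analysis, via the ``exhausted color'' argument of Lemmas~\ref{lem:conn}, \ref{lem:color} and \ref{lem:break:sep}: if the optimum joins two final clusters, it must cross every moat of one of them and hence pays at least $\phi_v$ for some component $v$, so duplicating that component when the $\DD_i$ are solved separately costs only an $\eps$ fraction of what $\OPT$ already spends. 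Substituting a $2$-approximation for $F^\ast$ in your scheme does not rescue your step (iii), because cutting the \emph{optimum} along frontiers that are clean only with respect to the approximation can still disconnect demand pairs of some $\DD_i$; you would then need exactly the kind of ``$\OPT$ must have paid to cross all the moats'' argument that the paper supplies.

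The second gap is quantitative and would persist even with oracle access to $F^\ast$: the claim in your step (ii) that $T_i$ costs $O(1)$ times the $F^\ast$-mass inside $B_r$ plus the cut is false. Since $\DD_i$ is \emph{all} demands with both endpoints in $B_r$, and item~2 of the theorem requires a single connected tree $T_i$ spanning all of their terminals, you must pay to stitch together the (possibly many) distinct components of $F^\ast$ that the ball sweeps up, and that stitching cost is not controlled by the $F^\ast$-mass inside the ball. Concretely, take $m$ demand pairs, each satisfied by its own edge of length $\delta$, scattered pairwise at distance about $r$ from one another but all within distance $r$ of the seed: the optimum inside $B_r$ has length $m\delta$, which can be made arbitrarily small, while any tree spanning all $2m$ terminals costs $\Omega(mr)$. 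This connection cost is exactly what the paper's potentials are engineered to pay for --- two components are merged only when their moats, of depth $\frac{1}{\eps}$ times their \emph{own} costs, meet, so in this example the $m$ demands correctly end up in $m$ separate clusters --- and Lemma~\ref{lem:break:cost} bounds the total merging cost by $2\sum_v\phi_v\le\frac{4}{\eps}\OPT$. A single averaging argument over $\Theta(1/\eps)$ radii around a seed gives neither this cost bound nor the dichotomy needed for item~4 (clusters are either merged cheaply or so far apart, relative to their potentials, that $\OPT$ cannot afford to connect them).
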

The last condition implies that (up to a small factor) it is possible to solve
the demands $\DD_i$ separately.
Notice that this may lead to paying for poritions of the solution more than once.

We will  prove Theorem~\ref{thm:break} in Section~\ref{sec:break}.
Roughly speaking, the algorithm here first recognizes some connected
components by running a 2-approximation Steiner forest algorithm.
Clearly, this construction satisfies all but the last condition of the theorem.
However, at this point, these connected components might not be sufficiently far
from each other such that we can consider them separately. To fix this, we
contract each connected component into a ``super vertex'' to which we
assign a prize (potential) which is proportional to the sum of the
edge weights of the corresponding component. Now we are running an
algorithm that we call {\em prize-collecting clustering}. The
algorithm as well as some parts of its analysis bears similarities
to a primal-dual method due to Agrawal, Klein and Ravi~\cite{AKR91}
and Goemans and Williamson~\cite{GW95}. Indeed our analysis
strengthens these previous approaches by proving some local
guarantees (instead of the global guarantee provided in these
algorithms). In some sense, this clustering algorithm can also be
seen as a generalization of an implicit clustering algorithm of
Archer, Bateni, Hajiaghayi, and Karloff~\cite{ABHK09} who improve
the best approximation factor for prize-collecting Steiner tree to
$2-\eps$, for some constant $\eps> 0$. In this clustering, we
consider a topological structure of the graph in which each edge is
a curve connecting its endpoints whose length is equal to its
weight. We color (portions of) edges by different colors each
corresponding to a super vertex. These colors form a laminar family
and the ``depth'' of each color is at most the prize given to its
corresponding super vertex. Using this coloring scheme we further
connect some of these super vertices to each other with a cost
proportional to the sum of their prizes. At the end, we show that
now we can consider these combined connected components as separate
clusters and, roughly speaking, an optimum solution need not connect two different clusters
because of the concept of depth. We believe the prize-collecting
clustering presented in this paper might have applications for
other problems (esp. to obtain PTASs).

To obtain a PTAS promised in Theorem~\ref{thm:main}, first we
construct a spanner Steiner forest provided in
Theorem~\ref{thm:spanner}. On this spanner, we utilize a technique
due to Klein~\cite{Klein08:tsp}, and Demaine, Hajiaghayi and
Mohar~\cite{DHM07} which reduces the problem of obtaining a PTAS in
a planar (and more generally, bounded genus) graph whose total edge
weight is within a constant factor of the optimum solution to that
of finding an optimal solution in a graph of bounded treewidth (see
the proof of Theorem~\ref{thm:main} in Section~\ref{sec:main-ptas}
for more details.) However, there are no known
polynomial-time algorithms in the literature for Steiner forest on
bounded-treewidth graphs, so we cannot plug in such an algorithm to
complete our PTAS. Therefore, we need to investigate Steiner forest on
bounded-treewidth graphs.

Resolving an open question of Richey and Parker~\cite{MR862895} from
1986, we design a polynomial-time algorithm for Steiner forest on series-parallel
graphs. Very recently, using completely different techniques, a
polynomial-time algorithm was presented for the special case of
outerplanar graphs~\cite{Gassner2009}.

\begin{theorem}\label{th:mainalg}
The Steiner forest problem can be solved in polynomial time for
subgraphs of series-parallel graphs (i.e., graphs
of treewidth at most 2).
\end{theorem}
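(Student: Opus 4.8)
The plan is to run a dynamic program over a rooted series-parallel (equivalently, width-$2$) decomposition $\T$ of $G$: the leaves of $\T$ are the edges of $G$, every internal node performs either a series or a parallel composition, and each node $x$ carries a subgraph $G_x$ together with an ordered pair of \emph{poles} $(s_x,t_x)$ through which $G_x$ is attached to the remainder of $G$. Any Steiner forest $F$ of $G$ restricted to $G_x$ is a forest, and the part of $F$ outside $G_x$ touches $G_x$ only at $s_x$ and $t_x$; consequently a terminal of $G_x$ whose demand-mate lies outside $G_x$ must be joined, inside $F\cap G_x$, to $s_x$ or to $t_x$, and which of the two poles it uses is only decided higher up in $\T$. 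This forces the DP state at $x$ to record the partition of $\big(T(G_x)\cup\{s_x,t_x\}\big)$ induced by the components of $F\cap G_x$ (equivalently, in the nondegenerate case, the assignment $\phi$ of the ``escaping'' terminals of $G_x$ to $\{s_x,t_x\}$ together with one bit telling whether $G_x$ internally connects $s_x$ to $t_x$), and there are exponentially many such states.

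First I would write the DP cleanly: for node $x$ let $c_x$ be the function mapping a state to the minimum length of a forest in $G_x$ that realizes it and already satisfies every demand with both endpoints in $G_x$, unless that demand is explicitly deferred by routing its two endpoints out through $s_x$ and $t_x$. One checks that a parallel composition acts on these functions as a ``join-convolution'' over the partition lattice, $c_x(\pi)=\min\{\,c_{x_1}(\pi_1)+c_{x_2}(\pi_2):\pi_1\vee\pi_2=\pi\,\}$, followed by imposing the demand constraints that become resolvable once $G_{x_1}$ and $G_{x_2}$ are glued; and that a series composition is the same join-convolution followed by ``forgetting'' the pole that has become internal (composing the two pole-assignments through the shared middle vertex). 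The base case, a single edge, is immediate, and at the root every demand is internal, so $\OPT_\DD(G)$ is the minimum of $c_{\mathrm{root}}$ over states with all demands satisfied.

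The heart of the argument is a structural claim: every function $c_x$ produced by this recursion is \emph{submodular} over the relevant lattice and, more usefully, can be represented as (a fixed variant of) the minimum $s_x$--$t_x$ cut function of an auxiliary capacitated \emph{directed} graph $D_x$ on polynomially many vertices, in which the poles and the escaping terminals of $G_x$ appear as distinguished vertices. Granting this, I would show that each composition operation can be carried out directly on the representations: gluing $D_{x_1}$ and $D_{x_2}$ at the appropriate distinguished vertices (and deleting the forgotten pole in the series case) yields a directed graph whose cut function is exactly $c_x$, and then a \emph{re-sparsification} step, computed by a polynomial number of minimum-cut/min-cost-flow computations and exploiting that only the cuts separating the poles and the demand-relevant groups of terminals are ever queried, replaces this graph by an equivalent one of polynomial size. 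Since $\T$ has $O(n)$ nodes and each step is polynomial, the whole algorithm runs in polynomial time.

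The main obstacle is exactly this re-sparsification: one must prove that the class of ``cut functions of polynomial-size directed networks'' is closed under the join-convolution and the forget operation, and, crucially, that the size of the representing network does not grow uncontrollably as one moves up the $O(n)$ levels of $\T$. This is where submodularity of the DP values is indispensable: it is what guarantees a small representing network exists at all and what makes the composed network contractible back to polynomial size. A secondary technical point requiring care is the bookkeeping for demands with an endpoint outside $G_x$ (and for internal demands that may profitably route through the poles): each such demand contributes a pole-choice that must be threaded consistently through the state, and one has to check that this is handled simply by marking the right vertices of $D_x$, rather than by enlarging the state beyond what the network can encode.
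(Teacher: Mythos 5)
Your high-level strategy coincides with the paper's: dynamic programming on the series-parallel decomposition, with the exponentially many states (which escaping terminals attach to which pole, plus one bit for whether the poles are internally connected) encoded as the minimum-cut function of an auxiliary directed network that is composed bottom-up. But the step you defer --- showing that this representation is closed under series and parallel composition while staying polynomial in size --- is not a technicality to be discharged by a generic ``re-sparsification''; it is the entire content of the proof, and your plan for it rests on a misconception. Submodularity of the DP values does \emph{not} by itself guarantee a small representing network: directed cut functions form a proper subclass of submodular functions, so you cannot argue ``the function is submodular, hence a polynomial-size network exists, hence I can recover it by min-cut queries.'' (In the paper, the implication runs the other way: representability by a network is proved directly and submodularity is deduced from it as an explanatory remark that the algorithm never uses.) It is also unclear how polynomially many min-cut queries could certify that a candidate sparsified network agrees with the composed one on all $2^{|A_i|}$ relevant cuts.

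The paper sidesteps all of this with explicit gadgets whose size is additive. For a parallel connection, $D_i$ is obtained by identifying the sources and the sinks of $D_{j_1}$ and $D_{j_2}$ and adding infinite-length arcs $\overrightarrow{uv}$, $\overrightarrow{vu}$ for each demand pair that becomes internal at this step. For a series connection, $D_{j_1}$ and $D_{j_2}$ are joined by a zero-length arc, and exactly \emph{two} new vertices $\gamma_1,\gamma_2$ are added, carrying a constant number of arcs whose lengths are previously computed scalars such as $a_{j_1}$, $a_{j_2}$, and $f_{j_1}(A_{j_1}\setminus T_1)+f_{j_2}(T_2)$; these encode the three cases of whether the middle vertex's component exits toward $x_i$, toward $y_i$, or toward neither. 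Each composition therefore adds $O(1)$ vertices, so the network has linear size automatically and no sparsification is ever needed; correctness of each gadget is a finite case analysis on which side of the cut $\gamma_1,\gamma_2$ lie. Without these constructions (or a substitute for them) your argument does not close; with them, the ``main obstacle'' you identify simply disappears.
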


\iffull
A series-parallel graph can be built form elementary blocks using two
operations: parallel connection and series connection.
The algorithm of Theorem~\ref{th:mainalg} uses dynamic programming on
the construction of the series-parallel graph. For each subgraph
arising in the construction, we find a minimum weight forest that
connects some of the terminal pairs, connects a subset of the
terminals to the ``left exit point'' of the subgraph, and connects the
remaining terminals to the ``right exit point'' of the
subgraph. The minimum weight depends on the subset of terminals connected to the left
exit point, thus it seems that we need to determine exponential many
values (one for each subset). Fortunately, it turns out that
the minimum weight is a submodular function of the
subset. Furthermore, we show that this function can be represented by
the cut function of a directed graph and this directed graph can be
easily constructed if the directed graphs corresponding to the
building blocks of the series-parallel subgraph are available. Thus,
following the construction of the series-parallel graph, we can build
all these directed graphs and determine the value of the optimum
solution by the computation of a minimum cut.
\fi

Surprisingly, it turns out that the problem becomes NP-hard on graphs
of treewidth at most 3~\cite{Gassner2009}. For completeness, in
Section~\ref{sec:hardness-treewidth-3} we give a (different) NP-hardness
proof, which highlights how submodularity (and hence the approach of
Theorem~\ref{th:mainalg}) breaks if treewidth is 3. There exist a few known
problems that are polynomial-time solvable for trees but NP-hard for
graphs of treewidth 2 \cite{MR2000i:68160,
  marx-planar-edge-prext,marx-sum-edge-hard,MR862895}, but to our
knowledge, this is the first natural example where there is a
complexity difference between treewidth 2 and 3.

Not being able to solve Steiner forest optimally on
bounded-treewidth graphs is not an unavoidable obstacle for obtaining a PTAS on planar graphs: the
technique of Klein, and Demaine, Hajiaghayi and Mohar also can be
applied when we have a PTAS for 
graphs of bounded treewidth. In
Section~\ref{sec:ptas-bound-treew}, we demonstrate such a PTAS:

\begin{theorem}\label{th:boundedtwptas}
For every fixed $w\ge 1$ and $\epsilon>0$, there is a
polynomial-time $(1+\epsilon)$-approximation algorithm for
\prob{Steiner Forest} on graphs with treewidth at most $w$.
\end{theorem}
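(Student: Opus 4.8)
The plan is to solve the problem by dynamic programming over a width-$w$ tree decomposition of the input graph (computable in polynomial time for fixed $w$), but to tame the one source of exponential blow-up --- the way a Steiner forest partitions the terminal set into the vertex sets of its components --- by running the dynamic program only over a polynomially large family of ``canonical'' candidate partitions, and then proving that one member of this family already yields a $(1+\eps)$-approximate forest.

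The first step is a structural normalization. Let $P_0$ be the partition of the terminal set into the transitive-closure classes of the demand pairs; a subgraph is a feasible Steiner forest exactly when the partition it induces on the terminals is a coarsening of $P_0$. Writing $\mathrm{st}(Q)$ for the minimum cost of a Steiner tree spanning a terminal set $Q$, one obtains the identity $\OPT = \min_{P \succeq P_0} \sum_{Q \in P} \mathrm{st}(Q)$: the ``$\le$'' direction takes the union of minimum Steiner trees of the blocks, and the ``$\ge$'' direction reads the partition off the components of an optimal forest. The payoff of this identity is that an optimal solution is essentially specified by the (unknown) optimal partition $P^*$ of the terminals, and that partition enjoys a strong locality property --- every terminal of a block $Q$ of $P^*$ lies within distance $\mathrm{st}(Q)$ of every vertex of the optimal tree spanning $Q$ --- which is what makes $P^*$ approximable by guessing.

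Next I would construct the canonical family $\mathcal{F}$ of partitions. Running the Agrawal--Klein--Ravi $2$-approximation~\cite{AKR91} gives a forest $F_0$ with $\OPT \le \mathrm{len}(F_0) \le 2\,\OPT$, which pins down the relevant window of length scales. A canonical partition is then described succinctly by a set of ``center'' vertices together with, for each terminal, the index of its distance to the centers rounded to a power of $1+\eps$ within that window; terminals carrying the same label are put in one block, and blocks are afterwards merged as forced by $P_0$ (which cannot increase $\sum_Q \mathrm{st}(Q)$). The construction has to be arranged so that the number of resulting partitions is polynomial in the input size (with the degree controlled by $\eps$) while still being rich enough to contain a near-optimal one. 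For each $P \in \mathcal{F}$ I would compute $\mathrm{st}(Q)$ for every block $Q$ using the classical polynomial-time dynamic program for Steiner tree on graphs of treewidth at most $w$ --- the blackbox already exploited in~\cite{BKM07:planar} --- output the union of these trees, and finally return the cheapest forest found over all $P \in \mathcal{F}$. Since $|\mathcal{F}|$ and each Steiner-tree computation are polynomial for fixed $w$ and $\eps$, the whole algorithm runs in polynomial time.

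The heart of the proof, and the step I expect to be the main obstacle, is the structural lemma that some $P \in \mathcal{F}$ achieves $\sum_{Q \in P} \mathrm{st}(Q) \le (1+\eps)\OPT$. The natural route is a charging argument starting from $P^*$: choose centers attached to the optimal trees, and use the locality property above to argue that the rounded-distance labels keep terminals of the same optimal block together while separating terminals of different blocks, apart from merges that can be paid for out of the (necessarily small) costs of the components being merged. The delicate point is to do this while \emph{simultaneously} keeping the number of centers --- hence $|\mathcal{F}|$ --- under a polynomial bound and controlling the total cost of the unavoidable over-merging of the many cheap components, which is where the interplay between the geometric scales and the number of demand pairs has to be handled with care. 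Once this lemma is in place, correctness of the algorithm is immediate and Theorem~\ref{th:boundedtwptas} follows.
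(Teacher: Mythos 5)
There is a genuine gap, and it sits exactly where you flag ``the main obstacle'': no polynomial-size \emph{global} family $\mathcal{F}$ of terminal partitions containing a $(1+\eps)$-approximate one is constructed, and the natural center-based construction cannot produce one. The optimal forest may have $\Omega(n)$ components; capturing each block $Q$ as a group requires $\Theta(1/\eps)$ centers at radius $\eps\cdot\mathrm{st}(Q)$ (this is the paper's Lemma~\ref{lem:treecluster}), so describing the optimal partition needs $\Omega(n/\eps)$ centers in total, and enumerating all such center sets is exponential. Truncating to $O(1/\eps)$ centers globally forces many cheap components to be lumped into one block, and the cost of a Steiner tree spanning the union of $m$ far-apart cheap components is not bounded by the sum of their costs. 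The paper circumvents this by never guessing a global partition: it restricts the partition only \emph{locally}, on the set $A_i$ of active terminals at each bag $B_i$ of the tree decomposition, where at most $w+1$ components of the solution appear, so only $w+1$ groups with $O(w/\eps)$ centers each must be guessed per bag --- a polynomial-size set $\Pi_i$ per bag. The price of this locality is that one then needs a genuinely nontrivial dynamic program (Lemma~\ref{lem:findconforming}) that finds the minimum-cost forest whose induced partition at \emph{every} bag lies in the prescribed set; your proposal replaces this DP with enumeration plus black-box Steiner tree computations, which is precisely the step that does not survive.

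A second, related gap is in the charging argument you sketch for the over-merging cost. In the paper, when a terminal is misclassified the algorithm joins two components $K_j<K_{j^*}$ at cost $\eps\cdot\ell(K_j)$ and charges this to the ordered pair $(K_j,K_{j^*})$; the bound ``total increase $\le w\eps\cdot\ell(F)$'' rests on the fact that $K_j$ can be the first element of at most $w$ such pairs, because both components must meet the topmost bag containing $K_j$ and that bag has only $w+1$ vertices. This is a combinatorial property of the tree decomposition with no analogue in your global setting: a single cheap component can be within distance $\eps\cdot\ell(K_j)$ of arbitrarily many other components, so the merges it triggers are not bounded, and the total over-merging cost is not controlled by $O(\eps)\cdot\OPT$. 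Your reduction $\OPT=\min_{P\succeq P_0}\sum_{Q\in P}\mathrm{st}(Q)$ and the intuition ``misclassification implies two components are close, so merge them cheaply'' are both sound and do match the paper's Lemma~\ref{lem:approx} in spirit, but without the per-bag localization and the DP over conforming solutions the argument does not close.
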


Note that the exponent of the polynomial in
Theorem~\ref{th:boundedtwptas} depends on both $\epsilon$ and $w$; it
remains an interesting question for future research whether this
dependence can be removed.

The main idea of the PTAS of Theorem~\ref{th:boundedtwptas} is to
reduce the set of partitions considered in the dynamic programming
algorithm to a polynomially bounded subset, in a way that an
$(1+\epsilon)$-approximate solution using only these partitions is
guaranteed to exist. The implementation of this idea consists of three
components. First, we have to define which partitions belong to the
polynomially bounded subset. These partitions are defined by choosing
a bounded number of center points and a radius for each center. A
terminal is classified into a class of the partition based which
center points cover it. Second, we need an algorithm that finds the
best solution using only the allowed subset of partitions. This can be
done following the standard dynamic programming paradigm, but the
technical details are somewhat tedious. Third, we have to argue that
there is a $(1+\epsilon)$-approximate solution using only the allowed
partitions. We show this by proving that if there is a solution that
uses partitions that are not allowed, then it can be modified,
incurring only a small increase in the cost, such that it uses only
allowed partitions. The main argument here is that for each partition
appearing in the solution, we try to select suitable center points. If
these center points do not generate the required partition, then this
means that a terminal is misclassified, which is only possible if the
terminal is close to a center point. In this case, we observe that two
components of the solution are close to each other and we can join
them with only a small increase of the cost. The crucial point of the
proof is a delicate charging argument, making use of the structure of
bounded-treewidth graphs, which shows that repeated applications of
this step results in a total increase that is not too large.

\ifabstract
\subsection{Organization}
In the next section we define the basic notions needed throughout the paper.
Section~\ref{sec:break} introduces the prize-collecting clustering algorithm,
and analyzes it.
Next in Section~\ref{sec:main-ptas} we see how this is mixed with
the other techniques to yield the PTAS for bounded-genus Steiner forest.

All the missing proofs appear in the appendix.
In particular, we describe in Appendix~\ref{sec:spanner} how the spanner is
constructed.
The PTAS for the bounded-treewidth graphs and the polynomial-time algorithm
for series-parallel graphs appear in Appendices~\ref{sec:ptas-bound-treew} and
\ref{sec:algor-seri-parall}, respectively, before proving the NP-hardness of the problem for graphs of treewidth at least three in Appendix~\ref{sec:hardness-treewidth-3}.

\fi
\section{Basic definitions}\label{sec:defs}
Let $G(V,E)$ be a graph. As is customary, let $\delta(V')$ denote
the set of edges having one endpoint in a subset $V'\subseteq V$ of
vertices. For a subset of vertices $V'\subseteq V$, the subgraph of
$G$ induced by $V'$ is denoted by $G[V']$. With slight abuse of
notation, we sometimes use the edge set to refer to the graph
itself. Hence, the above-mentioned subgraph may also be referred to
by $E[V']$ for simplicity. We denote the length of a shortest
$x$-to-$y$ path in $G$ as $\dist_G(x, y)$. For an edge set $E$, we
denote by $\ell(E):=\sum_{e\in E}c_e$ the total length of edges in
$E$. 

A collection $\eS$ is said to be \emph{laminar} if and only if
for any two sets $C_1, C_2\in\eS$, we have $C_1\subseteq C_2$,
 $C_2\subseteq C_1$, or $C_1\cap C_2=\emptyset$.
Suppose $\C$ is a partition of a ground set $V$. Then, $\C(v)$
denotes for each $v\in V$ the set $C\in\C$ that contains $v$.

Given an edge $e=(u,v)$ in a graph $G$, the \emph{contraction} of
$e$ in $G$ denoted by $G/e$ is the result of unifying vertices $v$
and $w$ in $G$, and removing all loops and multiple edges except the
shortest edge. \iffull More formally, the contracted graph $G/e$ is
formed by the replacement of $u$ and $v$ with a
 single vertex such that edges incident to the new vertex are the edges other than $e$ that
  were incident with $u$ or $v$. To obtain a simple graph, we first remove all self-loops
  in the resulting graph.
In case of multiple edges, we only keep the shortest edge and remove
all the rest.\fi The contraction $G/E'$ is defined as the result of
iteratively contracting all the edges of $E'$ in $G$, i.e., $G/E' :=
G/e_1/e_2/\dots/e_k$ if $E'=\{e_1, e_2, \dots, e_k\}$. Clearly, the
planarity of $G$ is preserved after the contraction. Similarly,
contracting edges does not increase the cost of an optimal Steiner
forest.

   The boundary of a face of a planar embedded graph is the set of edges adjacent to the face; it does not always form a simple cycle. The boundary
$\partial H$ of a planar embedded graph H is the set of edges bounding the infinite face. An edge is strictly enclosed by the boundary of $H$ if the edge belongs to $H$ but not to $\partial H$.

\iffull Now we define the basic notion of treewidth, as introduced
by Robertson and Seymour~\cite{RS86}.  To define this notion, we
consider representing a graph by a tree structure, called a tree
decomposition. More precisely, a \emph{tree decomposition} of a
graph $G(V,E)$ is a pair $(T,\B)$ in which $T(I,F)$ is a tree and
$\B=\{B_i\:|\:i\in I\}$ is a family of subsets of $V(G)$ such that
1) $\bigcup_{i\in I}B_i = V$; 2) for each edge $e=(u,v)\in E$, there
exists an $i\in I$ such that both $u$ and $v$ belong to $B_i$; and
3) for every $v\in V$, the set of nodes $\{i\in I\:|\:v\in B_i\}$
forms a connected subtree of $T$.

To distinguish between vertices of the original graph $G$ and
vertices of $T$ in the tree decomposition, we call vertices of $T$
\emph{nodes} and their corresponding $B_i$'s bags.  The \emph{width}
of the tree decomposition is the maximum size of a bag in $\B$ minus
$1$.  The \emph{treewidth} of a graph $G$, denoted $\tw(G)$, is the
minimum width over all possible tree decompositions of $G$.

For algorithmic purposes, it is convenient to define a restricted
form of tree decomposition. We say that a tree decomposition
$(T,\B)$  is {\em nice} if the tree $T$ is a rooted tree such that
for every $i\in I$ either
\begin{enumerate}
\item $i$ has no children ($i$ is a {\em leaf node}),
\item $i$ has exactly two children $i_1$, $i_2$ and
  $B_i=B_{i_1}=B_{i_2}$ holds ($i$ is a {\em join node}),
\item $i$ has a single child $i'$ and $B_i=B_{i'}\cup\{v\}$ for
  some $v\in V$ ($i$ is an {\em introduce node}), or
\item $i$ has a single child $i'$ and $B_i=B_{i'}\setminus \{v\}$ for
  some $v\in V$ ($i$ is a {\em forget node}).
\end{enumerate}
It is well-known that every tree decomposition can be transformed
into a nice tree decomposition of the same width in polynomial time.
Furthermore, we can assume that the root bag contains only a single vertex.

We will use the following lemma to obtain a nice tree decomposition
with some further properties (a related trick was used in
\cite{marx-gt04}, the proof is similar):
\begin{lemma}\label{lem:nicer}
Let $G$ be a graph having no adjacent degree 1 vertices. $G$ has a
nice tree decomposition of polynomial size having the following two
additional properties:
\begin{enumerate}
\item No introduce node introduces a degree 1 vertex.
\item The vertices in a join node have degree greater than 1.
\end{enumerate}
\end{lemma}
\begin{proof}
  Consider a nice tree decomposition of graph $G$.  First, if $v$ is a
  vertex of degree 1, then we can assume that $v$ appears only in one
  bag: if $w$ is the unique neighbor of $v$, then it is sufficient
  that $v$ appears in any one of the bags that contain $w$. Let
  $B_v=\{v,x_1,\dots,x_t\}$ be this bag where $x_1=w$. We modify the tree
  decomposition as follows. We replace $B_v$ with $B'_v=B_v\setminus
  \{v\}$, insert a bag $B''_v=B_v\setminus \{v\}$ between $B'_v$ and
  its parent, and create a new bag $B^t=B_v\setminus \{v\}$ that is
  the other child of $B''_v$ (thus $B''_v$ is a join node). For
  $i=1,\dots,t-1$, let $B^i=\{x_1,\dots,x_i\}$, and let $B^{i}$ be the child of $B^{i+1}$.
  Finally, let $B_w=\{w,v\}$ be the child of $B^1$ and let $B=\{v\}$
  be the child of $B_w$. Observe that $B^i$ ($2\le i \le
  t$), $B_w$ are introduce nodes, $B^1$ is a forget node, and $B$ is
  a leaf node. This operation ensures that vertex $v$ appears only in a leaf
  node. It is clear that after repeating this operation for every
  vertex of degree 1, the two required properties will hold.
\end{proof}

We also need a basic notion of embedding; see, e.g., \cite{RS94,
CM05}. In this paper, an \emph{embedding} refers to a \emph{$2$-cell
embedding}, i.e., a drawing of the vertices and edges of the graph
as points and arcs in a surface such that every face (connected
component obtained after removing edges and vertices of the embedded
graph) is homeomorphic to an open disk. We use basic terminology and
notions about embeddings as introduced in \cite{MT01}.  We only
consider compact surfaces without boundary.  Occasionally, we refer
to embeddings in the plane, when we actually mean embeddings in the
$2$-sphere.  If $S$ is a surface, then for a graph $G$ that is
($2$-cell) embedded in $S$ with $f$ facial walks, the number
$g=2-|V(G)|+|E(G)|-f$ is independent of $G$ and is called the
\emph{Euler genus} of $S$. The Euler genus coincides with the
crosscap number if $S$ is non-orientable, and equals twice the usual
genus if the surface $S$ is orientable.
\fi

\ifabstract  Finally we often use well-known concepts of {\em
treewidth}, {\em nice tree-decomposition} and {\em 2-cell emedding}
in this paper. Due to lack of space, we refer the reader to
Appendix~\ref{app:ContDef} to see the exact definitions of these
terms. \fi

\section{Prize-collecting clustering}\label{sec:break}
In this section, we describe an algorithm \algo{PC-Clustering} that is
used to prove Theorem~\ref{thm:break}. The algorithm as well as its
analysis bears similarities to the primal-dual method due to Agrawal,
Klein and Ravi~\cite{AKR95} and Goemans and Williamson~\cite{GW95}. It
uses a technique that we call \emph{prize-collecting clustering} and
our analysis strengthens the previous approaches by proving some local
guarantees (instead of the global guarantee provided in previous
algorithms). We build a forest $F_2$ each of whose components
correspond to one $T_i$ sought in Theorem~\ref{thm:break}. Along the
way, we also come up with a vector $y$ satisfying the sets of
constraints~\eqref{eqn:lp:1}-\eqref{eqn:lp:3} below. During the process, we
maintain a vector $y$ satisfying all these constraints, and at the
end, it will be true that all the constraints~\eqref{eqn:lp:2} hold
with equality.  The analysis takes advantage of the connection between
$F_2$ and $y$.

We start with a $2$-approximate\footnote{Such a solution can be  found via Goemans-Williamson's Steiner forest algorithm, for instance.} solution $F^\ast$ satisfying all the demands in $\DD$;  the cost of $F^\ast$ is at most $2\OPT$.
The forest $F^\ast$ consists of tree components $T^\ast_i$.
In the following, we connect some of these components to make the trees $T_i$.
It is easy to see this construction  guarantees the first two conditions of Theorem~\ref{thm:break}.
We work on a graph $G(V,E)$ formed from $G_{in}$ by contracting each tree component of $F^\ast$.
A potential $\phi_v$ is associated with each vertex $v$ of $G$,
which is $\frac{1}{\eps}$ times the cost of the tree component corresponding to $v$ in case $v$ is the contraction of a tree component, and zero otherwise.

\begin{lp}
& \sum_{S: e\in\delta(S)}\:\sum_{v\in S}y_{S,v} \leq c_e &\hspace{3cm}&&\forall e\in E, \label{eqn:lp:1}\\
& \sum_{S\ni v}y_{S,v} \leq \phi_v &&&\forall v\in V,  \label{eqn:lp:2}\\
& y_{S,v} \geq 0 &&&\forall v\in S\subseteq V. \label{eqn:lp:3}
\end{lp}

These constraints are very similar to the dual LP for the \prob{prize-collecting Steiner tree} problem when $\phi_v$ are thought of as penalty values corresponding to the vertices.
In the standard linear program for the prize-collecting Steiner tree problem,
there is a special \emph{root} vertex to which all the terminals are  to be connected. Then, no set containing the root appears in the formulation.

The solution is built up in two stages.
First we perform an \emph{unrooted growth} to find a forest $F_1$ and a corresponding $y$ vector.
In the second stage, we \emph{prune} some of the edges of $F_1$ to get another forest $F_2$.
Uncontracting the trees $T^\ast_i$  turns $F_2$ into the Steiner trees $T_i$ in the statement of Theorem~\ref{thm:break}.
Below we describe the two phases of Algorithm~\ref{alg:break} (\algo{PC-Clustering}).

\paragraph{Growth}
 We begin with a zero vector $y$, and an empty set $F_1$.
 We maintain a partition $\C$ of vertices $V$ into clusters; it initially consists of singleton sets.
 Each cluster is either \emph{active} or \emph{inactive};
  the cluster $C\in\C$ is \emph{active} if and only if $\sum_{C'\subseteq C}\sum_{v\in C'}y_{C',v} < \sum_{v\in C} \phi_v$.
 A vertex $v$ is \emph{live} if and only if $\sum_{C\ni v} y_{C,v} < \phi_v$.
 Equivalently, a cluster $C\in\C$ is active if and only if there is a live vertex $v\in C$.
 We simultaneously \emph{grow} all the active clusters by $\eta$.
 In particular, if there are $\kappa(C)>0$ live vertices in an active cluster $C$, we increase $y_{C,v}$ by $\eta/\kappa(C)$ for each live vertex $v\in C$.
 Hence, $y_{C}$ defined as $\sum_{v\in C}y_{C,v}$ is increased by $\eta$ for an active cluster $C$.
 We pick the largest value for $\eta$ that does not violate any of the constraints in \eqref{eqn:lp:1} or \eqref{eqn:lp:2}.
 Obviously, $\eta$ is finite in each iteration because the values of these variables cannot be larger than $\sum_v\phi_v$.
 Hence, at least one such constraint goes tight after the growth step.
 If this happens for an edge constraint for $e=(u,v)$,
 then there are two clusters $C_u\ni u$ and $C_v\ni v$ in $\C$;
 at least one of the two is growing.
 We merge the two clusters into $C = C_u \cup C_v$ by adding the edge $e$ to $F_1$, remove the old clusters and add the new one to $\C$.
 Nothing needs to be done if a constraint \eqref{eqn:lp:2} becomes tight.
The number of iterations is at most $2|V|$, because at each event either a vertex dies, or the size of $\C$ decreases.

We can think of the growth stage as a process that
colors portions of the edges of the graph.
This gives a  better intuition to the algorithm,
and makes several of the following lemmas intuitively simple.
Consider a topological structure in which vertices of the graph
are represented by points, and each edge is a curve connecting its endpoints whose length is equal to the weight of the edge.
Suppose a cluster $C$ is growing by an amount $\eta$.
This is distributed among all the live vertices $v\in C$,
where $y_{C,v}$ is increased by  $\eta':=\eta/\kappa(C)$.
As a result, we color by color $v$ a connected portion
with length $\eta'$ of all the edges in $\delta(C)$.
Finally, each edge $e$ gets exactly $\sum_{C: e\in\delta(C)}y_{C,v}$ units
of color $v$.  We can perform a clean-up process, such that all the
portions of color $v$ are consecutive on an edge.\footnote{
We can do without the clean-up if we perform the coloring in a lazy manner.
That is, we do not do the actual color assignment until the edge goes tight or the algorithm terminates.  At this point, we go about putting colors on the edges, and we make sure the color corresponding to any pair $(S,v)$ forms a consecutive portition of the edge.  This property is not needed as part of our algorithm, though, and is merely for the sake of having a nice coloring which is of independent interest.}
Hence, as a cluster expands, it colors its boundary by the amount of growth.
At the time when two clusters merge, their  colors barely touch each other.
At each point in time, the colors associated with the vertices of a cluster form a connected region.

\paragraph{Pruning}
Let $\eS$ denote the set of all clusters formed during the execution
of the growth step.  It can be easily observed that the clusters $\eS$
are laminar and the maximal clusters are the clusters of $\C$. 
In addition, notice that $F_1[C]$ is connected for each $C\in\eS$.
 
 Let
$\B\subseteq\eS$ be the set of all such clusters that are tight,
namely, for each $S\in\B$, we have $\sum_{S'\subseteq S}\sum_{v\in
  S'}y_{S',v} = \sum_{v\in S}\phi_v$.  In the pruning stage, we
iteratively remove some edges from $F_1$ to obtain $F_2$.  More
specifically, we first initialize $F_2$ with $F_1$.  Then, as long as
there is a cluster $S\in\B$ 
such that $F_2\cap\delta(S) = \{e\}$, we remove the edge $e$ from
$F_2$. 

A cluster $C$ is called a \emph{pruned cluster} if it is pruned in the
second stage in which case, $\delta(C)\cap F_2 = \emptyset$.  Hence, a
pruned cluster cannot have non-empty and proper intersection with a
connected component of $F_2$.

\begin{algorithm}
\caption{\algo{PC-Clustering}\label{alg:break}}
\textbf{Input:} planar graph $G_{in}(V_{in}, E_{in})$, and set of demands $\DD$.\\
\textbf{Output:} set of trees $T_i$ with associated $\DD_i$.
\begin{algorithmic}[1]
 \STATE Use the algorithm of Goemans and Williamson~\cite{GW95} to find a $2$-approximate Steiner forest $F^\ast$ of $\DD$,  consisting of tree components $T^\ast_1, \dots, T^\ast_k$.
 \STATE Contract each tree $T^\ast_i$ to build a new graph $G(V,E)$.
 \STATE For any $v\in V$, let $\phi_v$ be $\frac{1}{\eps}$ times the cost of the tree $T^\ast_i$ corresponding to $v$, and zero if there is no such tree.
 \STATE Let $F_1\gets\emptyset$.
 \STATE Let $y_{S,v}\gets 0$ for any $v\in S\subseteq V$.
 \STATE Let $\eS\gets\C\gets\left\{ \{v\}: v\in V\right\}$.
\WHILE {there is a live vertex}
 \STATE Let $\eta$ be the largest possible value such that simultaneously increasing  $y_{C}$ by $\eta$ for all active clusters $C$ does not violate Constraints~\eqref{eqn:lp:1}-\eqref{eqn:lp:3}.
  \STATE Let $y_{\C(v),v}\gets y_{\C(v),v}+\frac{\eta}{\kappa(\C(v))}$ for all live vertices $v$.
  \IF {$\exists e\in E$ that is tight and connects two clusters}
   \STATE Pick one such edge $e=(u,v)$.
   \STATE Let $F_1\gets F_1\cup \{e\}$.
   \STATE Let $C \gets \C(u)\cup\C(v)$.
   \STATE Let $\C\gets \C\cup \{C\} \setminus \{\C(u), \C(v)\}$.
   \STATE Let $\eS \gets \eS \cup \{C\}$.
  \ENDIF
\ENDWHILE
\STATE Let $F_2 \gets F_1$.
\STATE Let $\B$ be the set of all clusters $S\in\eS$ such that
$\sum_{v\in S}y_{S,v} = \sum_{v\in S}\phi_v$.
\WHILE {$\exists S\in\B$ such that $F_2\cap\delta(S)=\{e\}$ for an edge $e$}
  \STATE Let $F_2 \gets F_2 \setminus \{e\}$.
\ENDWHILE
\STATE Construct $F$ from $F_2$ by uncontracting all the trees $T^\ast_i$.
\STATE Let $F$ consist of tree components $T_i$.
\STATE Output the set of trees $\{T_i\}$, along with $\DD_i := \{ (s,t)\in\DD: s,t\in V(T_i)\}$.
\end{algorithmic}
\end{algorithm}

We first bound the cost of the forest $F_2$.
The following lemma is similar to the analysis of the algorithm in~\cite{GW95}.
However, we do not have a primal LP to give a bound on the dual.
Rather, the upper bound for the cost is the sum of all the potential values $\sum_v \phi_v$.
In addition, we bound the cost of a forest $F_2$ that may have
more than one connected component, whereas the prize-collecting Steiner tree
algorithm of~\cite{GW95} finds a connected graph at the end.
\begin{lemma}\label{lem:break:cost}
 The cost of $F_2$ is at most $2\sum_{v\in V} \phi_v$.
\end{lemma}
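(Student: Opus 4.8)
The plan is to mimic the standard primal–dual accounting of Goemans–Williamson, but since there is no primal LP to appeal to, I would instead charge the cost of $F_2$ directly against the total potential $\sum_{v\in V}\phi_v$ via the dual variables $y$. The key identity to establish is
\[
\ell(F_2) \;=\; \sum_{e\in F_2} c_e \;=\; \sum_{e\in F_2}\;\sum_{S:\,e\in\delta(S)}\;\sum_{v\in S} y_{S,v},
\]
which holds because every edge that survives into $F_2\subseteq F_1$ was added when its constraint~\eqref{eqn:lp:1} went tight, and — this is the point where I would have to be a little careful — the left-hand side of~\eqref{eqn:lp:1} for such an edge never changes again after the edge is added: once $e\in F_1$ its two endpoints lie in a common cluster, and clusters that contain $e$ in their boundary cease to exist (they have merged), so no further growth is deposited on $e$. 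Swapping the order of summation, this equals $\sum_{S}\bigl(\sum_{v\in S} y_{S,v}\bigr)\cdot |\delta(S)\cap F_2| = \sum_{S} y_S\cdot d_S$, where $y_S:=\sum_{v\in S}y_{S,v}$ and $d_S:=|\delta(S)\cap F_2|$.

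Next I would run the usual inductive argument over the "epochs" of the growth phase: partition $[0,\infty)$ into maximal time intervals during which the set of active clusters is fixed, and show that within each epoch of length $\eta$ the increment to $\sum_S y_S d_S$, namely $\eta\sum_{C\ \mathrm{active}} d_C$, is at most $2\eta$ times the number of active clusters in that epoch. This is the familiar "average degree at most $2$" fact: contract every cluster (active or inactive) existing at that moment to a point; in the resulting forest, the inactive clusters are exactly the pruned/dead clusters, and a dead cluster is never a leaf of $F_2$ (if it had degree one in $F_2$ the pruning loop would have removed that edge, since every maximal dead cluster — indeed every tight cluster $S\in\B$ — gets pruned down to degree $0$). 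Hence all leaves of the contracted forest are active clusters, so $\sum_{C\ \mathrm{active}} d_C \le 2\cdot(\#\text{active clusters})$. Summing the per-epoch bound over all epochs gives $\sum_S y_S d_S \le 2\sum_S y_S = 2\sum_{C}\sum_{v\in C} y_{C,v}$. Finally, reorganizing the last sum by the vertex $v$ rather than the cluster $C$, $\sum_{C}\sum_{v\in C} y_{C,v} = \sum_{v\in V}\sum_{C\ni v} y_{C,v} \le \sum_{v\in V}\phi_v$ by constraint~\eqref{eqn:lp:2}, which yields $\ell(F_2)\le 2\sum_{v\in V}\phi_v$.

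The main obstacle I anticipate is the leaf argument, i.e.\ verifying rigorously that in the contracted snapshot forest at any time during an epoch, no inactive cluster is a leaf of $F_2$. One has to connect a cluster being inactive at time $t$ with that cluster being tight at termination and hence with it (or a maximal tight cluster containing it) being eliminated by the pruning loop; the clean statement is that if $C$ is inactive at time $t$ then the maximal cluster of $\B$ containing $C$ — which exists since an inactive cluster is tight, and tightness is preserved as clusters merge only if the merged cluster is also tight, so one should instead argue via the laminar structure of $\eS$ — ends up pruned, so $\delta(C)\cap F_2$ is contained in the edges inside that maximal pruned cluster and therefore $C$ is not incident to any edge of $F_2$ leaving the contracted snapshot, let alone being a degree-one vertex. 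I would state this as a short separate claim before the epoch computation and prove it by reverse induction on the pruning process, exactly as in Goemans–Williamson, the only twist being that here the "tight" role is played by $\sum_{v\in S}y_{S,v}=\sum_{v\in S}\phi_v$ rather than by running out of a single root's budget.
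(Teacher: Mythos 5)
Your proposal is correct and follows essentially the same route as the paper's proof: apportion each edge cost $c_e$ of $F_2$ over the tight constraint~\eqref{eqn:lp:1}, decompose by epochs, contract the clusters of the current snapshot into a forest, and use the facts that every leaf is active (a tight degree-one cluster would have been pruned) and every dead cluster has degree at least two to bound the active degree sum by $2|\C_{act}|$. The only cosmetic difference is that you finish with the inequality form of~\eqref{eqn:lp:2} where the paper invokes that it holds with equality at termination; both suffice.
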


\iffull
\begin{proof}
 The strategy is to prove that the cost of this forest is at most $2\sum_{v\in S\subseteq V}y_{S,v} =2\sum_{v\in V}\phi_v$.
The equality follows from Equation~\eqref{eqn:lp:2}---it holds with equality at the end of the algorithm.
Recall that the growth phase has several events corresponding to an edge or set constraint going tight.
We first break apart $y$ variables by epoch.  Let $t_j$ be the
  time at which the $j^{\rm th}$ event point occurs in the growth phase ($0=t_0\leq t_1 \leq t_2 \leq \cdots$), so the $j^{\rm th}$ epoch is the interval of time from $t_{j-1}$ to $t_j$.  For each cluster $C$, let $y_{C}^{(j)}$  be the amount by which $y_C:=\sum_{v\in C}y_{C,v}$ grew during epoch $j$, which is  $t_j-t_{j-1}$ if it was active during this epoch, and zero otherwise.  Thus, $y_C = \sum_j y_{C}^{(j)}$.  Because each edge $e$ of $F_2$
  was added at some point by the growth stage when its edge packing constraint \eqref{eqn:lp:1} became tight, we can exactly apportion the cost
  $c_e$ amongst the collection of clusters $\{C : e\in \delta(C)\}$ whose
  variables ``pay for'' the edge, and can divide this up further
  by epoch.  In other words, $c_e = \sum_j \sum_{C:e\in \delta(C)}
  y_{C}^{(j)}$.  We will now prove that the total edge cost from $F_2$
  that is apportioned to epoch $j$ is at most $2 \sum_{C} y_{C}^{(j)}$.  In other words, during each epoch,
  the total rate at which edges of $F_2$ are paid for by all active
  clusters is at most twice the number of active clusters.
  Summing over the epochs yields the desired conclusion.

  We now analyze an arbitrary epoch $j$.  Let $\C_j$ denote the set
  of clusters that existed during epoch $j$.
Consider the graph $F_2$, and then collapse each cluster
  $C \in \C_j$ into a supernode.  Call the resulting graph $H$.
Although the nodes of
  $H$ are identified with clusters in $\C_j$, we will continue to refer
  to them as clusters, in order to to avoid confusion with the nodes of
  the original graph.  Some of the clusters are active and some may be
  inactive.  Let us denote the active and inactive clusters in $\C_j$
  by $\C_{act}$ and $\C_{dead}$, respectively.
  The edges of $F_2$ that are being partially paid for during epoch $j$
  are exactly those edges of $H$ that are incident to an active cluster,
  and the total amount of these edges that is paid off during epoch
  $j$ is $(t_j-t_{j-1}) \sum_{C \in \C_{act}} \deg_H(C)$.
  Since every active cluster grows by exactly $t_j-t_{j-1}$ in epoch
  $j$, we have $\sum_{C} y_{C}^{(j)} \geq \sum_{C \in\C_j}y_{C}^{(j)} = (t_j-t_{j-1}) |\C_{act}|$.   Thus, it suffices to show that $\sum_{C
    \in \C_{act}} \deg_H(C) \leq 2 |\C_{act}|$.

First we
  must make some simple observations about $H$.  Since $F_2$ is a subset of the edges in $F_1$, and each cluster represents a
  disjoint induced connected subtree of $F_1$, the contraction to $H$ introduces  no cycles.  Thus, $H$ is a forest.
  All the leaves of $H$ must
  be alive, because otherwise the corresponding cluster $C$ would be
  in $\B$ and  hence would have been pruned away.

  With this information about $H$, it is easy to bound $\sum_{C \in
    \C_{act}} \deg_H(C)$.
  The total degree in $H$ is at most $2(|\C_{act}|+|\C_{dead}|)$.
  Noticing that the degree of dead clusters is at least two,
  we get $\sum_{C\in\C_{act}}\deg_H(C) \leq 2(|\C_{act}|+|\C_{dead}|) - 2|\C_{dead}| = 2|\C_{act}|$ as desired.
\end{proof}
\fi

The following lemma gives a sufficient condition for two vertices that end up in the same component of $F_2$.
This is a corollary of our pruning rule which has a major difference from other pruning rules.
Unlike the previous work,
we do not prune the entire subgraph; rather, we only remove some edges, increasing the number of connected components.
\begin{lemma}\label{lem:conn}
 Two vertices $u$ and $v$ of $V$ are connected via $F_2$
if there exist sets $S, S'$ both containing $u, v$ such that
$y_{S,v} > 0$ and $y_{S',u}>0$.
\end{lemma}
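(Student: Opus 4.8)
The plan is to show that $u$ and $v$, which are already joined by a path in $F_1$, are never separated by the pruning. First I would collect the easy structural facts: since $y_{S,v}>0$ the set $S$ was a cluster at some moment of the growth phase, hence $S\in\eS$, and likewise $S'\in\eS$; by laminarity $S$ and $S'$ are comparable, and in any case, because $F_1$ is a forest and $F_1[C]$ is connected for every $C\in\eS$, the unique $u$--$v$ path $P$ of $F_1$ is contained in $F_1[S]$ and in $F_1[S']$. In particular every edge $e$ of $P$ has both of its endpoints in $S$ and in $S'$. It then suffices to prove that no edge of $P$ is deleted during pruning.

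The key ingredient is a ``no leftover potential'' property of tight clusters: if $B\in\B$, then $y_{S'',w}=0$ for every cluster $S''\supsetneq B$ and every $w\in B$. To establish this I would start from $\sum_{S''\subseteq B}\sum_{w\in S''}y_{S'',w}=\sum_{w\in B}\phi_w$ --- the defining property of $\B$, which also follows from the weaker equality used in the algorithm together with \eqref{eqn:lp:2} --- and compare it with the inequality $\sum_{w\in B}\phi_w\ge\sum_{w\in B}\sum_{S''\ni w}y_{S'',w}$ obtained by summing \eqref{eqn:lp:2} over $w\in B$. By laminarity of $\eS$, a cluster $S''$ containing a vertex $w\in B$ satisfies $S''\subseteq B$ or $S''\supsetneq B$; the contribution of the first kind to the right-hand side is exactly $\sum_{w\in B}\phi_w$, so the contribution of the second kind, a sum of nonnegative terms, must be zero, which is the claim.

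For the deletion step I would argue by contradiction: suppose some edge of $P$ is deleted, and let $e=(x,y)$ be the first edge of $P$ to be removed, with witness cluster $B\in\B$, so that at that moment $F_2\cap\delta(B)=\{e\}$; say $x$ is the endpoint of $e$ lying in $B$. Just before this deletion all of $P$ is still in $F_2$, so $P$ is the unique $u$--$v$ path of the current $F_2$; after removing $e$ no edge crosses $\delta(B)$, so the component of $x$ in $F_2\setminus\{e\}$ is contained in $B$, and since $P$ was intact it contains whichever of $u,v$ lies on the $x$-side of $e$ along $P$. On the other hand, since both endpoints of $e$ lie in $S$ (respectively $S'$) we cannot have $S\subseteq B$ (respectively $S'\subseteq B$), so laminarity forces $B\subsetneq S$ (respectively $B\subsetneq S'$). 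If $v$ is on the $x$-side, then $v\in B$ and $B\subsetneq S$, so the tightness claim gives $y_{S,v}=0$, contradicting $y_{S,v}>0$; if $u$ is on the $x$-side, then $u\in B$ and $B\subsetneq S'$, so $y_{S',u}=0$, contradicting $y_{S',u}>0$. Hence no edge of $P$ is pruned and $u,v$ stay connected in $F_2$.

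The step I expect to be the main obstacle is the ``no leftover potential'' claim and its correct use: one must be careful that the right definition of $\B$ is in force (the cumulative one, or the algorithm's local one together with \eqref{eqn:lp:2}), and that the final case split is genuinely driven by which endpoint of the pruned edge lies inside $B$, matched against the two hypotheses $y_{S,v}>0$ and $y_{S',u}>0$. Everything else --- that $F_1$ is a forest with $F_1[C]$ connected, that deleting the unique $\delta(B)$-edge isolates a subset of $B$, and the bookkeeping with the first pruned edge of $P$ --- is routine given what is already proved in the section.
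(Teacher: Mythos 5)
Your proof is correct and follows essentially the same route as the paper's: take the first edge of the $u$--$v$ path of $F_1$ to be pruned, use laminarity to place the witness tight cluster properly inside $S$ (resp.\ $S'$), and derive a contradiction with $y_{S,v}>0$ (resp.\ $y_{S',u}>0$). The only difference is that you spell out the ``no leftover potential'' consequence of tightness (summing constraint~\eqref{eqn:lp:2} over the vertices of the tight cluster), which the paper's proof uses implicitly in the step $\sum_{C'\subseteq C}y_{C',v}=\phi_v$; that extra care is welcome but not a different argument.
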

\begin{proof}
 The growth stage connects $u$ and $v$ since $y_{S,v} > 0$ and $u,v\in S$.
 Consider the path $p$ connecting $u$ and $v$ in $F_1$.
 All the vertices of $p$ are in $S$ and $S'$.
 For the sake of reaching  a contradiction, suppose some edges of $p$ are pruned.
 Let $e$ be the first edge being pruned on the path $p$.
 Thus, there must be a cluster $C\in\B$ cutting $e$; furthermore, $\delta(C)\cap p=\{e\}$, since $e$ is the first edge pruned from $p$.
 The laminarity of the clusters $\eS$ gives $C\subset S,S'$,
since $C$ contains exactly one endpoint of $e$.
 If $C$ contains both or no endpoints of $p$, it cannot cut $p$ at only one edge.
 Thus, $C$ containts exactly one endpoint of $p$, say $v$.
 We then have $\sum_{C'\subseteq C}y_{C',v} = \phi_v$,  because $C$ is tight.
 However, as $C$ is a \emph{proper} subset of $S$,
 this contradicts with $y_{S,v}>0$, proving the supposition is
 false.  
 The case $C$ contains $u$ is symmetric.
\end{proof}

Consider a pair $(v,S)$ with $y_{S,v} > 0$.
If subgraph $G'$ of $G$ has an edge that goes through the cut $(S,\bar{S})$,
at least a portion of length $y_{S,v}$ of $G'$ is colored with the color $v$ due to the set $S$.
Thus, if  $G'$ cuts all the sets $S$ for which $y_{S,v} > 0$,
we can charge part of the length of $G'$ to the potential of $v$.
Later in Lemma~\ref{lem:break:sep}, we are going to use potentials as a lower bound on the optimal solution.
More formally, we say a graph $G'(V,E')$ \emph{exhausts a color} $u$ if and only if $E'\cap\delta(S)\neq\emptyset$ for any $S: y_{S,u}>0$.
The proof of the following corollary is omitted here, however, it is implicit in the proof of Lemma~\ref{lem:break:sep} below.
We do not use this corollary explicity.
Nevertheless, it gives insight into the analysis below.
\begin{corollary}
If a subgraph $H$ of $G$ connects two vertices $u_1,u_2$ from different components of $F_2$ (which are contracted versions of the components in the initial $2$-approximate solution), then $H$ exhausts the color corresponding to at least one of $u_1$ and $u_2$.
\end{corollary}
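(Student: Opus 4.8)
The statement to prove is the Corollary: if a subgraph $H$ of $G$ connects two vertices $u_1, u_2$ lying in different components of $F_2$, then $H$ exhausts the color corresponding to at least one of $u_1, u_2$. The natural approach is to argue by contradiction: suppose $H$ fails to exhaust both colors. By definition of "exhausts," this means there is a set $S_1$ with $y_{S_1,u_1} > 0$ but $E(H) \cap \delta(S_1) = \emptyset$, and likewise a set $S_2$ with $y_{S_2,u_2} > 0$ but $E(H) \cap \delta(S_2) = \emptyset$. Since $H$ connects $u_1$ and $u_2$, there is a $u_1$-to-$u_2$ path $q$ in $H$. The key observation is that $q$ cannot cross $\delta(S_1)$ (else some edge of $q \subseteq H$ would be in $\delta(S_1)$), so $q$ stays entirely inside $S_1$ or entirely outside; since $u_1 \in S_1$, we get $u_2 \in S_1$. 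Symmetrically $u_1 \in S_2$. So both $S_1$ and $S_2$ contain both $u_1$ and $u_2$, with $y_{S_1, u_1} > 0$ and $y_{S_2, u_2} > 0$.

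Now I would invoke Lemma~\ref{lem:conn} with the roles $u = u_1$, $v = u_2$, using $S = S_2$ (which contains both and has $y_{S_2, u_2} > 0$, i.e.\ $y_{S, v} > 0$ in the lemma's notation) and $S' = S_1$ (which contains both and has $y_{S_1, u_1} > 0$, i.e.\ $y_{S', u} > 0$). Lemma~\ref{lem:conn} then concludes that $u_1$ and $u_2$ are connected via $F_2$, contradicting the hypothesis that they lie in different components of $F_2$. That contradiction completes the proof. The uncontraction remark in parentheses in the statement is just a reminder that "components of $F_2$" and "components of the initial 2-approximate solution" correspond; it doesn't affect the argument, which takes place entirely in the contracted graph $G$.

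**The main obstacle.** The genuinely delicate point is the first step — extracting, from the failure of "$H$ exhausts color $u_i$," a single set $S_i$ that is simultaneously (a) not crossed by $H$ and (b) contains $u_i$ with $y_{S_i, u_i} > 0$. This is immediate from the definition as I stated it, so the real work has already been front-loaded into Lemma~\ref{lem:conn}; the Corollary itself is a short deduction. One should double-check that the laminarity of $\eS$ and the pruning analysis inside Lemma~\ref{lem:conn} genuinely apply to \emph{arbitrary} sets $S, S'$ with positive $y$-values (not just maximal clusters), but the proof of that lemma as written does work at this level of generality. A minor sanity check is also needed: the problem is vacuous unless such $S_1, S_2$ exist, but if $y_{S, u_i} = 0$ for all $S$ then $u_i$ has potential zero and the "color $u_i$" is empty, in which case $H$ trivially exhausts it — so in the non-vacuous case the sets exist and the argument goes through.
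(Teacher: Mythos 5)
Your proof is correct and matches the argument the paper itself indicates (it states the corollary's proof is ``implicit in the proof of Lemma~\ref{lem:break:sep}'', where exactly this reasoning appears in contrapositive form: Lemma~\ref{lem:conn} gives a vertex, say $u_1$, with $y_{S,u_1}=0$ for every $S$ containing both $u_1$ and $u_2$, so any $S$ with $y_{S,u_1}>0$ excludes $u_2$ and must be crossed by $H$). Your contradiction-based phrasing, the handling of the vacuous case, and the observation that positive $y_{S,v}$ only occurs for clusters $S\in\eS$ (so Lemma~\ref{lem:conn} applies) are all sound.
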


We can relate the cost of a subgraph to the potential value of
the colors it exhausts.
\begin{lemma}\label{lem:color}
 Let $L$ be the set of colors exhausted by subgraph $G'$ of $G$.
 The cost of $G'(V,E')$ is at least $\sum_{v\in L}\phi_v$.
\end{lemma}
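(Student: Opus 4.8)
The plan is to exploit the coloring interpretation of the $y$-variables and the edge-packing constraints~\eqref{eqn:lp:1}. For a color $v\in L$, let $\eS_v=\{S: y_{S,v}>0\}$ be the collection of clusters that contributed color $v$; since the clusters form a laminar family (and all contain $v$), $\eS_v$ is a nested chain $S_1\subsetneq S_2\subsetneq\cdots\subsetneq S_{m_v}$. Because $G'$ exhausts color $v$, for each $S_j\in\eS_v$ there is an edge $e_j\in E'$ with $e_j\in\delta(S_j)$. The key geometric fact is that the ``color-$v$ segments'' coming from different clusters in the chain occupy disjoint portions of the edges: an edge $e$ crossing $\delta(S_j)$ receives a segment of color $v$ of length exactly $y_{S_j,v}$ attributable to $S_j$, and the cleanup process makes these consecutive, so the total length of color $v$ deposited on any single edge $e$ is $\sum_{S: e\in\delta(S)} y_{S,v}$, which by~\eqref{eqn:lp:1} is at most $c_e$.

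First I would make the charging precise. Assign to color $v$ the quantity $\sum_{j=1}^{m_v} y_{S_j,v} = \sum_{S\ni v} y_{S,v}$, which equals $\phi_v$ by~\eqref{eqn:lp:2} (tight at termination for every $v$ with positive potential; for $v$ with $\phi_v=0$ there is nothing to show since then $\eS_v=\emptyset$). So the total potential $\sum_{v\in L}\phi_v$ equals $\sum_{v\in L}\sum_{S\ni v} y_{S,v}$. Now I want to show this is at most $\ell(E') = \sum_{e\in E'} c_e$. The natural way is to define, for each cluster $S$ with $y_{S,v}>0$ and $v\in L$, a ``witness edge'' $\chi(S,v)\in E'\cap\delta(S)$ (which exists by exhaustion), and argue that the total charge routed onto any fixed edge $e\in E'$ is at most $c_e$.

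The main obstacle — and the step I would spend the most care on — is bounding the charge on a single edge. If I route the entire amount $y_{S,v}$ onto one arbitrarily-chosen witness edge $\chi(S,v)$, an edge could be the witness for many pairs $(S,v)$ and the bound could fail. The clean fix is to not collapse the charge to a single edge but instead to think of $G'$ as literally covering color: for each pair $(S,v)$ with $v\in L$, because $E'\cap\delta(S)\neq\emptyset$, the subgraph $G'$ ``reaches across'' the cut $(S,\bar S)$, and the color-$v$ material of amount $y_{S,v}$ that cluster $S$ painted lies, in part, on edges of $\delta(S)$. Since the painted color-$v$ intervals from distinct clusters $S\in\eS_v$ are pairwise disjoint along the edges (nesting of the laminar chain means they are laid down at increasing ``radii''), the total color-$v$ length lying on $E'$ is exactly $\sum_{S\in\eS_v} (\text{length of color $v$ from $S$ on }E')$. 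Here one must verify that for each $S\in\eS_v$, at least $y_{S,v}$ worth of color $v$ from $S$ sits on $E'$: this is because every edge in $\delta(S)$ receives a full segment of length $y_{S,v}$ of color $v$ from $S$ (the boundary grows uniformly), and $E'$ contains at least one such edge. Summing over $v\in L$ and over the chain, the total color lying on $E'$ is $\sum_{v\in L}\phi_v$; on the other hand the total color (all colors) lying on any single edge $e$ is $\sum_{S: e\in\delta(S)}\sum_{w\in S} y_{S,w}\le c_e$ by~\eqref{eqn:lp:1}, so the total over all of $E'$ is at most $\ell(E')$. Combining the two bounds yields $\sum_{v\in L}\phi_v \le \ell(E')$, which is the claim.

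Let me restate the logical skeleton, since this is the proof I would write: (1) for $v$ with $\phi_v>0$, $\phi_v=\sum_{S\ni v} y_{S,v}$ by~\eqref{eqn:lp:2} at termination; (2) for each such $S$, exhaustion gives an edge $e_S^v\in E'\cap\delta(S)$, and since during the epoch in which $S$ was active it deposited a segment of color $v$ of length $y_{S,v}$ on \emph{every} edge of $\delta(S)$, at least $y_{S,v}$ units of color $v$ lie on $E'$; (3) laminarity of $\eS$ ensures these deposits (over the chain $\eS_v$) occupy disjoint portions, so $\sum_{S\ni v} y_{S,v}$ units of color $v$ lie on $E'$; (4) summing over $v\in L$, at least $\sum_{v\in L}\phi_v$ total color lies on $E'$; (5) by~\eqref{eqn:lp:1} each edge $e$ holds at most $c_e$ units of color total, so $E'$ holds at most $\ell(E')$; (6) hence $\sum_{v\in L}\phi_v\le \ell(E')$, the cost of $G'$. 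The one technical subtlety to nail down is step (3) — that the color-$v$ intervals from nested clusters really are disjoint on each edge — which follows from the cleanup/lazy-coloring described in the footnote, together with the fact that once an edge goes tight it leaves all active clusters, so no further color is added to a tight edge; alternatively one avoids intervals entirely and argues purely with the inequality $\sum_{S: e\in\delta(S)} y_{S,v}\le \sum_{S:e\in\delta(S)}\sum_{w\in S}y_{S,w}\le c_e$ applied edge-by-edge after a suitable fractional assignment, which is the cleaner route and the one I would ultimately adopt.
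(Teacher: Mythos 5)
Your proof is correct and essentially identical to the paper's: the paper also charges $\sum_{v\in L}\phi_v$ to the edges of $G'$ via the chain $\sum_{e\in E'}c_e \ge \sum_{e\in E'}\sum_{S:e\in\delta(S)}y_S \ge \sum_{S:E'\cap\delta(S)\ne\emptyset}y_S \ge \sum_{v\in L}\sum_{S\ni v}y_{S,v} = \sum_{v\in L}\phi_v$, using exhaustion and the tightness of \eqref{eqn:lp:2} exactly as in your steps (1)--(6). The geometric disjointness of color intervals you labor over in step (3) is not actually needed --- the per-edge bound $\sum_{S:e\in\delta(S)}\sum_{w\in S}y_{S,w}\le c_e$ from \eqref{eqn:lp:1} already rules out double counting --- and the ``cleaner edge-by-edge route'' you settle on at the end is precisely the paper's formal proof.
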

This is quite intuitive.
Recall that the $y$ variables color the edges of the graph.
Consider a segment on edges corresponding to cluster $S$ with color $v$.
At least one edge of $G'$ \emph{passes through} the cut $(S,\bar{S})$.
Thus, a portion of the cost of $G'$ can be charged to $y_{S,v}$.
Hence, the total cost of the graph $G'$ is at least as large as the total amount of colors paid for by $L$.
\ifabstract The formal proof appears in the appendix. \fi
\iffull We now provide a formal proof.
\begin{proof}
 The cost of $G'(V,E)$ is
\begin{align*}
 \sum_{e\in E'} c_e
  &\geq  \sum_{e\in E'}\sum_{S: e\in\delta(S)} y_S &\text{by \eqref{eqn:lp:1}}\\
  &=     \sum_S |E'\cap\delta(S)|y_S \\
  &\geq  \sum_{S: E'\cap\delta(S)\neq\emptyset} y_S \\
  &=     \sum_{S: E'\cap\delta(S)\neq\emptyset}\sum_{v\in S} y_{S,v} \\
  &=     \sum_{v}\sum_{S\ni v: E'\cap\delta(S)\neq\emptyset} y_{S,v} \\
  &\geq  \sum_{v\in L}\sum_{S\ni v: E'\cap\delta(S)\neq\emptyset} y_{S,v} \\
  &=     \sum_{v\in L}\sum_{S\ni v} y_{S,v}, \\\intertext{because $y_{S,v}=0$ if $v\in L$ and $E'\cap\delta(S)=\emptyset$,}
  &=     \sum_{v\in L}\phi_v  &\text{by a tight version of \eqref{eqn:lp:2}}. &\qedhere
\end{align*}
\end{proof}
\fi

Recall that the trees $T^\ast_i$ of the $2$-approximate solution $F^\ast$ 
are contracted in $F_2$. 
 Construct $F$ from $F_2$ by uncontracting all these trees. 
 Let $F$ consist of tree components $T_i$.
 It is not difficult to verify that $F$ is indeed a forest, but
 we do not need this condition since we can always remove cycles to
 find a forest.
 Define $\DD_i := \{ (s,t)\in\DD: s,t\in V(T_i)\}$,
 and let $\OPT_i$ denote (the cost of) the optimal Steiner forest satisfying the demands $\DD_i$.

\begin{lemma}\label{lem:break:sep}
 $\sum_i\OPT_i \leq (1+\eps)\OPT$.
\end{lemma}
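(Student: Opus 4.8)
The plan is to fix an optimal Steiner forest $F^{\OPT}$ of $G_{in}$ for the demand set $\DD$ (with $\ell(F^{\OPT})=\OPT$) and to bound $\sum_i \OPT_i$ by charging pieces of $F^{\OPT}$ to the individual demand sets $\DD_i$, paying an extra $\eps\,\OPT$ for the overlaps. The key structural fact to exploit is Lemma~\ref{lem:conn}: if a pair $(s,t)\in\DD_i$ then $s,t$ lie in the same component $T_i$ of $F$, and since $F^{\OPT}$ already connects $s$ and $t$, the ``problem'' is only that one optimal forest might be shared between several clusters. So the first step is to decompose $F^{\OPT}$ according to the clusters of $F_2$: for each component $T_i$ (equivalently each maximal cluster, after uncontracting the $T^\ast$'s), look at the portion of $F^{\OPT}$ that is needed to connect the demands of $\DD_i$. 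Because contracting the $T^\ast_i$'s does not increase the optimum and $F^\ast$ itself connects every demand, it suffices to work in the contracted graph $G(V,E)$ and bound $\sum_i \OPT_{\DD_i}(G)$ there, where now each $\DD_i$ asks to connect some vertices of $G$.

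Next I would make the overlap precise. Let $F'$ be $F^{\OPT}$ pushed into $G$ (contract the $T^\ast$'s); $\ell(F')\le\OPT$. For each $i$, take $F'_i$ to be a minimal subforest of $F'$ connecting all pairs in $\DD_i$; then $\OPT_i \le \ell(T^\ast\text{-part}) + \ell(F'_i)$, and $\sum_i \OPT_i \le \sum_i \ell(F'_i)$ plus the cost of the uncontracted trees, which is accounted separately. The point is to show $\sum_i \ell(F'_i) \le \ell(F') + \eps\OPT \le (1+\eps)\OPT$. An edge $e$ of $F'$ contributes to several $F'_i$ only if it is ``needed'' by more than one cluster; I want to argue that such an edge must cross the boundary $\delta(C)$ of some pruned or tight cluster, and then use the coloring/exhaustion machinery (Lemma~\ref{lem:color}) together with the potential bound $\sum_v \phi_v = \eps \cdot \ell(F^\ast) \le \eps\cdot 2\OPT$. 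Concretely: if a piece of $F'$ is used by two different clusters $\DD_i,\DD_j$, then $F'$ connects two vertices that wind up in different components of $F_2$, so by the Corollary $F'$ exhausts the color of at least one of them; summing, the total ``multiply used'' length is at most $\sum_{v\in L}\phi_v$ for the set $L$ of exhausted colors, and by Lemma~\ref{lem:color} applied to $F'$ this is at most $\ell(F')$. I need to be careful with constants here — the clean statement wants $\eps\OPT$, so the potentials should really be scaled so that $\sum_v\phi_v$ relates to $\eps\OPT$ with the right constant, and one should use $\phi_v = \frac1\eps \cdot(\text{cost of }T^\ast)$ only through the tight constraints~\eqref{eqn:lp:2}, reading the charging in the reverse direction: the extra cost incurred is $\le \eps\cdot(\text{something bounded by }\OPT)$.

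More carefully, the right way to organize the charging is: build a single forest $\hat F$ in $G$ that simultaneously satisfies all $\DD_i$ and has cost $\le \sum_i\OPT_i$ is \emph{not} what we want — rather the opposite, we start from one $F^{\OPT}$ and split it. For each demand $(s,t)\in\DD_i$, the optimal forest contains an $s$--$t$ path $P_{st}$. Define $F'_i := \bigcup_{(s,t)\in\DD_i} P_{st} \subseteq F'$. Then $F'_i$ connects $\DD_i$, so $\OPT_{\DD_i}(G)\le \ell(F'_i)$, and $\sum_i\OPT_{\DD_i}(G)\le \sum_i \ell(F'_i) = \sum_{e\in F'} c_e\cdot|\{i : e\in F'_i\}|$. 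Thus I must bound $\sum_{e\in F'} c_e(\mu(e)-1)$ where $\mu(e)=|\{i : e\in F'_i\}|$, i.e.\ the total ``excess.'' The key claim is that the subgraph $G'_e$ formed by the component of $F'$ containing $e$ (or more simply, $F'$ restricted to a suitable connected piece through $e$) connects vertices from $\mu(e)$ distinct components of $F_2$, hence by iterating the Corollary it exhausts at least $\mu(e)-1$ distinct colors; a clean aggregation then gives $\sum_e c_e(\mu(e)-1) \le \sum_{v}\phi_v$, and since $\sum_v\phi_v = \frac1\eps\ell(F^\ast) $ is the wrong direction, I instead run Lemma~\ref{lem:color} on $F'$ directly: $\ell(F')\ge \sum_{v\in L}\phi_v$ where $L$ is the set of colors exhausted by $F'$, and a separate combinatorial argument shows the total excess is at most $\eps\cdot\sum_{v\in L}\phi_v \le \eps\ell(F') \le \eps\OPT$. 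The main obstacle — and the delicate step — is exactly this last combinatorial bookkeeping: correctly relating $\mu(e)-1$ summed with weights to the number of exhausted colors and pushing the factor $\eps$ through via the definition $\phi_v = \frac1\eps(\text{tree cost})$ and the tightness of~\eqref{eqn:lp:2}, using laminarity of $\eS$ and the fact (from Lemma~\ref{lem:break:cost}'s proof and the pruning rule) that each color $v$ with $\phi_v>0$ is ``responsible'' for at most one merge across cluster boundaries in the final $F_2$. Once that accounting is set up, the inequality $\sum_i\OPT_i \le \OPT + \eps\OPT = (1+\eps)\OPT$ follows by summation.
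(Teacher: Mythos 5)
Your overall framing---charge the optimum to the $\DD_i$'s and pay an extra $\eps\OPT$ for the conflicts, using Lemma~\ref{lem:conn}, the exhaustion corollary, and Lemma~\ref{lem:color}---identifies the right ingredients, but the decomposition you commit to cannot be pushed through. You set $F'_i:=\bigcup_{(s,t)\in\DD_i}P_{st}\subseteq F'$ and reduce the lemma to bounding the excess $\sum_{e\in F'}c_e(\mu(e)-1)$ by $\eps\OPT$. That quantity is not bounded by $\eps\OPT$, and no cited lemma controls it: if a single component $T$ of the optimal forest carries demands from $m$ different clusters whose paths all traverse the same stretch of $T$, the excess is up to $(m-1)$ times the length of that stretch. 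The color machinery only certifies that $T$ exhausts at least $m-1$ colors and hence that $\ell(T)\geq\sum_{v\in L}\phi_v$ (Lemma~\ref{lem:color}); this is a statement about $\ell(T)$ being \emph{large}, and it never produces a factor of $\eps$ on the $F'$ side. You notice yourself that $\sum_v\phi_v=\tfrac1\eps\ell(F^\ast)$ points ``the wrong direction,'' but the claimed inequality ``total excess $\leq\eps\sum_{v\in L}\phi_v$'' is then asserted without an argument, and it is precisely the step that fails.

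The missing idea is that one must not duplicate edges of the optimum at all. The paper's proof assigns each tree component of $\OPT$ \emph{wholesale} to a single $\DD_i$ and serves the displaced demands by a different, cheap structure: for every contracted vertex $u$ whose color is exhausted by (the contraction of) $\OPT$, it adds the entire 2-approximation tree $T^\ast_u$ as its own tree in the collection $\T$, which by itself connects every demand assigned to $u$ (both terminals of such a demand lie in $T^\ast_u$). The surcharge is then $\sum_{u\in L}\eps\phi_u$ --- here the factor $\eps$ appears because $\ell(T^\ast_u)=\eps\phi_u$ by the definition of the potentials --- and Lemma~\ref{lem:color} applied to the contracted optimum gives $\sum_{u\in L}\phi_u\leq\OPT$, hence a total surcharge of $\eps\OPT$. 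Lemma~\ref{lem:conn} is used only to show that after removing the demands of exhausted colors, what remains on each component of $\OPT$ lies in a single $\DD_i$ (otherwise that component would exhaust one of the two offending colors, contradicting that its demands were not removed). So the $\eps$ materializes exactly when a potential $\phi_u$ is traded for the actual tree $T^\ast_u$; your decomposition never performs that trade, which is why your accounting cannot close.
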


\begin{proof}
 If each tree component of $\OPT$ only satisfies demands from a single $\DD_i$, we are done.
 Otherwise, we build a solution $\OPT'$ consisting of
 forests $\OPT'_i$ for each $\DD_i$,
 such that $\sum_i \OPT'_i \leq (1+\eps)\OPT$.
 Then, $\OPT_i \leq \OPT'_i$ finishes the argument.
 Instead of explicitly identifying the forests $\OPT'_i$, we construct an \emph{inexpensive} set of trees $\T$ where each $T\in\T$ is associated with a subset $D_T\subseteq\DD$ of demands, such that
 \begin{itemize}
  \item $T\in\T$ connects all the demands in $D_T$,
  \item all the demands are satisfied, namely, $\cup_{T\in\T}D_T = \DD$, and
  \item for each $T\in\T$, there exists a $\DD_i$ where $D_T\subseteq\DD_i$.
 \end{itemize}
 Then, each $\OPT'_i$ above is merely a collection of several trees in $\T$.
 We now describe how $\T$ is constructed.
 Start with the optimal solution $\OPT$.
 Contract the trees $T^\ast_i$ of the $2$-approximate solution $F^\ast$ to build a graph $\hat\OPT$.
 Initially, the set $D_{\hat T}$ associated with each tree component $\hat T$ of $\hat\OPT$ specifies the set of demands connected via $\hat T$.
 If the color $u$ corresponding to a tree component $T^\ast_i$ is exhausted by $\hat\OPT$, add $T^\ast_i$ to $\T$, and let $D_{T^\ast_i}$ be the set of demands satisfied via $T^\ast_i$.
 Then, for any tree component $\hat T_i$ of $\hat\OPT$,
  remove $\cup_{T\in\T}D_T$ from $D_{\hat T_i}$.
 Finally, construct the uncontracted version of all the trees  $\hat T_i$, and add them to $\T$.

 The first condition above clearly holds by the construction.
 For the second condition, note that at the beginning all the demands are satisfied by the trees $\hat T_i$, and we only remove demands from $D_{\hat T_i}$ when they are satisfied elsewhere.
 The last condition is proved by contradiction.
 Suppose a tree $T\in\T$ satisfies demands from two different groups $\DD_1$ and $\DD_2$.
 Clearly, $T$ cannot be one of the contracted trees $T^\ast_i$, since those trees only serve a subset of one $\DD_i$.
 Hence, $T$ must be a tree component of $\OPT$.
 Let $u,v\in V$ be two vertices from different components of $F_2$ that are connected in  $T$.
 Since $u$ and $v$ are not connected in $F_2$, Lemma~\ref{lem:conn} ensures that for at least one of these vertices, say $u$, we have $y_{S,u} = 0 \;\forall S\ni u,v$.
 Thus, $y_{S,u} > 0$ implies $v\not\in S$.
 As $T$ connects $u$ and $v$, this means $T$ exhausts the color $u$.
 We reach a contradiction, because the demands corresponding to $u$ are still in $D_T$.

 Finally, we show that the cost of $\T$ is small.
 Recall that $\T$ consists of two kinds of trees:
  tree components of $\OPT$ as well as trees corresponding to exhausted colors of $\OPT$.
 We claim the cost of the latter is at most $\eps\OPT$.
 Let $L$ be the set of exhausted colors of $\OPT$.
 The cost of the corresponding trees is $\eps\sum_{v\in L}\phi_v$ by definition of $\phi_v$.
 Lemma~\ref{lem:color}, however, gives $\OPT\geq\sum_{v\in L}\phi_v$.
 Therefore, the total cost of $\T$ is at most $(1+\eps)\OPT$.
\end{proof}

Now, we are ready to prove the main theorem of this section.
\begin{proof}[\proofname\ of Theorem~\ref{thm:break}]
 The first condition of the lemma follows directly from our construction:  we start with a solution, and never disconnect one of the tree components in the process.
 The construction immediately implies the second condition.
 By Lemma~\ref{lem:break:cost}, the cost of $F_2$ is at most $2\sum_{v\in V}\phi_v \leq \frac{4}{\eps}\OPT$. Thus, $F$ costs no more than $(4/\eps+2)\OPT$, giving the third condition.
 Finally, Lemma~\ref{lem:break:sep} establishes the last condition.
\end{proof}


\iffull
\section{Constructing the spanner}\label{sec:solve}\label{sec:spanner}
The goal of this section is to prove Theorem~\ref{thm:spanner}.
Recall that we are given a graph $G_{in}(V_{in}, E_{in})$, and a set
of demands $\DD$. From Theorem~\ref{thm:break}, we obtain a set of
trees $\{T_1, \dots, T_k\}$, associated with a partition of demands
$\{\DD_1, \dots, \DD_k\}$: tree $T_i$ connects all the demands
$\DD_i$, and the total cost of trees $T_i$ is in $O(\OPT)$.
The construction goes along the same lines as that of Borradaile et
al.~\cite{BKM07:planar}, yet there are certain differences,
especially in the analysis. The construction is done in four steps.
We separately build a graph $H_i$ for each $T_i$, and finally let
$H$ be the union of all graphs $H_i$.

\paragraph{Step 1: cutting the graph open}
Construct the graph $G_i^1$ as follows.
Start from $G_{in}$ and duplicate each edge in $T_i$.
Let $\mathcal{E}_i$ be the Eulerian tour of the duplicated $T_i$.
Introduce multiple copies of vertices in $T_i$ if necessary, so as to transform $\mathcal{E}_i$ into a simple cycle enclosing no vertices.
We next change the embedding so that this cycle is the infinite face of the graph.  It is clear that the length of the boundary of $G_i^1$ is twice the length of $T_i$.

\paragraph{Step 2: building panels}
Borradaile et al. pick a subgraph $G_i^2$ of $G_i^1$ which contains the boundary of $G_i^1$, such that its length is at most $f_1(\eps)\OPT_{\DD_i}(G_{in})$, and
partitions $G_i^1$ into \emph{panels}.
The panels are the finite faces of $G_i^2$.
We describe the details of this step below.

We first decompose $G_i^1$ into strips. Let $Q$ be
the boundary of $G_i^1$. Let $Q[x, y]$ denote the unique
nonempty counterclockwise $x$-to-$y$ subpath of $Q$.
We define $Q[x,x]:=Q$.
 We use a recursive algorithm:
 Find vertices $x, y$ on $Q$ such that\footnote{This is always possible, since $x=y$ satisfies these conditions.}
\begin{itemize}
\item $(1 + \eps)\dist_{G_i^1}(x, y) < \ell(Q[x, y])$, and
\item $(1 + \eps)\dist_{G_i^1}(x', y') \geq \ell(Q[x', y'])$ for every $x', y'$ in $Q[x, y]$ such that $x' \neq x$ or $y' \neq y$.
\end{itemize}
Let $B$ be a shortest path from $x$ to $y$ in $G_i^1$.
Then, the subgraph enclosed by $Q[x,y]\cup B$ is called a \emph{strip}.
We call the path $B$ the \emph{blue} boundary of the strip,
whereas the path $Q[x,y]$ is called the \emph{red} boundary of the strip,
and is denoted by $R$.
The blue and red boundaries are thought of as the upper and lower boundaries, respectively.
Then, both boundaries are oriented from left to right for the following discussion.

Recursively decompose the subgraph $G_i^1$ enclosed by $B\cup Q-Q[x,y]$ into strips (if it is nontrivial).
The next lemma is easy to prove.

\begin{lemma}[Inequality (10), Klein~\cite{klein:spanner}]\label{lem:strips}
The total length of all the boundary edges of all the strips is at
most $(\eps^{-1} + 1)$ times the length of $Q$.
\end{lemma}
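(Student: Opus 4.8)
The plan is to follow the recursion that creates the strips and to track two quantities: how the total length of the current boundary cycle decreases each time a strip is peeled off, and the fact that every boundary edge of every strip is either an edge of the original cycle $Q$ or an edge of some blue boundary. Enumerate the strips as $\sigma_1,\sigma_2,\dots$ in the order the recursive procedure produces them, and let $R_j$ and $B_j$ denote the red and blue boundaries of $\sigma_j$. Write $Q_0:=Q$, and for $j\ge 1$ let $Q_j$ be the boundary cycle handed to the $j$-th recursive call, so $R_j=Q_j[x_j,y_j]$, $B_j$ is a shortest $x_j$-to-$y_j$ path in $G_i^1$, and the region for the next call is bounded by $Q_{j+1}=B_j\cup(Q_j\setminus R_j)$. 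Since $R_j$ and $Q_j\setminus R_j$ are complementary subpaths of the cycle $Q_j$, subadditivity of $\ell$ over unions of edge sets gives
\[
\ell(Q_{j+1})\le \ell(Q_j)+\ell(B_j)-\ell(R_j).
\]

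First I would bound $\sum_j\ell(B_j)$. By the defining property of the pair $(x_j,y_j)$, and since $B_j$ is a shortest path, $\ell(B_j)=\dist_{G_i^1}(x_j,y_j)<\ell(R_j)/(1+\eps)$, i.e.\ $\ell(R_j)>(1+\eps)\ell(B_j)$. Substituting into the displayed inequality yields $\ell(Q_{j+1})<\ell(Q_j)-\eps\,\ell(B_j)$, that is, $\eps\,\ell(B_j)<\ell(Q_j)-\ell(Q_{j+1})$. Summing over all $j$, the right-hand side telescopes and, since lengths are nonnegative, $\eps\sum_j\ell(B_j)<\ell(Q_0)=\ell(Q)$, so $\sum_j\ell(B_j)<\eps^{-1}\ell(Q)$.

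Next I would observe that the set of all boundary edges of all strips is contained in $Q\cup\bigcup_j B_j$. Each $B_j$ lies in this set by definition, and for the red boundaries we have $R_j\subseteq Q_j$, while a trivial induction on $j$ using $Q_{j+1}=B_j\cup(Q_j\setminus R_j)$ shows $Q_j\subseteq Q\cup B_1\cup\dots\cup B_{j-1}$. Hence, again by subadditivity of $\ell$, the total length of all strip boundary edges is at most $\ell(Q)+\sum_j\ell(B_j)<\ell(Q)+\eps^{-1}\ell(Q)=(\eps^{-1}+1)\ell(Q)$, which is exactly the claimed bound.

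The remaining points are purely bookkeeping, and I do not anticipate any real obstacle. If the cycle $Q_{j+1}$ fails to be simple, the recursion may split into several regions rather than one; in that case one runs the same argument over the tree of recursive calls, replacing $\ell(Q_{j+1})$ by the sum of the boundary lengths of the children, and the telescoping sum over the tree still collapses to $\ell(Q)$. Similarly, whatever the precise termination condition of the recursion is, the boundary of a leaf region has nonnegative length and may simply be discarded. The entire content of the lemma is the one-line telescoping estimate $\eps\sum_j\ell(B_j)<\ell(Q)$ together with the observation that strip boundaries use only edges of $Q$ and of the blue paths.
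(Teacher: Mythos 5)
Your argument is correct and is precisely the telescoping argument behind Klein's Inequality (10), which the paper invokes by citation rather than reproving: the peeling condition gives $\eps\,\ell(B_j)<\ell(Q_j)-\ell(Q_{j+1})$, the sum telescopes to $\eps\sum_j\ell(B_j)<\ell(Q)$, and every strip boundary edge lies in $Q$ or some $B_j$. Your closing remark about non-simple boundaries splitting the recursion into a tree is the right way to handle the only real technicality, so nothing is missing.
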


After decomposing  $G_i^1$ into strips,
we find short paths---called \emph{columns} henceforth---crossing each strip. Consider a strip, and let $R$ and $B$ be its red and blue boundaries.

Vertices $r_0, r_1, \dots$ are inductively selected as follows.
Let $r_0$ be the left endpoint common to $R$ and $B$. For
$j = 1, 2, \dots$, find the vertex $r_j$ on $R$ such that
\begin{itemize}
\item $(1 + \eps)\dist_{G_i^1} (r_j, B) < \dist_R (r_j, r_{j-1})+ \dist_{G_i^1} (r_{j-1}, B)$, and
\item $(1 + \eps)\dist_{G_i^1} (x, B) \geq \dist_R (x, r_{j-1}) + \dist_{G_i^1} (r_{j-1}, B)$
for every $x$ in $R[r_{j-1}, r_j)$.
\end{itemize}
For $j = 0, 1, 2, \dots$, column $C_j$ is defined to be a
shortest path from $r_j$ to $B$. Note that $C_0$ is a path
with no edges since $r_0$ belongs to $B$. We also include as
a column the no-edge path starting and ending at the
rightmost vertex common to $R$ and $B$.
The total length of the columns can be bounded as follows.

\begin{lemma}[Lemma 5.2 of Klein~\cite{klein:spanner}]\label{lem:columns}
 The sum of the lengths of the columns of a strip with red boundary $R$ is at most $\eps^{-1}\ell(R)$.
\end{lemma}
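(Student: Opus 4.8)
The plan is to turn the greedy rule defining the $r_j$'s into a one-line telescoping inequality. Write $d_j := \dist_{G_i^1}(r_j,B)$ for each selected vertex $r_j$, so that $\ell(C_j)=d_j$ since $C_j$ is a shortest $r_j$-to-$B$ path, and write $\rho_j := \dist_R(r_{j-1},r_j)$ for the length of the stretch of the red boundary between consecutive selected vertices. First I would record the shape of the selection: it is finite and produces vertices $r_0,r_1,\dots,r_n$ marching monotonically from the left endpoint $x$ of $R$ to its right endpoint $y$. Indeed, $y$ lies on $B$ (as $B$ is a shortest $x$-to-$y$ path), so $\dist_{G_i^1}(y,B)=0$ and $y$ satisfies the strict inequality of the first bullet with respect to any earlier $r_{j-1}$; hence the process always terminates with $r_n=y$. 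In particular $r_0,r_n\in B$, so $d_0=d_n=0$, and the extra no-edge column added at the rightmost common vertex contributes nothing.

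The heart of the argument is the first bullet of the selection rule, which for each $j\ge 1$ reads $(1+\eps)d_j < \rho_j + d_{j-1}$, i.e.
\[
\eps\, d_j \;<\; \rho_j + (d_{j-1}-d_j).
\]
Summing over $j=1,\dots,n$, the telescoping sum $\sum_{j=1}^n (d_{j-1}-d_j)$ equals $d_0-d_n=0$, while $\sum_{j=1}^n \rho_j = \dist_R(x,y)=\ell(R)$ because the $r_j$ partition $R$ from end to end. Therefore $\eps\sum_{j=1}^n d_j < \ell(R)$, and since $d_0=0$ the total column length is $\sum_{j=0}^n \ell(C_j) = \sum_{j=1}^n d_j < \eps^{-1}\ell(R)$, which is the claim (with room to spare).

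I do not anticipate a real obstacle; the points requiring care are purely bookkeeping. One is that the selected vertices genuinely partition $R$, so that $\sum_j \rho_j$ equals $\ell(R)$ exactly with nothing double-counted or skipped — this follows from orienting $R$ left-to-right and always choosing $r_j$ strictly to the right of $r_{j-1}$. The other is using the two kinds of distance consistently — $\dist_R(\cdot,\cdot)$ measured along the red path in the first bullet versus $\dist_{G_i^1}(\cdot,B)$ measured in the cut-open graph — so that the displayed inequality is \emph{exactly} the hypothesis rather than a weakened form of it. The second bullet of the selection rule (minimality of each $r_j$) plays no role in this upper bound; it is only needed to guarantee that the columns are not too sparse, which is used elsewhere in the construction.
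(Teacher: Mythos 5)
Your telescoping argument is correct: the first selection bullet gives $\eps\,d_j < \rho_j + (d_{j-1}-d_j)$, the boundary terms vanish since $r_0$ and the last selected vertex both lie on $B$, and $\sum_j \rho_j \le \ell(R)$, yielding the claimed bound. The paper itself supplies no proof of this lemma (it is imported verbatim as Lemma 5.2 of Klein's spanner paper), and your argument is precisely the standard one used there.
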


Roughly speaking, for an integer parameter $\alpha=\alpha(\eps)$ whose precise value will be fixed later,
we select every $\alpha^{\rm th}$ column of each strip to finish the construction of $G_i^2$.
In particular, let $C_0, C_1, \dots, C_s$ be the set of columns of a strip.
For each $i = 0, 1, \dots, \alpha$, define the subset of columns $\C_i:=\{C_j: j=i\pmod\alpha\}$.
We select a group $\C_i$ whose total length is the least.
The selected columns are called \emph{supercolumns}.
Clearly, Lemma~\ref{lem:columns} implies the length of the supercolumns is at most $\eps^{-1}/\alpha \cdot \ell(R)$.

The supercolumns partition each strip into \emph{panels}.
Each panel is bounded above by a blue boundary, and below by a red boundary.
It is also bounded to the left and right by columns---recall that each strip is assumed to have trivial columns of length zero at its extreme left and right.
The value $\alpha$ will be sufficiently large such that we may add all the supercolumns to the optimal solution without significant increase in length.

\paragraph{Step 3: selecting portals}
They next select about $h=h(\eps)$ vertices on the red and blue boundaries of each panel as its \emph{portals}.  Denote the set of portals by $\Pi$. These are picked such that they are almost equidistant.

Consider a panel $P$, enclosed by $C_1, B, C_2, R$.
Define $\tau:=\frac{\ell(R)+\ell(B)}{h}$.
The left-most and right-most vertices of $R$ and $B$ are designated portals.
In addition, we designate a minimal set of vertices of $R$ and $B$ as portals, such that the distance between two consecutive portals on $R$ (or $B$) is at most $\tau$.
Ignoring the right-most (and left-most) portals, the minimality property guarantees the odd-numbered  portals (even-numbered portals, respectively) have distance larger than $\tau$.
Therefore, there are at most $2h+4=O(h)$ portals on each panel

\paragraph{Step 4: adding Steiner trees}
For each panel $P$, and any selection of its portals
$\Pi'\subseteq\Pi$, we add the optimal Steiner tree spanning $\Pi'$
and only using the boundary or the inside of $P$.  This is done
assuming that the left and right borders of the panel have length
zero.  This can be done in time polynomial in $h(\eps)$ using the
algorithm of Erickson et al.~\cite{Erickson}, since all these
terminals lie on the boundary face of a planar graph.  
Notice that for
a fixed $\epsilon$, there are at most a constant number of such
Steiner trees, and the length of each is at most the length of the
boundary of the panel.

The spanner $H_i$ consists of $G_i^2$, the Steiner trees built in the previous step, and edges of length zero connecting copies of a vertex on the infinite face.  It is easy to observe the latter edges could be added in such a way as to respect the planar embedding.
The following lemma is the main piece in proving the shortness property.
\begin{lemma}\label{lem:shortness}
 Length of $H_i$ is at most $f(\eps)\ell(T_i)$ for a universal certain function $f(\eps)$.
\end{lemma}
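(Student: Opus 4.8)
The plan is to account for the length of $H_i$ by adding up the contributions of the four construction steps, each bounded in terms of $\ell(T_i)$, and then to set the parameters $\alpha$ and $h$ so that all the bounds combine into a single function $f(\eps)$. First I would recall from Step~1 that the boundary $Q$ of $G_i^1$ has $\ell(Q)=2\ell(T_i)$, since the graph was cut open along the Eulerian tour of the doubled tree $T_i$; the zero-length edges added at the very end (identifying the copies of each boundary vertex) contribute nothing. So every subsequent bound phrased in terms of $\ell(Q)$ is automatically a bound in terms of $\ell(T_i)$ up to the factor $2$.

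Next I would bound the pieces of $G_i^2$. The strip boundaries (blue and red boundary edges of all strips) have total length at most $(\eps^{-1}+1)\ell(Q)$ by Lemma~\ref{lem:strips}. Within each strip, Lemma~\ref{lem:columns} bounds the total column length by $\eps^{-1}\ell(R)$, and since the supercolumns are the cheapest of the $\alpha$ residue classes $\C_0,\dots,\C_{\alpha-1}$, their length is at most $\eps^{-1}\ell(R)/\alpha$; summing over strips and using that the $\ell(R)$'s are among the strip boundary edges already counted, the supercolumns contribute at most $\eps^{-1}/\alpha\cdot(\eps^{-1}+1)\ell(Q)$. Thus $\ell(G_i^2)\le \big[(\eps^{-1}+1)+\eps^{-1}(\eps^{-1}+1)/\alpha\big]\ell(Q)$, which is $O_\eps(\ell(T_i))$ for any fixed $\alpha\ge 1$; in fact even $\alpha=1$ suffices for the shortness bound, and $\alpha$ is only chosen large for the (separate) spanning argument.

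Then I would handle Step~4. For each panel $P$ and each subset $\Pi'$ of its $O(h)$ portals we add one optimal Steiner tree inside $P$; its length is at most the length of the boundary of $P$ (a Steiner tree on boundary terminals never needs to exceed the face boundary, as remarked after Step~4). There are at most $2^{O(h)}$ such subsets per panel, so the trees inside a single panel cost at most $2^{O(h)}$ times the boundary length of that panel. Each edge of $G_i^2$ bounds at most two panels, so summing the panel boundary lengths over all panels is at most $2\ell(G_i^2)$. Hence the total Steiner-tree length is at most $2^{O(h)}\cdot 2\ell(G_i^2)$. Combining with the bound on $\ell(G_i^2)$ and $\ell(Q)=2\ell(T_i)$ gives
\[
\ell(H_i)\le \Big(1+2^{O(h)}\cdot 2\Big)\cdot\Big[(\eps^{-1}+1)+\tfrac{\eps^{-1}(\eps^{-1}+1)}{\alpha}\Big]\cdot 2\,\ell(T_i),
\]
and since $\alpha=\alpha(\eps)$ and $h=h(\eps)$ are functions of $\eps$ only, the right-hand side is $f(\eps)\,\ell(T_i)$ for an explicit $f$.

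The only subtlety — and the step I'd be most careful about — is the claim that a Steiner tree on a subset of portals, drawn inside a panel with its left and right borders contracted to length zero, can be taken to have length at most the panel's boundary length: this needs the observation that the boundary walk of the panel itself connects all the portals, so the minimum Steiner tree is no longer than it. Everything else is bookkeeping: making sure each edge of $G_i^2$ is charged to at most $O(1)$ panels and that the $\C_i$-averaging for supercolumns is applied correctly. No step is genuinely hard; the work is in carrying the constants through cleanly so that the final $f(\eps)$ depends only on $\eps$.
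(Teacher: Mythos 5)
Your proof is correct and follows essentially the same route as the paper's: bound $\ell(Q)=2\ell(T_i)$, use Lemmas~\ref{lem:strips} and \ref{lem:columns} to bound $\ell(G_i^2)$ by an $O_\eps(1)$ multiple of $\ell(Q)$, and charge the $2^{O(h)}$ Steiner trees per panel against the panel boundaries, whose total is at most $2\ell(G_i^2)$. Your accounting is in fact slightly tighter than the paper's (you count only the supercolumns rather than all columns), and the subtlety you flag about the panel boundary connecting all portals is exactly the observation the paper relies on in Step~4.
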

\begin{proof}
 By Lemma~\ref{lem:strips}, the sum of the lengths of the red boundaries of all strip boundaries is at most $(\eps^{-1}+1)\ell(Q)$.
 We can then apply Lemma~\ref{lem:columns} to bound the total length of $G_i^2$ by $(\eps^{-1}+1)^2\ell(Q)$.
 Since the length of each Steiner tree added in the last step is at most the length of the boundary of the panel,
 the total length of Steiner trees added is at most $2^{O(h)}$ times the sum of the length of all panels.
 The sum of the lengths of the panels is twice the length of $G_i^2$.
 Thus, the length of $H_i$ is at most $(2^{O(h)}+1)\ell(G_i^2)$.
 We finally notice that  the length of the boundary $Q$ of $G_i^1$ is $2\ell(T_i)$, to conclude with the desired bound.
\end{proof}

The spanning property is more involved to prove.
We first mention  a lemma proved in \cite{klein:wads} 
that states a few portal vertices are sufficient on each panel.
\begin{lemma}[Theorem 4 of \cite{klein:wads}]\label{lem:struct}
 Let $P$ be a panel enclosed by $C_1, B, C_2, R$.
 Further, let $F$ be a forest strictly enclosed by the boundary of $P$.
 There is a forest $\tilde F$ of $P$ with the following properties:
\begin{enumerate}
 \item If two vertices on $R\cup B$ are connected in $F$, then they are also connected in $\tilde F \cup C_1 \cup C_2$.
 \item The number of vertices in $R\cup B$ that are endpoints of edges in $\tilde F - (R\cup B)$ is at most $a(\eps)$.
 \item $\ell(\tilde F) \leq (1+c\eps)\ell(F)$.
\end{enumerate}
Here $c$ is an absolute constant and $a(\eps)=o(\eps^{-5.5})$. 
\end{lemma}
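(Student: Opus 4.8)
The plan is to invoke Klein's structure theorem for panels: Lemma~\ref{lem:struct} is exactly Theorem~4 of \cite{klein:wads}, and I would only outline why it holds rather than reprove it. The engine of the proof is the \emph{thinness} of a panel produced by Steps~2--3 of the construction. The strip rule makes every subpath of $R$ and of $B$ an $\eps$-short path (within a factor $1+\eps$ of a shortest path between its endpoints), and the column-selection rule guarantees that from a vertex of $R$ one cannot ``reach $B$ cheaply'' without first walking a prescribed length along $R$. These two facts are precisely what is needed to (a) push a piece of $F$ that hugs a boundary path onto that path at a cost blow-up of only $1+O(\eps)$, and (b) control how often $F$ can make a ``deep'' excursion between $R$ and $B$.

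Concretely, I would first normalize $F$: remove cycles, ensure every leaf lies on $R\cup B$, suppress internal degree-two vertices, and reroute any maximal portion of $F$ confined to a thin band around $R$ (resp.\ $B$) onto $R$ (resp.\ $B$); by $\eps$-shortness the total increase is $O(\eps)\,\ell(F)$ and connectivity can only improve. Let $X$ be the vertices of $R\cup B$ incident to an edge of the resulting forest that does not lie on $R\cup B$. Next I would group the points of $X$, putting two in the same group when the $F$-path joining them (together with $R\cup B$) never penetrates more than an $\eps$-fraction of the local scale into the interior of $P$; within a group all attachments are merged into one by rerouting along $R\cup B$, and the spanned boundary lengths are charged disjointly to $\ell(F)$ because the corresponding $F$-paths are long (their endpoints are far apart on the boundary relative to their interior depth). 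Finally, the number of groups --- the number of genuinely deep excursions --- is bounded in terms of $\eps^{-1}$ and the number of columns such an excursion must cross, and iterating this estimate through the hierarchical strip/column structure gives $|X|\le a(\eps)=o(\eps^{-5.5})$. The side columns $C_1,C_2$ enter precisely here: a deep excursion abutting a side of $P$ may be reattached to $C_1$ or $C_2$ at no charge, which is why the conclusion is phrased for $\tilde F\cup C_1\cup C_2$.

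The main obstacle is the charging in the last step: one must simultaneously keep the number of surviving attachment points polynomially small in $\eps^{-1}$ \emph{and} keep the total rerouting length at $O(\eps)\,\ell(F)$, while ensuring that the reroutings neither interfere with one another nor create new deep excursions. This is handled by exploiting planarity --- the excursions are nested and non-crossing, so they can be processed innermost-first --- together with the quantitative thinness estimates behind Lemmas~\ref{lem:strips} and~\ref{lem:columns}; the specific bound $a(\eps)=o(\eps^{-5.5})$ comes from optimizing the scale parameters in this nested argument, carried out in detail in \cite{klein:wads}, which we do not reproduce here.
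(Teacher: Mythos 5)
The paper itself gives no proof of this lemma: it is imported verbatim as Theorem~4 of \cite{klein:wads} and used as a black box, which is exactly what you do. Your additional sketch of the rerouting/charging argument behind Klein's theorem is a plausible gloss, but since the paper never attempts that argument, the substance of your proposal --- defer to the citation --- matches the paper's approach.
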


We are now at a position to prove the spanning property of $H$.
Recall that $H$ is formed by the union of the graphs $H_i$ constructed above.
\begin{lemma}\label{lem:spanning}
 $\OPT_{\DD}(H) \leq (1+c'\eps)\OPT_{\DD}(G_{in})$.
\end{lemma}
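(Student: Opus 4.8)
## Proof plan for Lemma~\ref{lem:spanning}

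\textbf{Overall strategy.} The plan is to fix an optimal Steiner forest $F^{\OPT}$ for $G_{in}$ with respect to $\DD$ and transform it, piece by piece, into a forest living inside $H$ whose cost has grown by at most a $(1+c'\eps)$ factor while still connecting all demand pairs. The transformation has two conceptual layers. The outer layer uses Theorem~\ref{thm:break}: the partition $\{\DD_1,\dots,\DD_k\}$ and the trees $\{T_i\}$ let us replace $F^{\OPT}$ by a collection of forests, one responsible for each $\DD_i$, whose total cost is at most $(1+\eps)\OPT_\DD(G_{in})$ (condition~4 of Theorem~\ref{thm:break}). Since $H=\bigcup_i H_i$ and each $H_i$ is built from $T_i$ via Steps~1--4, it then suffices to show that for each $i$ there is a forest inside $H_i$ connecting $\DD_i$ of cost at most $(1+c\eps)\OPT_{\DD_i}(G_{in})$; summing over $i$ and absorbing the $(1+\eps)$ factor gives the claim with a slightly larger constant $c'$.

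\textbf{Inner layer: one subinstance at a time.} Fix $i$ and let $F^i$ be an optimal Steiner forest for $\DD_i$ in $G_{in}$; by contracting edges (which only decreases $\OPT_{\DD_i}$, as noted in Section~\ref{sec:defs}) we may also assume $T_i$ is used freely as zero-cost connectivity, so $F^i$ can be taken inside $G_i^1$ after cutting the graph open in Step~1 (the boundary $Q$ of $G_i^1$ is the doubled $T_i$, so anything $T_i$ connected is connected along $Q$ at no cost). First I would add to our solution all of $G_i^2$ --- the strip boundaries and the supercolumns --- whose total length is $O(\eps^{-2})\ell(Q)=O(\eps^{-2})\ell(T_i)$ by Lemmas~\ref{lem:strips} and~\ref{lem:columns}; choosing the parameters $\alpha$ and $h$ large enough as functions of $\eps$ makes the supercolumn cost in particular at most $\eps\,\ell(T_i)$, and we charge the $O(\eps^{-2})\ell(T_i)$ part against the \emph{shortness} budget rather than against $\OPT$ (this is why the spanner, not $\OPT$, is what we pay the constant factor on). Now the key structural move: within each panel $P$, the portion of $F^i$ strictly enclosed by $P$ is a forest, so Lemma~\ref{lem:struct} replaces it by a forest $\tilde F_P$ that (a) preserves all connections between boundary vertices of $P$ once we also throw in the two bounding columns $C_1,C_2$ (already in $G_i^2$), (b) touches the panel boundary in at most $a(\eps)$ vertices, and (c) costs at most $(1+c\eps)$ times the enclosed piece of $F^i$.

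\textbf{Rounding to portals and invoking Step~4.} The $a(\eps)$ boundary-contact vertices of $\tilde F_P$ need not be portals, so I would move each of them to the nearest portal along $R\cup B$; since consecutive portals are within distance $\tau=(\ell(R)+\ell(B))/h$, each move costs at most $\tau$, and with at most $a(\eps)$ moves per panel the total rerouting cost per panel is $O(a(\eps)/h)\cdot(\ell(R)+\ell(B))$. Summing over all panels the total is $O(a(\eps)/h)$ times (twice the length of $G_i^2$) $=O(a(\eps)\eps^{-2}/h)\,\ell(T_i)$, which is at most $\eps\,\ell(T_i)$ once $h=h(\eps)$ is chosen large compared to $a(\eps)\eps^{-2}$; again this is charged to the shortness budget. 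After this rerouting, inside each panel $P$ the solution restricted to $P$ connects some subset $\Pi'$ of $P$'s portals in exactly the same partition as before, and Step~4 has already installed an \emph{optimal} Steiner tree inside $P$ spanning $\Pi'$, whose cost is no larger than the rerouted $\tilde F_P$. Replacing $\tilde F_P$ by that precomputed tree, for every panel, yields a subgraph of $H_i$ that (thanks to property (a) of Lemma~\ref{lem:struct}, the presence of all columns and strip boundaries in $G_i^2$, and the zero-length identifications of vertex copies on the infinite face) connects every pair in $\DD_i$, at total cost $(1+c\eps)\OPT_{\DD_i}(G_{in})+O(\eps)\ell(T_i)$. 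Summing over $i$, using $\sum_i\OPT_{\DD_i}(G_{in})\le(1+\eps)\OPT_\DD(G_{in})$ and $\sum_i\ell(T_i)=O(\OPT_\DD(G_{in}))$ from Theorem~\ref{thm:break}, gives $\OPT_\DD(H)\le(1+c'\eps)\OPT_\DD(G_{in})$ for an absolute constant $c'$, after rescaling $\eps$.

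\textbf{Main obstacle.} The delicate point is the bookkeeping of \emph{which} connectivity requirements survive each local replacement. Lemma~\ref{lem:struct} only promises to preserve connections \emph{between boundary vertices of a single panel}; a demand path in $F^i$ typically weaves through many panels and strips, entering and leaving via column and strip-boundary vertices. One must argue that the global connectivity pattern on the union of all panel boundaries is preserved --- i.e., that after doing the per-panel surgery and adding back all of $G_i^2$, two vertices connected in the original $F^i$ are still connected --- by a careful induction over the strip decomposition, exactly as in Borradaile et al.~\cite{BKM07:planar}; and one must make sure the doubling/cutting-open of Step~1 and the zero-length copy-identifications do not silently destroy or create connections. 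A secondary technical nuisance is ensuring the parameters $\alpha(\eps)$ and $h(\eps)$ are chosen consistently so that all the $O(\eps)\ell(T_i)$ error terms above really are at most $\eps\,\ell(T_i)$ simultaneously; this is routine once the dependence $a(\eps)=o(\eps^{-5.5})$ from Lemma~\ref{lem:struct} is in hand.
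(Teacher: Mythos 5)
Your overall route is the same as the paper's: reduce to the subinstances $\DD_i$ via condition~4 of Theorem~\ref{thm:break}, then within each $H_i$ apply Lemma~\ref{lem:struct} panel by panel, snap the $a(\eps)$ crossing points to portals, and swap in the precomputed portal Steiner trees. However, your cost accounting has a genuine gap, in two related places, and as written the argument does not give $(1+c'\eps)\OPT$.

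First, you add \emph{all} of $G_i^2$ to the solution and ``charge the $O(\eps^{-2})\ell(T_i)$ part against the shortness budget.'' That move is not available: the spanning property bounds the length of the \emph{solution forest} inside $H$, so every edge you put into that forest counts against $(1+\eps)\OPT$, regardless of how cheap $H$ itself is. Since $\sum_i \ell(T_i)$ is only bounded by $(4/\eps+2)\OPT$ (condition~3 of Theorem~\ref{thm:break}), including all of $G_i^2$ would cost $\Omega(\eps^{-2})\sum_i\ell(T_i)=\Omega(\eps^{-3})\OPT$. The paper avoids this by adding only the \emph{supercolumns} to $\OPT_i$ (Lemma~\ref{lem:struct} needs only the two bounding columns $C_1,C_2$ of each panel, and the portions of $\OPT_i$ lying on strip boundaries are simply not replaced), with $\alpha$ tuned so that the supercolumns cost at most $\eps^2\ell(T_i)$. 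Second, and for the same reason, your per-subinstance error terms of order $\eps\,\ell(T_i)$ are too weak: you assert $\sum_i\ell(T_i)=O(\OPT)$, but the actual bound is $O(\OPT/\eps)$, so error terms $\eps\,\ell(T_i)$ sum to $\Theta(\OPT)$ --- a constant-factor loss, not a $(1+O(\eps))$ one. The paper makes both the supercolumn cost and the portal-rerouting cost at most $\eps^2\ell(T_i)$ per subinstance (choosing $h(\eps)=\eps^2 a(\eps)(\eps^{-1}+1)^2$), precisely so that $\eps^2\cdot(4/\eps+2)=O(\eps)$. With these two corrections --- add only the supercolumns, and push all $T_i$-proportional error terms down to $\eps^2\ell(T_i)$ --- your argument becomes the paper's proof. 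Your discussion of the connectivity bookkeeping across panels is accurate and matches what the paper (implicitly, via Lemma~\ref{lem:struct} and the columns) relies on.
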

\begin{proof}
 Take the optimal solution $\OPT$.
 Find forests $\OPT_i$ satisfying demands $\DD_i$, i.e., $\ell(\OPT_i) = \OPT_{\DD_i}(G_{in})$.
 We can apply Theorem~\ref{thm:break} to get $\sum_i\ell(\OPT_i)\leq(1+\eps)\OPT$.

 Consider one $\OPT_i$ that serves the respective set of demands $\DD_i$.
 Add the set of all supercolumns of $H_i$ to $\OPT_i$ to get $\OPT_i^1$.
 Recall the total length of these supercolumns is at most $\ell(G_i^2)/k$.
 Pick $\alpha:=\eps^2/2(\eps^{-1}+1)^2$ to bound the supercolumns cost by $\eps^2\ell(T_i)$.
 Next, use Lemma~\ref{lem:struct} to replace the intersection of $\OPT_i^1$ and each panel with another forest having the properties of the lemma.
 Let $\OPT_i^2$ be the new forest.
 The cost of the solution increases by no more than $c\eps\ell(\OPT_i)$.
 Furthermore, as a result, $\OPT_i^2$ crosses each panel at most $a(\eps)$ times.
 We claim,  provided that $h$ is sufficiently large compared to $a$, we can ensure that moving these intersection points to the portals introduces no more than an $\eps$ factor to the cost.

Consider a panel $P$, with boundaries $C_1, B, C_2, R$.
Connect each intersection point of the panel to its closest portal.
Each connection on a panel $P$ moves by at most $\tau(P)$. (In fact, it is at most $\tau(P)/2$.)
The total movement of each panel is at most $a\tau(P)=a(\ell(R)+\ell(B))/h$.
Hence, the total additional cost for all panels of $H_i$  is
no more than $a/h\cdot \ell(G_i^2) = a/h\cdot (\eps^{-1}+1)^2\ell(T_i)$.
We pick $h(\eps) := \eps^2 a(\eps) (\eps^{-1}+1)^2$.
Thus, the addition is at most $\eps^2\ell(T_i)$.

Finally, we replace the forests inside each panel $P$ by the Steiner trees provisioned in the last step of our spanner construction.
Take a panel $P$ with the set of portals $\Pi$.
Let $K_1, K_2, \dots$ be the connected components of $\OPT_i^2$ inside $P$.
Each intersection point is connected to a portal of $P$.
Replace each $K_j$ by the optimal Steiner tree connecting corresponding to this subset of portals.
This procedure does not increase the length and produces a graph $\OPT_i^3$.

Clearly, $\OPT_i^3$ satisfies all the demands in $\DD_i$.
Thus, the union of all forests $\OPT_i^3$, henceforth referred to as $\OPT^\ast$, gives a solution for the given Steiner forest instance.
It only remains to bound the cost of $\OPT^\ast$.
We know that
\begin{align}
\ell(\OPT^\ast) &\leq \sum_i\OPT_i^3, \\
\intertext{which may be strict due to common edges between different $\OPT_i^3$ forests.
Replacing the Steiner trees by the optimal Steiner trees between portals cannot increase the cost, so, the only cost increase comes from connections to portals.  Thus, we get
}
                &\leq \sum_i (\OPT_i^2 + \eps^2\ell(T_i)). \label{eqn:5}
\end{align}
From the above discussion,
\begin{align}
\ell(\OPT_i^2) &\leq (1+c\eps)\ell(\OPT_i^1), &\text{by Lemma~\ref{lem:struct}, and}\label{eqn:6}\\
\ell(\OPT_i^1) &\leq \ell(\OPT_i)+\eps^2\ell(T_i) &\text{due to supercolumns' length}. \label{eqn:7}
\end{align}
Hence, the cost of $\OPT^\ast$ is
\begin{align*}
\ell(\OPT^\ast)
   &\leq \sum_i \left[(1+c\eps)\OPT_i + \eps^2(2+c)\ell(T_i)\right] &\text{by \eqref{eqn:5}, \eqref{eqn:6} and \eqref{eqn:7}}\\
   &=    \sum_i \left[(1+c\eps)\OPT_i\right] + \sum_i\left[\eps^2(2+c)\ell(T_i)\right] \\
   &\leq (1+c\eps)(1+\eps)\OPT  +  \eps^2(2+c)\sum_i\ell(T_i) &\text{by Theorem~\ref{thm:break}} \\
   &\leq (1+c\eps)(1+\eps)\OPT  +  \eps^2(2+c)(4/\eps+2)\OPT &\text{by Theorem~\ref{thm:break}}\\
   &= \left[1 + (c+1)\eps+\eps^2 + (2+c)(4\eps + 2\eps^2) \right]\OPT \\
   &\leq    (1+c'\eps)\OPT,
\end{align*}
if we pick $c' \leq (c+1)+\eps+(2+c)(4+2\eps)$.
This can be achieved via $c' := 14 + 7c$ assuming $\eps \leq 1$.
\end{proof}

\paragraph{Runnig time}
In the above proof, $c'$ is a fixed constant independent of $\eps$.
It can be easily verified that $h$ and $\alpha$ have polynomial dependence on $\eps$.  Therefore, $f(\eps)$ has a singly exponential dependence on $\eps$.
Borradaile et al.\ show how to carry out the above steps in time $O(n\log n)$ for each $H_i$, with the contant having singly exponential dependence on $\eps$.
There are at most $n$ such subgraphs, hence, the running time of this step is $O(n^2\log n)$.
Furthermore, Algorithm~\ref{alg:break} can be implemented to run in time $O(n^2\log n)$.
We do not explicitly store all the $y_{S,v}$ variables; rather, we only keep track of those which are relevant in the algorithm.
The growth stage has $O(n)$ iterations, and as a result, the set $\eS$ will have size $O(n)$.
A simple implementation guarantees $O(n^2\log n)$ time for the growth step; see~\cite{GW95}.
The pruning stage can also be implemented with a similar time bound:
We maintain a priority queue for the clusters in $\B$---the key for each cluster $S$ is $|F\cap\delta(S)|$.
Notice that there are at most $O(n)$ iterations in the \textbf{while} loop.
In each iteration, we perform $O(n)$ updates to the clusters of the queue as a result of removing an edge.

\begin{proof}[Proof of Theorem~\ref{thm:spanner}]
 Immediate from Lemmas~\ref{lem:shortness} and \ref{lem:spanning} if
 we use $\eps/c$ instead of $\eps$ in the above construction.
\end{proof}

\fi
\section{The PTAS for planar Steiner forest}\label{sec:main-ptas}
\ifabstract
 In Section~\ref{sec:spanner} we show how to build a spanner for
the input graph $G_{in}(V_{in}, E_{in})$ with respect to the demand
set $\DD$.
From Theorem~\ref{thm:break}, we obtain a set of
trees $\{T_1, \dots, T_k\}$, associated with a partition of demands
$\{\DD_1, \dots, \DD_k\}$: tree $T_i$ connects all the demands
$\DD_i$, and the total cost of trees $T_i$ is in $O(\OPT)$.
The construction goes along the same lines as that of Borradaile et
al.~\cite{BKM07:planar}, yet there are certain differences,
especially in the analysis. 
We separately build a graph $H_i$ for each $T_i$, and finally let
$H$ be the union of all graphs $H_i$.
\fi

Having proved the spanner result, we can present our main PTAS for
\prob{Steiner forest} on planar graphs in this section. We first
mention two main ingredients of the algorithm. We invoke the
following result due to Demaine, Hajiaghayi and Mohar~\cite{DHM07}.
\begin{theorem}[\cite{DHM07}]\label{thm:contraction-decomposition}
 For a fixed genus $g$, and any integer
$k \geq 2$, and for every graph $G$ of Euler genus at most $g$,
the edges of $G$ can be partitioned into $k$ sets such that
contracting any one of the sets results in a graph of
treewidth at most $O(g^2 k)$. Furthermore, such a partition
can be found in $O(g^{5/2} n^{3/2} \log n)$ time.
\end{theorem}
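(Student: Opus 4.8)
The plan is to prove the statement first in the planar case (target treewidth $O(k)$) and then lift it to bounded genus, reading off the running time from the subroutines involved.

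\textbf{Planar case via layering and duality.} First I would compute a breadth-first layering $V_0,V_1,\dots,V_L$ of $G$ from an arbitrary root, handling connected components separately. Baker's classical observation is that deleting all edges between consecutive layers $V_j,V_{j+1}$ with $j\equiv i\pmod k$ leaves a graph that falls apart into ``slabs,'' each spanning at most $k$ consecutive layers; attaching a virtual apex to the topmost layer of a slab raises treewidth by at most $1$ and makes the slab have radius at most $k$, and a planar graph of radius $r$ has treewidth $O(r)$ (linear local treewidth). So each of the $k$ deletion classes, once removed, leaves a graph of treewidth $O(k)$. To convert this into a \emph{contraction} statement I would run the whole argument in the planar dual $G^{\ast}$: deleting $e^{\ast}$ in $G^{\ast}$ is the same as contracting $e$ in $G$, and treewidth changes by at most $1$ under planar duality. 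Translating a Baker deletion partition of $E(G^{\ast})$ back to $E(G)$ then yields $E_1,\dots,E_k$ with $\tw(G/E_i)=O(k)$ for every $i$.

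\textbf{Bounded genus.} For genus $g$ I would work directly in the primal and reduce each slab to a planar piece by cutting the surface. Concretely, I would compute a $2$-cell embedding of $G$ and then build a ``cut structure'' of topological size $O(g)$ --- a system of $O(g)$ loops, or a small cut graph, that planarizes the surface --- chosen compatibly with the layering so that every slab becomes planar after cutting along it. This costs two multiplicative factors of $g$: one because the cut structure meets every slab and effectively multiplies the size of the slab's top-layer interface by $O(g)$, and one because the local-treewidth bound for genus $g$ reads $\tw(H)=O(g\cdot r_H)$ (with $r_H$ the radius of $H$) rather than $O(r_H)$. Composing, each cut slab with its virtual apex has treewidth $O(g^2 k)$; a standard gluing/uncrossing step then transfers this bound to the uncut slab and hence to $G/E_i$ for the corresponding partition.

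\textbf{Running time and main obstacle.} Computing the layering and forming the partition is linear; the bottleneck is producing the embedding and the cut structure (equivalently, short topologically essential cycles) on the genus-$g$ surface, which known algorithms carry out in $O(g^{5/2}n^{3/2}\log n)$ time. The delicate point --- and where the bulk of the work lies --- is the genus reduction: arranging a cut that is \emph{simultaneously} compatible with the BFS layering and planarizes each slab, while paying only a $\poly(g)$ factor (in fact $O(g^2)$) in treewidth rather than a factor exponential in $g$.
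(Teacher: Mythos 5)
The first thing to say is that the paper does not prove this statement: Theorem~\ref{thm:contraction-decomposition} is imported verbatim from Demaine, Hajiaghayi and Mohar~\cite{DHM07} and is used purely as a black box in the proof of Theorem~\ref{thm:main}. There is therefore no in-paper proof to compare yours against; what you have written is a reconstruction of the argument of the cited work, and if your goal were to match this paper, citing the result (as the authors do) is the correct and complete move.

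Judged as a reconstruction of~\cite{DHM07}, your outline does capture the known strategy: the planar case is handled by a BFS-level-mod-$k$ edge partition together with planar duality (deleting $e^{\ast}$ in $G^{\ast}$ corresponds to contracting $e$ in $G$, primal and dual treewidth differ by at most one, and $k$ consecutive BFS layers of a planar graph have treewidth $O(k)$), and the bounded-genus case is handled by cutting the surface along a structure of topological complexity $O(g)$, with the stated running time coming from the surface-cutting subroutines. But the outline is not a proof. Essentially all of the content for $g>0$ sits in the step you explicitly defer: constructing a cut graph that is simultaneously compatible with the layering and planarizes each slab, and showing the resulting treewidth penalty is only $O(g^{2})$. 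The sentence ``a standard gluing/uncrossing step then transfers this bound to the uncut slab'' is an assertion, not an argument --- identifying the two sides of a cut can increase treewidth substantially in general, and bounding that increase is precisely where the $g^{2}$ factor is earned in~\cite{DHM07}. Likewise, the claim that the cut structure only ``multiplies the size of the slab's top-layer interface by $O(g)$'' needs a concrete construction (in~\cite{DHM07} this is done by building the cut graph out of BFS paths plus $O(g)$ extra edges, so that cutting respects the levels). Until that step is carried out, the proposal is a plausible plan rather than a proof of the theorem.
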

As a corollary, this holds for a planar graph (which has genus 1).

We can now prove the main theorem of this
paper. Algorithm~\ref{alg:main-ptas} (\algo{PSF-PTAS}) shows the steps
of the PTAS.
\begin{algorithm}
\caption{\algo{PSF-PTAS}\label{alg:main-ptas}}
\textbf{Input:} planar graph $G_{in}(V_{in}, E_{in})$, and set of demands $\DD$.\\
\textbf{Output:} Steiner forest $F$ satisfying $\DD$.
\begin{algorithmic}[1]
 \STATE Construct the Steiner forest spanner $H$.
 \STATE Let $k \gets 2f(\eps)/\bar\eps$.
 \STATE Let $\eps \gets \min(1,\bar\eps/6)$.
 \STATE Using Theorem~\ref{thm:contraction-decomposition}, partition the edges of $H$ into $E_1, \dots, E_k$.
 \STATE Let $i^\ast \gets \arg\min_i \ell(E_i)$.
 \STATE Find a $(1+\eps)$-approximate Steiner forest $F^\ast$ of $\DD$ in $H/E_{i^\ast}$ via Theorem~\ref{th:boundedtwptas}.
 \STATE Output $F^\ast\cup E_{i^\ast}$.
\end{algorithmic}
\end{algorithm}
\begin{proof}[Proof of Theorem~\ref{thm:main}]
Given are planar graph $G_{in}$, and the set of demand pairs $\DD$.
We build a Steiner forest spanner $H$ using Theorem~\ref{thm:spanner}.
For a suitable value of $k$ whose precise value will be fixed below,
we apply Theorem~\ref{thm:contraction-decomposition} to partition the
edges of $H$ into $E_1, E_2, \dots, E_k$.
Let $E_{i^\ast}$ be the set having the least total length.
The total length of edges in $E_{i^\ast}$ is at most $\ell(H)/k$.
Contracting $E_{i^\ast}$ produces a graph $H^\ast$ of treewidth $O(k)$.

Theorem~\ref{th:boundedtwptas} allows us to find a solution $\OPT^\ast$ corresponding to $H^\ast$.
Adding the edges $E_{i^\ast}$ clearly gives a solution for $H$ whose
value is at most $(1+\eps)\OPT_\DD(H) + \ell(H)/k$.
Letting $\eps = \min(1,\bar\eps/6)$ and $k = 2f(\eps)/\bar\eps$  guarantees that
the cost of this solution is
\begin{align*}
&\leq (1+\eps)^2\OPT_\DD(G_{in})  +\ell(H)/k &\text{by Theorem~\ref{thm:spanner}}\\
&\leq (1+\eps)^2\OPT_\DD(G_{in})  + \frac{\bar\eps}{2}\OPT_\DD(G_{in})&\text{by Theorem~\ref{thm:spanner} and choice of $k$}\\
&(1+\bar\eps)\OPT_\DD(G_{in}) &\text{by the choice of $\eps$}.&\qedhere
\end{align*}
\end{proof}

The running time of all the algorithm except for the bounded-treewidth PTAS is bounded by $O(n^2\log n)$.
The parameter $k$ above has a singly exponential dependence on $\eps$.
Yet, the running time of the current procedure for solving the
bounded-treewidth instances is not bounded by a low-degree polynomial;
rather, $k$ amd $\epsilon$ appear in the exponent in the polynomial.
Were we able to improve the running time of this procedure, we would obtain a PTAS that runs in time $O(n^2\log n)$.

\subsection{Extension to bounded-genus graphs}\label{sec:genus}
We can generalize this result to the case of bounded-genus graphs.
Theorem~\ref{thm:break} does not assume any special structure for the input graph.
The ideas of Section~\ref{sec:spanner} can be generalized to the case of bounded-genus graphs similarly to the work of Borradaile et al.~\cite{BDT09:genus}.  This process does not increase the Euler genus of the graph, since the resulting graph has a subset of original edges.
Theorem~\ref{thm:contraction-decomposition} works for such graphs as well, and hence, as in Theorem~\ref{thm:main}, we can reduce the problem to a bounded-treewidth graph on which we apply Theorem~\ref{th:boundedtwptas}.

\begin{theorem}
 For any constant $\eps > 0$, there is a polynomial-time $(1+\eps)$-approximation algorithm for the graphs of Euler genus at most $g$.
\end{theorem}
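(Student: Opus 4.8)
The plan is to follow the three-step recipe used to prove Theorem~\ref{thm:main}, checking that each ingredient survives the passage from planar graphs to graphs of Euler genus at most $g$. First I would build a Steiner forest spanner $H$ for the input graph $G_{in}$ and the demand set $\DD$. The prize-collecting clustering of Theorem~\ref{thm:break} makes no assumption on the structure of $G_{in}$, so it applies verbatim, producing trees $T_1,\dots,T_k$ and a partition $\DD_1,\dots,\DD_k$ with $\sum_i \ell(T_i) = O(\OPT)$ and $\sum_i \OPT_{\DD_i}(G_{in}) \le (1+\eps)\OPT$. For each $T_i$ I would run the spanner construction of Section~\ref{sec:spanner}, replacing the planar strip/column/panel/portal machinery with its bounded-genus analogue developed by Borradaile, Demaine and Tazari~\cite{BDT09:genus}. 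As in the planar case, the resulting $H_i$ has length $f(\eps)\ell(T_i)$ and contains a $(1+O(\eps))$-approximate Steiner forest for $\DD_i$; taking $H := \bigcup_i H_i$ and summing the guarantees of Theorem~\ref{thm:break} yields the Spanning and Shortness properties, i.e., $\OPT_{\DD}(H) \le (1+O(\eps))\OPT_{\DD}(G_{in})$ and $\ell(H) \le f(\eps)\OPT_{\DD}(G_{in})$. Crucially, $H$ is a subgraph of $G_{in}$, so its Euler genus is still at most $g$.

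Second, I would apply Theorem~\ref{thm:contraction-decomposition} to $H$: for a parameter $k$ to be fixed, partition $E(H)$ into $E_1,\dots,E_k$ so that contracting any single $E_j$ yields a graph of treewidth $O(g^2 k)$. Choosing $E_{i^\ast}$ of minimum total length gives $\ell(E_{i^\ast}) \le \ell(H)/k \le f(\eps)\OPT/k$; set $H^\ast := H/E_{i^\ast}$, which has treewidth $O(g^2 k)$. Third, I would invoke the bounded-treewidth PTAS of Theorem~\ref{th:boundedtwptas} on $H^\ast$ to obtain a $(1+\eps)$-approximate Steiner forest, uncontract it to a forest in $H$, and add back $E_{i^\ast}$. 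Since contracting edges does not increase $\OPT$ and uncontracting adds at most $\ell(E_{i^\ast})$ to the cost, the resulting forest satisfies $\DD$ in $G_{in}$ and has cost at most $(1+\eps)\OPT_{\DD}(H) + f(\eps)\OPT/k \le (1+O(\eps))\OPT_{\DD}(G_{in}) + f(\eps)\OPT/k$. Taking $\eps$ to be a small enough constant fraction of the desired accuracy and $k = \Theta(f(\eps)/\eps)$ makes this at most $(1+\eps)\OPT_{\DD}(G_{in})$, as claimed. Every step runs in time polynomial in $n$; only the treewidth-PTAS step carries an exponent depending on $\eps$ and $g$ (through the treewidth bound $O(g^2 k)$).

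The main obstacle is the genus-specific adaptation of the spanner construction in Section~\ref{sec:spanner}. The planar construction cuts the graph open along a doubled Eulerian traversal of $T_i$ and then reasons about strips, columns and panels that all live in a disk; in genus $g$ the surface obtained after this cut need not be a sphere, so one must additionally perform surgery along the bounded number of noncontractible cycles associated with the handles before the planar-style arguments (the lemmas bounding boundary, column and panel Steiner-tree length, and the structural lemma on portals) can be invoked. This is precisely the content of the bounded-genus spanner framework of Borradaile, Demaine and Tazari~\cite{BDT09:genus}, and since the spanner it produces consists of a subset of the original edges, the Euler genus is not increased, which is what keeps Theorem~\ref{thm:contraction-decomposition} applicable in the second step. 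Apart from this adaptation, the proof is a direct transcription of the argument for Theorem~\ref{thm:main}.
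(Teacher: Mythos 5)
Your proposal is correct and follows the paper's own argument: the paper likewise observes that Theorem~\ref{thm:break} is structure-agnostic, defers the bounded-genus spanner construction to the framework of Borradaile et al.~\cite{BDT09:genus}, notes that the spanner is a subgraph and hence does not increase the Euler genus, and then reuses Theorem~\ref{thm:contraction-decomposition} and Theorem~\ref{th:boundedtwptas} exactly as in the proof of Theorem~\ref{thm:main}. Your write-up is in fact somewhat more explicit than the paper's one-paragraph treatment, particularly in spelling out the parameter choices and the surgery needed before the planar strip/panel machinery applies.
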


\ifappendix
\section{Missing proofs from
  Section~\ref{sec:ptas-bound-treew}}\label{sec:missing-btw}
\fi
\ifmain  
\section{PTAS for graphs of bounded treewidth}\label{sec:ptas-bound-treew}
\iffull
The purpose of this section is to prove
Theorem~\ref{th:boundedtwptas} by presenting a PTAS for
\prob{Steiner forest} on graphs of bounded treewidth.
\subsection{Groups}
\fi

\ifabstract
The purpose of this section is to sketch the proof of
Theorem~\ref{th:boundedtwptas}. For definitions concerning tree
decompositions, we refer the reader to Appendix~\ref{app:ContDef}.
\fi
We define a notion of group that will be crucial in the algorithm. A
group is defined by a set $S$ of center vertices, a set $X$ of
``interesting'' vertices, and a maximum distance $r$; the {\em group} $\Group_G(X,S,r)$
contains $S$ and those vertices of $X$ that are at distance at most
$r$ from some vertex in $S$.

\begin{lemma}\label{lem:treecluster}
Let $T$ be a Steiner tree of $X\subseteq V(G)$ with cost
$W$. For every  $\epsilon>0$, there is a set $S\subseteq X$ of
$O(1+1/\epsilon)$ vertices such that $X=\Group_G(X,S,\epsilon W)$.
\end{lemma}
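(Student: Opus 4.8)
The plan is to select the center set $S$ by walking along the Steiner tree $T$ and dropping a new center every time we have travelled a distance of roughly $\epsilon W$. Concretely, I would first double every edge of $T$ and take an Eulerian closed walk $v_0, v_1, \dots, v_m = v_0$ of total length $2W$ that visits every vertex of $X$ (each vertex of $X$ appears somewhere on the walk). Now process the walk from the start, maintaining a ``budget'': let $s_1 = v_0$ be the first center, and as we move along the walk keep a running sum of the edge lengths traversed since the last center was placed; as soon as this running sum would exceed $\epsilon W$, declare the current vertex to be the next center $s_j$ and reset the running sum to zero. Let $S = \{s_1, s_2, \dots\}$ be the resulting set; intersecting with $X$ if necessary (or choosing each center to be a nearby vertex of $X$ along the walk) so that $S \subseteq X$.

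The bound on $|S|$ is immediate: between consecutive centers we consume strictly more than $\epsilon W$ units of walk length (except possibly for the final leftover segment), and the walk has total length $2W$, so the number of centers is at most $2W/(\epsilon W) + O(1) = O(1 + 1/\epsilon)$. For the covering property, take any $x \in X$. Then $x = v_t$ for some position $t$ on the walk. Let $s_j$ be the most recent center placed at or before position $t$. By construction, the portion of the walk from $s_j$ to $v_t = x$ has length at most $\epsilon W$ (that is exactly the rule that forces a new center once the budget is exceeded). Since this portion of the walk is itself a path in $G$ from $s_j$ to $x$, we get $\dist_G(s_j, x) \le \epsilon W$, hence $x \in \Group_G(X, S, \epsilon W)$. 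As $S \subseteq X$ trivially implies $S \subseteq \Group_G(X,S,\epsilon W)$ as well, we conclude $X = \Group_G(X, S, \epsilon W)$.

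The one point that needs a little care, and which I expect to be the main (minor) obstacle, is the bookkeeping at the endpoints of the walk and the requirement $S \subseteq X$: when I ``reset the budget'' I must make sure that the vertex I declare to be a center is actually in $X$, or else replace it by the next vertex of $X$ encountered, absorbing the small extra distance into the $\epsilon W$ slack (this only costs a constant factor in the budget, which is harmless for an $O(1+1/\epsilon)$ bound). One also has to handle the degenerate case $W = 0$, where $T$ is a single vertex (or $X$ is a single vertex), in which case $S = X$ works directly. I would phrase the argument so that each center is chosen \emph{at} a vertex of $X$ visited by the walk and the inter-center walk length is at most $\epsilon W$, which keeps both the cardinality and the covering estimate clean.
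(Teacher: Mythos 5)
Your proof is correct, and it rests on the same underlying fact as the paper's --- doubling the edges of $T$ yields a closed walk of length $2W$ that visits all of $X$ --- but packages it differently. The paper greedily selects a maximal subset $S\subseteq X$ of vertices that are pairwise at distance more than $\epsilon W$; maximality gives the covering property for free (any uncovered $x$ could have been added), and $S\subseteq X$ is automatic. The cardinality bound then comes from comparing two estimates of a shortest closed tour through $S$: it exceeds $|S|\epsilon W$ because consecutive stops are far apart, yet is at most $2W$ by the doubled-tree tour. You instead use the $2W$ tour directly as the object along which centers are dropped after every $\epsilon W$ of traversed length. Both arguments are sound and equally short; the paper's has the small advantage that the requirement $S\subseteq X$ never becomes an issue, which is precisely the one spot where your version needs the extra care you yourself flag. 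For the record, the clean way to handle it in your framework, with no constant-factor loss in the radius, is to run the budget rule on the subsequence of occurrences of $X$-vertices along the Euler tour: let $s_1$ be the first $X$-vertex on the walk and, having placed $s_j$, let $s_{j+1}$ be the first later $X$-vertex whose walk-distance from $s_j$ exceeds $\epsilon W$. Every $X$-vertex between consecutive centers is then within walk-length (hence graph distance) $\epsilon W$ of the earlier center, and consecutive centers consume more than $\epsilon W$ of the $2W$ tour, giving $|S|\le 1+2/\epsilon$ exactly as in the paper.
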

\fi
\iffullorapp
\ifappendix \subsection{Proof of Lemma~\ref{lem:treecluster}}\fi
\begin{proof}
  Let us select vertices $s_1$, $s_2$, $\dots$ from $X$ as long as
  possible, with the requirement that the distance of $s_i$ is more
  than $\epsilon W$ from every $s_j$, $1\le j < i$. Suppose that $s_t$
  is the last vertex selected this way. We claim that $t\le
  1+2/\epsilon$. Consider a shortest closed tour in $G$ that visits
  the vertices $s_1$, $\dots$, $s_t$. As the distance between any two
  such vertices is more than $\epsilon W$, the total length of the tour
  is more than $t\epsilon W$ (assuming that $t> 1$). On the other
  hand, all these vertices are on the tree $T$ and it is well known
  that there is a closed tour that visits every vertex of the tree in
  such a way that every edge of the tree is traversed exactly twice
  and no other edge of the graph is used. Hence the shortest tour has
  length at most $2W$ and $t\le 2/\epsilon$ follows.
\end{proof}
The following consequence of the definition of group is easy to see:
\begin{proposition}\label{prop:joincluster}
If $S_1,S_2,X_1,X_2$ are subsets of vertices and $r_1,r_2$ are real
numbers, then \[\Group_G(X_1,S_1,r_1)\cup \Group_G(X_2,S_2,r_2)\subseteq
\Group_G(X_1\cup X_2,S_1\cup S_2,\max\{r_1, r_2\}).\]
\end{proposition}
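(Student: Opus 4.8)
The plan is a direct verification by unwinding the definition of $\Group_G$. Let $v$ be an arbitrary vertex of the left-hand side $\Group_G(X_1,S_1,r_1)\cup \Group_G(X_2,S_2,r_2)$; the goal is to show $v\in\Group_G(X_1\cup X_2,S_1\cup S_2,\max\{r_1,r_2\})$. By the symmetry between the two indices it suffices to treat the case $v\in\Group_G(X_1,S_1,r_1)$, and then I would split on the two ways this membership can arise according to the definition of a group: either $v$ is a center vertex, or $v$ is a vertex of $X_1$ covered by some center.

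In the first case $v\in S_1\subseteq S_1\cup S_2$, and since every center vertex of a group belongs to that group regardless of distances, $v\in\Group_G(X_1\cup X_2,S_1\cup S_2,\max\{r_1,r_2\})$ immediately. In the second case $v\in X_1$ and $\dist_G(v,s)\le r_1$ for some $s\in S_1$. Then $v\in X_1\cup X_2$, the same $s$ lies in $S_1\cup S_2$, and $\dist_G(v,s)\le r_1\le\max\{r_1,r_2\}$; hence $v$ is a vertex of $X_1\cup X_2$ within distance $\max\{r_1,r_2\}$ of a center in $S_1\cup S_2$, so again $v\in\Group_G(X_1\cup X_2,S_1\cup S_2,\max\{r_1,r_2\})$. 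The case $v\in\Group_G(X_2,S_2,r_2)$ is identical after swapping the subscripts $1$ and $2$, using $r_2\le\max\{r_1,r_2\}$. Since $v$ was arbitrary, the claimed inclusion follows.

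There is essentially no obstacle here: the only point that needs care is the bookkeeping of the case split, namely that the definition of $\Group_G(X,S,r)$ places $S$ into the group unconditionally, so the center vertices must be handled separately from the "covered" vertices of $X$. Once that split is in place, the three monotonicity facts $S_i\subseteq S_1\cup S_2$, $X_i\subseteq X_1\cup X_2$, and $r_i\le\max\{r_1,r_2\}$ do all the work, and no computation is involved.
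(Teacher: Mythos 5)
Your proof is correct: the paper states this proposition without proof (as an immediate consequence of the definition of $\Group_G$), and your direct unwinding of the definition---splitting into the case where $v$ is a center in $S_i$ and the case where $v\in X_i$ is covered within distance $r_i$, then using the three monotonicity facts---is exactly the intended argument. Nothing is missing.
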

\fi

\iffull
\subsection{Conforming solutions}
\fi

\ifmain
Let $\B=(B_i)_{i=1\dots n}$ be the bags of a rooted nice tree
decomposition of width $k$. Let $V_i$ be the set of vertices appearing in $B_i$ or
in a descendant of $B_i$. Let $A_i$ be the set of {\em active
  vertices} at bag $B_i$: those vertices $v\in V_i$ for which there is
a demaind $\{v,w\}\in \DD$ with $w\not \in V_i$. Let $G_i:=G[V_i]$. A
Steiner forest $F$ induces a partition $\pi_i(F)$ of $A_i$ for every
$i=1,\dots, n$: let two vertices of $A_i$ be in the same class of
$\pi_i(F)$ if and only if
they are in the same component of $F$. Note that if $F$ is restricted
to $G_i$, then a component of $F$ can be split into up to $k+1$
components, thus $\pi_i(F)$ is a coarser partition then the partition
defined by the components of the restriction of $F$ to $G_i$.

Let $\Pi=(\Pi_i)_{i=1\dots n}$ be a collection such that $\Pi_i$ is a
set of partitions of $A_i$.  If $\pi_i(F)\in \Pi_i$ for every bag
$B_i$, then we say that $F$ {\em conforms} to $\Pi$.
\iffull The aim of this subsection is to give a polynomial-time
algorithm for bounded treewidth graphs that finds a minimum cost
solution conforming to a given $\Pi$ (for fixed $k$, the running time
is polynomial in the size of the graph and the size of the collection
$\Pi$ on graph with treewidth at most $k$). In
Section~\ref{sec:constr-part}, we construct a polynomial-size
collection $\Pi$ such that there is a $(1+\epsilon)$-approximate
solution that conforms to $\Pi$. Putting together these two results,
we get a PTAS for the Steiner forest problem on bounded treewidth
graphs. 
 \fi
\ifabstract The following lemma gives an algoritm for finding a
solution whose cost is minimum among the solutions conforming to a
given $\Pi$. The proof follows the standard dynamic programming
approach, but quite tedious and technical, see Appendix~\ref{sec:missing-btw}. \fi
\begin{lemma}\label{lem:findconforming}
For every fixed $k$, there is a polynomial time algorithm that, given
a graph $G$ with treewidth at most $k$ and a collection $\Pi$, finds
the minimum cost Steiner forest conforming to $\Pi$.
\end{lemma}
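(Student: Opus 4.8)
The plan is to run a dynamic program on the nice tree decomposition, processing the bags from the leaves up to the root, and at bag $B_i$ storing a table indexed by the partitions $\pi\in\Pi_i$ of the active set $A_i$ together with the (finer) information about how the vertices of $B_i$ are grouped into components of the partial solution restricted to $G_i$. Concretely, for each bag $B_i$, each partition $\pi\in\Pi_i$, and each partition $\rho$ of $B_i$ that is consistent with $\pi$ (meaning $\rho$ "refines into" $\pi$ once we forget the non-active vertices, and conversely every class of $\pi$ meeting $B_i$ is a union of classes of $\rho$), we compute $D_i[\pi,\rho]$, the minimum cost of a forest $F$ in $G_i$ such that (a) the components of $F$ restricted to $V_i$ induce exactly the partition $\rho$ on $B_i$, (b) when components are further merged according to the structure of $F$ outside $V_i$ one can obtain $\pi$ — but since nothing outside $V_i$ has been committed yet, we only require that $\rho$ be a refinement of $\pi|_{B_i}$ and that every demand $\{u,w\}\subseteq V_i$ is already satisfied, and every active vertex class of $\pi$ is "hit" appropriately. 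The key point is that the number of partitions $\rho$ of $B_i$ is at most $(k+1)^{k+1}$, a constant, so the table has size $|\Pi_i|\cdot(k+1)^{O(k)}$, which is polynomial.

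The core of the argument is the case analysis for the four node types of a nice tree decomposition. For a \emph{leaf node}, the table is trivial. For an \emph{introduce node} $B_i=B_{i'}\cup\{v\}$, the new vertex $v$ has no edges to $V_{i'}\setminus B_{i'}$, so we extend each entry of the child by deciding which edges incident to $v$ inside $B_i$ to buy; $v$ either starts its own singleton class or is merged into an existing class of $\rho$, and we must check that $v$ has no demand partner outside $V_i$ forcing an inconsistency (i.e., if $v$ becomes active it must lie in a class of $\pi$). For a \emph{forget node} $B_i=B_{i'}\setminus\{v\}$, we project: we take the minimum over all child entries whose $\rho'$ restricts to a given $\rho$, we must verify that if $v$ had a demand partner it was already connected (so that dropping $v$ from the bag does not lose a connectivity requirement), and we update the active set accordingly. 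For a \emph{join node} $B_i=B_{i_1}=B_{i_2}$, we combine a table entry of each child: the partition of $B_i$ in the combined solution is the transitive closure of the union of $\rho_1$ and $\rho_2$, the cost is the sum minus the shared edges inside $B_i$ (which are counted in both), and we must check the union is consistent with the target $\pi\in\Pi_i$ and that no demand pair inside $V_i$ has been left disconnected. Throughout, the fact that $\Pi_i$ only contains the partitions we are allowed to use means the DP simply discards any $\rho$ that does not refine some member of $\Pi_i$; at the root, where $A_n$ is empty (all demands satisfied) and $B_n$ is a single vertex, the answer is the minimum table entry. Standard facts about nice tree decompositions (Lemma~\ref{lem:nicer} and the discussion preceding it) guarantee the decomposition has polynomial size and the assumed structure.

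The main obstacle I expect is the bookkeeping around the relationship between the three partition-like objects at play: the stored partition $\rho$ of the bag $B_i$ (which tracks components of the \emph{current partial} forest in $G_i$), the coarsening that will eventually happen because a component of $G_i$ can split off up to $k+1$ pieces that get reconnected higher in the tree, and the target partition $\pi_i(F)\in\Pi_i$ of the \emph{active} set $A_i$ that the conforming solution is supposed to realize. One has to be careful that a component of $G_i$ that contains \emph{no} active vertex can be freely discarded (it will never be reconnected and never needs to be), that a class of $\pi$ restricted to $B_i$ may be split among several classes of $\rho$ (they will be glued via paths outside $V_i$), and that demands with both endpoints in $V_i$ must be enforced \emph{at the moment the second endpoint is forgotten}, not merely at the root. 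Making these invariants precise and checking that each of the four update rules preserves them — and that every conforming solution is captured by some sequence of table entries, and conversely every entry corresponds to an actual forest — is the tedious but routine heart of the proof. Once the invariants are pinned down, correctness follows by a straightforward induction on the tree from the leaves to the root, and the polynomial running time is immediate from the constant bound on the number of bag-partitions $\rho$ and the polynomial size of $\Pi$ and of the tree decomposition.
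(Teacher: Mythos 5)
Your overall plan---dynamic programming over a nice tree decomposition with a table indexed by a target partition $\pi\in\Pi_i$ of $A_i$ and a constant-size partition of the bag $B_i$---is the same general shape as the paper's proof, but your state is missing an essential component, and this is not mere bookkeeping. The paper's subproblems are tuples $(i,H,\pi,\alpha,\beta,\mu)$ where, besides the partition $\alpha$ of $B_i$ induced by the partial forest (your $\rho$), one also guesses a \emph{coarser} partition $\beta$ of $B_i$ representing the connectivity that the final solution will eventually establish among bag vertices via edges \emph{outside} $V_i$, together with an \emph{injective} map $\mu$ from the classes of $\pi$ to the classes of $\beta$ recording which future component each class of $\pi$ attaches to. All correctness conditions are then stated relative to $F_i+\beta$ rather than $F_i$. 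Without $\beta$ and $\mu$, your state $(\pi,\rho)$ cannot (a) decide at a join node whether a demand pair split between the two subtrees will be satisfied (the paper checks $\mu^1(x_1^{\pi^1})=\mu^2(x_2^{\pi^2})$), (b) guarantee that the final partition of $A_i$ is \emph{exactly} $\pi$ rather than an uncontrolled coarsening of it (injectivity of $\mu$ prevents two classes of $\pi$ from attaching to the same future component), or (c) maintain the conforming condition for all descendant bags $A_d$, which also depends on connectivity established above $B_i$.

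A concrete symptom of the gap is your rule that a demand with both endpoints in $V_i$ ``must be enforced at the moment the second endpoint is forgotten.'' This is false: such a pair may be connected in the optimal forest by a path that leaves $V_i$ through $B_i$ and returns through vertices introduced higher in the tree. For example, take $u$--$x$, $x$--$z$, $z$--$y$, $y$--$w$ of length $1$ each and $x$--$y$ of length $100$, demand $\{u,w\}$, with $z$ introduced above the bags containing $u$ and $w$; forcing $u$ and $w$ to be connected inside $G_i$ before $z$ appears buys the length-$100$ edge and misses the optimum. The paper's condition (C5) requires connectivity only in $F_i+\beta$, i.e., modulo the promised external gluing, which is exactly what handles this case. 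So the heart of the proof you defer as ``tedious but routine'' in fact requires enriching the DP state with the guessed external connectivity and the $\pi$-to-bag attachment map before the induction goes through.
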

\fi
\iffullorapp
\ifappendix \subsection{Proof of Lemma~\ref{lem:findconforming}}\fi
The proof of Lemma~\ref{lem:findconforming} follows the standard
dynamic programming approach, but it is not completely trivial. First,
we use a technical trick that makes the presentation of the dynamic
programming algorithm simpler. We can assume that every terminal
vertex $v$ has degree 1: otherwise, moving the terminal to a new
degree 1 vertex $v'$ attached to $v$ with an edge $vv'$ having length
0 does not change the problem and does not increase treewidth. Thus by
Lemma~\ref{lem:nicer}, it can be assumed that we have a nice tree
decomposition $(T,\B)$ of width at most $k$ where no terminal vertex
is introduced and the join nodes contain no terminal vertices.
For the rest of the section, we fix such a tree decomposition and
notation $V_i$, $A_i$, etc.~refer to this fixed decomposition.

Let us introduce terminology and notation concerning partitions. A
partition $\alpha$ of a set $S$ can be considered as an equivalence
relation on $S$. Hence we use notation $(x,y)\in \alpha$ to say that
$x$ and $y$ are in the same class of $\alpha$. We denote by
$x^{\alpha}$ the class of $\alpha$ that contains element $x$.

If $F$ is a subgraph of $G$ and $S\subseteq V(G)$, then $F$ {\em
  induces} a partition $\alpha$ of $S$: $(x,y)\in \alpha$ if and only
if $x$ and $y$ are in the same component of $F$ (and every $x\in
S\setminus V(F)$ forms its own class). We say that partition $\alpha$
is {\em finer} than partition $\beta$ if $(x,y)\in \alpha$ implies
$(x,y)\in \beta$; in this case, $\beta$ is {\em coarser} than
$\alpha$. We denote by $\alpha_1\vee \alpha_2$ the unique finest
partition $\alpha$ coarser than both $\alpha_1$ and $\alpha_2$. This
definition is very useful in the following situation. Let $F_1$, $F_2$
be subgraphs of $G$, and suppose that $F_1$ and $F_2$ induces
partitions $\alpha_1$ and $\alpha_2$ of a set $S\subseteq V(G)$,
respectively. If $F_1$ and $F_2$ intersect only in $S$, then the
partition induced by the union of $F_1$ and $F_2$ is exactly
$\alpha_1\vee \alpha_2$. Let $\beta_i$ be a partition of $B_i$ for
some $i\in I$ and let $F_i$ be a subgraph of $G[V_i]$. Then we denote
by $F_i+\beta_i$ the graph obtained from $F_i$ by adding a new edge
$xy$ for every $(x,y)\in \beta_i$.

Following the usual way of designing algorithms for bounded-treewidth
graphs, we define several subproblems for each node $i\in I$. A
subproblem at node $i$ corresponds to finding a subgraph $F_i$ in
$G_i$ satisfying certain properties: $F_i$ is supposed to be the
restriction of a Steiner forest $F$ to
$V_i$. The properties defining a subproblem prescribe how
$F_i$ should look like form the ``outside world'' (i.e., from the part
of $G$ outside $V_i$) and they contain all the
information necessary for deciding whether $F_i$ can be extended, by edges
outside $V_i$, to a conforming solution. Let us discuss briefly and informally
what information these prescriptions should contain. Clearly, the edges of $F_i$ in
$B_i$ and the way $F_i$ connects the vertices of $B_i$ (i.e., the
partition $\alpha$ of $B_i$ induced by $F_i$) is part of this
information. Furthermore, the way $F_i$ partitions $A_i$ should also
be part of this information. However, there is a subtle detail that
makes the description of our algorithm significantly more technical.
The definition of $\pi_i(F)=\pi$ means that the components of $F$
partition $A_i$ in a certain way. But the restriction $F_i$ of $F$ to
$V_i$ might induce a finer partition of $A_i$ than $\pi$: it is possible that two
components of $F_i$ are in the same component of $F$. This means that we cannot
require that the partition of $A_i$ induced by $F_i$ belongs to
$\Pi$. We avoid this problem by ``imagining'' the partition $\beta$ of
$B_i$ induced by the full solution $F$, and require that $F_i$
partition $A_i$ according to $\pi$ if each class of $\beta$ becomes
connected somehow. In other words, instead of requiring that $F_i$ itself
partitions $A_i$ in a certain way, we require that $F_i+\beta$
induces a certain partition.

Formally, each subproblem $P$ is defined by a
tuple $(i,H,\pi, \alpha,\beta,\mu)$, where
\begin{enumerate}
\item[(S1)] $i\in I$ is a node of $T$,
\item[(S2)] $H$ is a spanning subgraph of $G[B_i]$ (i.e., contains all
  vertices of $G[B_i]$),
\item[(S3)] $\pi\in \Pi_i$ is a partition of $A_i$,
\item[(S4)] $\alpha$, $\beta$ are partitions of $B_i$ and $\beta$ is
  coarser than $\alpha$,
\item[(S5)] $\mu$ is an injective mapping from the classes of $\pi_i$ to the classes of
  $\beta$.
\end{enumerate}
The solution $c(i,H,\pi, \alpha,\beta,\mu)$ of a subproblem $P$ is the
minimum cost of a subgraph $F_i$ of $G[V_i]$ satisfying all of the
following requirements:
\begin{enumerate}
\item[(C1)] $F_i[B_i]=H$ (which implies $B_i\subseteq V(F_i)$).
\item[(C2)] $\alpha$ is the partition of $B_i$ induced by $F_i$.
\item[(C3)] The partition of $A_{i}$
  induced by
  $F_i+\beta$ is $\pi$.
\item[(C4)] For every descendant $d$ of $i$, the partition of $A_{d}$
  induced by
  $F_i+\beta$ belongs to $\Pi_{d}$.
\item[(C5)] If there is a terminal pair $(x_1,x_2)$ with $x_1,x_2\in V_i$, then
 they are connected in $F_i+\beta$.
\item[(C6)]
Every $x\in A_i$ is in the component of $F_i+\beta$ containing $\mu(x^{\pi})$.
\end{enumerate}

We solve these subproblems by bottom-up dynamic programming. Let us
discuss how to solve a subproblem depending on the type of $i$.

\textbf{Leaf nodes $i$.} If $i$ is a leaf node, then value of the
solution is trivially 0.

\textbf{Join node $i$ having children $i_1$, $i_2$.} Note that
$A_{i_1}$ and $A_{i_2}$ are disjoint: the vertices of a join node are
not terminal vertices. The set $A_i$ is a subset of $A_{i_1}\cup
A_{i_2}$ and it may be a proper subset: if there is a pair $(x,y)$ with
$x\in A_{i_1}$, $y\in A_{i_2}$, then $x$ or $y$ might not be in $A_i$.

We show that the value of the
subproblem is
\begin{equation}
c(i,H,\pi, \alpha,\beta,\mu)=\min_{(J1),(J2),(J3),(J4)}(c(i_1,H,\pi^1,
\alpha^1,\beta,\mu^1)+c(i_2,H,\pi^2,
\alpha^2,\beta,\mu^2)-\ell(H)), \label{eq:join}
\end{equation}
where the minimum is taken
over all tuples satisfying, for $\myell=1,2$, all of the following
conditions:
\begin{enumerate}
\item[(J1)] $\alpha^1\vee \alpha^2=\alpha$.
\item[(J2)] $\pi$ and $\pi^{\myell}$ are the same on $A_{i_\myell}\cap
  A_i$.
\item[(J3)] For every $v\in A_{i_\myell}\cap A_i$, $\mu(v^{\pi})=\mu^\myell(v^{\pi^\myell})$.
\item[(J4)] If there is a terminal pair $(x_1,x_2)$ with $x_1\in A_1$ and
  $x_2\in A_2$, then $\mu^1(x_1^{\pi^1})=\mu^2(x_2^{\pi^2})$.
\end{enumerate}

We will use the following observation repeatedly. Let $F$ be subgraph
of $G_i$ and let $F^\myell=F[V_{i_\myell}]$. Suppose that $F$ induces
partition $\alpha$ on $B_i$ and $\beta$ is coarser than $\alpha$.
Then two vertices of $V_{i_\myell}$ are connected in $F+\beta$ if and only
if they are connected in $F^\myell+\beta$. Indeed, $F^{3-\myell}$ does not
provide any additional connectivity compared to $F^\myell$: as $\beta$ is coarser than $\alpha$,
if two vertices of $B_i$ are connected in $F^{3-\myell}$, then they are
already adjacent in $F^\myell+\beta$.
\medskip

\dynproof{eq:join}{$\le$}

Let $P_1=(i_1,H,\pi^1,
\alpha^1,\beta,\mu^1)$ and $P_2=(i_2,H,\pi^2,
\alpha^2,\beta,\mu^2)$ be subproblems minimizing the right hand side
of \eqref{eq:join}, let $F^1$
and $F^2$ be optimum solutions of $P_1$ and $P_2$,
respectively.
Let $F$ be the union of subgraphs
$F^1$ and $F^2$. It is clear that the cost of $F$ is exactly the
right hand side of \eqref{eq:join}: the common edges of $F^1$ and
$F^2$ are exactly the edges of $H$. We show that $F$ is a solution of
$P$, i.e., $F$ satisfies
requirements (C1)--(C6).

\noindent (C1): Follows from $F^1[B_i]=F^2[B_i]=F[B_i]=H$.

\noindent (C2): Follows from (J1) and from the
fact that $F^1$ and $F^2$ intersect only in $B_i$.

\noindent (C3): First consider two vertices $x,y\in A_{i_\myell}\cap A_i$.
Vertices $x$ and $y$ are connected in $F+\beta$ if and only if they
are connected in $F^\myell+\beta$, which is true if and only if
$(x,y)\in \pi^\myell$ holds, which is equivalent to $(x,y)\in \pi$ by
(J2). Now suppose that $x\in A_{i_1}\cap A_i$ and $y\in A_{i_2}\cap A_i$. In this
case, $x$ and $y$ are connected in $F+\beta$ if and only if there is a
vertex of $B_i$ reachable from $x$ in $F^1+\beta$ and from $y$ in
$F^2+\beta$, or in other words,
$\mu^1(x^{\pi^1})=\mu^2(y^{\pi^2})$. By (J3), this is equivalent to
$\mu(x^\pi)=\mu(x^\pi)$, or $(x,y)\in \pi$ (as $\mu$ is injective).

\noindent (C4): If $d$ is a descendant of $i_\myell$, then the statement
follows using that (C4) holds for solution $F^\myell$ of $P^\myell$ and
the fact that for every descendant $d$ of
$i_\myell$, $F_i+\beta$ and $F+\beta$ induces the same partition of
$A_d$. For $d=i$, the statement follows from the previous paragraph,
i.e., from the fact that $F+\beta$ induces partition $\pi\in \Pi_i$ on $A_i$.

\noindent (C5): Consider a pair $(x_1,x_2)$. If $x_1,x_2\in V_{i_1}$ or $x_1,x_2\in V_{i_2}$,
then the statement follows from (C5) on $F^1$ or $F^2$. Suppose now
that $x_1\in V_{i_1}$ and $x_2\in V_{i_2}$; in this case, we have
$x_1\in A_{i_1}$ and $x_2\in A_{i_2}$.  By (C6) on $F^1$ and $F^2$,
$x_\myell$ is connected to
$\mu^\myell(x_\myell^{\pi^\myell})$ in $F^\myell+\beta$. By (J4), we have
$\mu^1(x_1^{\pi^1})=\mu^2(x_2^{\pi^2})$, hence $x_1$ and $x_2$ are
connected to the same class of $\beta$ in $F+\beta$.

\noindent (C6): Consider an $x\in A_i$ that is in $A_{i_\myell}$.  By
condition (C6) on $F^\myell$, we have that $x$ is connected in $F^\myell+\beta$
(and hence in $F+\beta$) to $\mu^\myell(v^{\pi^\myell})$, which
equals $\mu(v^{\pi})$ by (J3).
  \medskip

\dynproof{eq:join}{$\ge$}

Let $F$ be a solution of subproblem $(i,H,\pi, \alpha,\beta,\mu)$ and
let $F^\myell$ be the subgraph of $F$ induced by $V_{i_\myell}$.
To prove the inequality, we need to show three things. First, we have
to define two tuples $(i_1,H,\pi^1,
\alpha^1,\beta,\mu^1)$ and $(i_2,H,\pi^2,
\alpha^2,\beta,\mu^2)$ that are subproblems, i.e., they satisfy
(S1)--(S5). Second, we show that (J1)--(J4) hold for these
subproblems. Third, we show that $F^1$ and $F^2$ are solutions for
these subproblems (i.e., (C1)--(C6)), hence they can be used to give
an upper bound on the right hand side that matches the cost of $F$.

Let $\alpha^\myell$ be the partition of $B_i$ induced by the components
of $F^\myell$; as $F^1$ and $F^2$ intersect only in $B_i$, we have
$\alpha=\alpha^1\vee \alpha^2$, ensuring (J1). Since $\beta$ is
coarser than $\alpha$, it is coarser than both $\alpha^1$ and
$\alpha^2$. Let $\pi^\myell$ be the partition of $A_{i_\myell}$ defined by
$F+\beta$, we have $\pi_{\myell}\in \Pi_{i_\myell}$ by (C4) for $F$.
Furthermore, by (C3) for $F$, $\pi$ is the partition of $A_i$ induced
by $F+\beta$, hence it is clear that $\pi$ and $\pi^\myell$ are the same
on $A_{i_\myell}\cap A_i$, so (J2) holds. This also means that $F+\beta$
(or equivalently, $F^\myell+\beta$) connects a class of $\pi^\myell$ to
exactly one class of $\beta$; let $\mu^\myell$ be the corresponding
mapping from the classes of $\pi^\myell$ to $\beta$. Now (J4) is
immediate.  Furthermore, it is clear that the tuple
$(i_\myell,H,\pi^\myell, \alpha^\myell,\beta,\mu^\myell)$ satisfies
(S1)--(S5).

We show that $F^\myell$ is a solution of subproblem $(i_\myell,H,\pi^\myell,
\alpha^\myell,\beta,\mu^\myell)$. As the edges of $H$ are
shared by $F^1$ and $F^2$, it will follow that the right hand side of
\eqref{eq:join} is not greater than the left hand side.

\noindent (C1): Obvious from the definition of $F^1$ and $F^2$.

\noindent (C2): Follows from the way $\alpha^\myell$ is defined.

\noindent (C3): Follows from the definition of $\pi^\myell$, and from
the fact that $F+\beta$ and $F^\myell+\beta$ induces the same partition
on $A_{i_\myell}$.

\noindent (C4): Follows from (C4) on $F$ and from the fact that $F+\beta$
and $F^\myell+\beta$ induces the same partition on $A_{d}$.

\noindent (C5): Suppose that $x_1,x_2\in
V_{i_\myell}$. Then by (C5) for $F$, $x_1$ and $x_2$ are connected in
$F+\beta$, hence they are connected in $F_i+\beta$ as well.

\noindent (C6): Follows from the definition of $\mu^\myell$.

\textbf{Introduce node $i$ of vertex $v$.}
Let $j$ be the child of $i$. Since $v$ is not a terminal vertex, we
have $A_j=A_i$. Let $F'$ be a subgraph of $G[V_j]$ and let $F$ be
obtained from $F$ by adding vertex $v$ to $F'$ and making $v$ adjacent
to $S\subseteq B_j$. If $\alpha'$ is the partition of $B_j$ induced by
the components of $F'$, then we define the partition $\alpha'[v,S]$
of $B_i$ to be the partition obtained by joining all the classes of $\alpha'$
that intersect $S$ and adding $v$ to this new class (if $S=\emptyset$,
then $\{v\}$ is a class of $\alpha'[v,S]$). It is clear that
$\alpha'[S,v]$ is the partition of $B_i$ induced by $F$.

 We show that the value of a subproblem is given by
\begin{equation}
c(i,H,\pi, \alpha,\beta,\mu)=\min_{(I1),(I2),(I3)}c(j,H[B_j],\pi,
\alpha',\beta',\mu')+\sum_{xv\in E(H)}\ell(xv),\label{eq:intro}
\end{equation}
where the minimum is taken over all tuples satisfying all of the
following:
\begin{itemize}
\item[(I1)] $\alpha=\alpha'[v,S]$, where $S$ is the set of neighbors of $v$
  in $H$.
\item[(I2)] $\beta'$ is $\beta$ restricted to $B_j$.
\item[(I3)] For every $x\in A_i$, $\mu(x^\pi)$ is the class of $\beta$
  containing $\mu'(x^\pi)$.
\end{itemize}

\dynproof{eq:intro}{$\le$}

Let $F'$ be an optimum solution of subproblem $P'=(j,H[B_j],\pi,
\alpha',\beta',\mu')$. Let $F$ be the graph obtained from $F'$ by
adding to it the edges of $H$ incident to $v$; it is clear that the
cost of $F$ is exactly the right hand side of \eqref{eq:intro}. Let
us verify that (C1)--(C6) hold for $F$.

\noindent (C1): Immediate.

\noindent (C2): Holds because of (I1) and the way $\alpha'[v,S]$ was defined.

\noindent (C3)--(C5): Observe that $F+\beta$ connects two
vertices of $V_j$ if and only if $F'+\beta'$ does. Indeed, if a path
in $F+\beta$ connects two vertices via vertex $v$, then the two
neighbors $x,y$ of $v$ on the path are in the same class of $\beta$ as
$v$, hence (by (I2)) $x,y$ are in the same class of $\beta'$ as
well. In particular,
for every descendant $d$ of $i$, the components of $F+\beta$ and the
components of $F'+\beta$  give the same partition of $A_d$.

\noindent (C6): Follows from (C6) for $F'$ and from (I3).
\medskip

\dynproof{eq:intro}{$\ge$}

Let $F$ be a solution of subproblem $(i,H,\pi, \alpha,\beta,\mu)$ and
let $F'$ be the subgraph of $F$ induced by $V_{j}$.  We define a tuple
$(j,H[B_j],\pi, \alpha',\beta',\mu')$ that is a subproblem, show that
it satisfies (I1)--(I3), and that $F'$ is a solution of this subproblem.

Let $\alpha'$ be the partition of $A_i=A_j$ induced by $F'$ and let
$\beta'$ be the restriction of $\beta$ on $B_j$; these definitions
ensure that (I1) and (I2) hold. Let $\mu'(x^{\pi})$ defined to be the
class of $\beta'$ arising as the restriction of the class
$\mu(x^{\pi})$ to $B_j$; clearly, this ensures (I3). Note that this is
well defined, as it is not possible that $\mu(x^{\pi})$ is a class of
$\beta$ consisting of only $v$: by (C6) for $F$, this would mean that
$v$ is the only vertex of $B_i$ reachable from $x$ in $F$. Since $v$
is not a terminal vertex, $v\neq x$, thus if $v$ is reachable from
$x$, then at least one neighbor of $v$ has to be reachable from $x$ as
well.

Let us verify that (S1)--(S5) hold for the tuple $(j,H[B_j],\pi,
\alpha',\beta',\mu')$. (S1) and (S2) clearly holds. (S3) follows from
the fact that (C4) holds for $F$ and $A_i=A_j$. To see that (S4)
holds, observe that $(x,y)\in \alpha'$ implies $(x,y)\in \alpha$,
which implies $(x,y) \in \beta$, which implies $(x,y)\in \beta'$. (S5)
is clear from the definition of $\mu'$.

The difference between the cost of $F$ and the cost of $F'$ is
exactly $\sum_{xv\in E(H)}\ell(xv)$. Thus to show that the left of
\eqref{eq:intro} is at most the right of \eqref{eq:intro}, it is
sufficient to show that $F'$ is a solution of subproblem
$(j,H[B_j],\pi, \alpha',\beta',\mu')$.

\noindent (C1) and (C2): Obvious.

\noindent (C3)--(C5): As in the other direction, follows from the fact that $F'+\beta'$ induces the
same partition of $A_j=A_i$ as $F+\beta$.

\noindent (C6): By the definition of $\mu'$, it is clear that $x$ is
connected only to $\mu'(x^\pi)$ in $F+\beta$ and hence in $F'+\beta$.

\textbf{Forget node $i$ of vertex $v$.}
Let $j$ be the child of $i$. We have $V_i=V_j$ and hence $A_i=A_j$.
 We show that the value of a subproblem is given by
\begin{equation}
c(i,H,\pi, \alpha,\beta,\mu)=\min_{(F1),(F2),(F3),(F4)}c(j,H',\pi,
\alpha',\beta',\mu'),\label{eq:forget}
\end{equation}
where the minimum is taken over all tuples satisfying all of the
following:
\begin{itemize}
\item[(F1)] $H'[B_i]=H$.
\item[(F2)] $\alpha$ is the restriction of $\alpha'$ to $B_i$.
\item[(F3)] $\beta$ is the restriction of $\beta'$ to $B_i$ and
  $(x,v)\in \beta'$ if and only if $(x,v)\in \alpha'$.
\item[(F4)] For every $x\in A_i$, $\mu(x^\pi)$ is the (nonempty) set
  $\mu'(x^\pi)\setminus \{v\}$ (which implies that $\mu'(x^\pi)$ contains at least one
  vertex of $B_i$).
\end{itemize}

\dynproof{eq:forget}{$\le$}

Let $F$ be a solution of $(j,H',\pi,
\alpha',\beta',\mu')$. We show that $F$ is a solution of $(j,H,\pi,
\alpha,\beta,\mu)$ as well.

\noindent (C1): Clear because of (F1).

\noindent (C2): Clear because of (F2).

\noindent (C3)--(C5): We only need to observe
that $F+\beta$ and $F+\beta'$ have the same components: since by (F3), $(x,v)\in
\beta'$ implies $(x,v)\in \alpha'$, the neighbors of $v$ in $F+\beta$
are reachable from $v$ in $F$, thus $F+\beta'$ does not add any
further connectivity compared to $F+\beta$.

\noindent (C6): Observe that if $\mu'(x^\pi)$ are the vertices of
$B_j$ reachable from $x$ in $F+\beta'$, then $\mu'(x^\pi)\setminus
\{v\}$ are the vertices of $B_i$ reachable form $F+\beta$. We have
already seen that $F+\beta$ and $F+\beta'$ have the same components,
thus the nonempty set $\mu(x^{\pi})$ is indeed the subset of $B_i$
reachable from $x$ in $F+\beta$. Furthermore, by (F3), $\beta$ is the
restriction of $\beta'$ on $B_i$, thus if $\mu'(x^{\pi})$ is a class
of $\beta'$, then $\mu(x^{\pi})$ is a class of $\beta$.
\medskip

\dynproof{eq:forget}{$\ge$}

Let $F$ be a solution of $(j,H,\pi, \alpha,\beta,\mu)$.  We define a
tuple $(j,H',\pi, \alpha',\beta',\mu')$ that is a subproblem, we show
that (F1)--(F3) hold, and that $F$ is a solution of this subproblem.

Let us define $H'=F[B_j]$ and let $\alpha'$ be the
partition of $B_j$ induced by the components of $F$; these definitions
ensure that (F1) and (F2) hold.
We define $\beta'$ as the partition obtained by extending $\beta$ to
$B_j$ such that $v$ belongs to the class of $\beta$ that contains a
vertex $x\in B_i$ with $(x,v)\in \alpha'$ (it is clear that there is
at most one such class; if there is no such class, then we let $\{v\}$ be a
class of $\beta'$). It is clear that (F3) holds for this $\beta'$. Let us
note that $F+\beta$ and
$F+\beta'$ have the same connected components: if $(x,v)\in \beta'$,
then $x$ and $v$ are connected in $F$. Let $\mu'(x^{\pi})$ be the subset of $B_j$
reachable from $x$ in $F+\beta'$ (or equivalently, in $F+\beta$). It is
clear that $\mu(x^{\pi})=\mu'(x^{\pi})\setminus \{v'\}$ holds, hence
(F4) is satisfied.

Let us verify first that (S1)--(S5) hold for $(j,H',\pi,
\alpha',\beta',\mu')$. (S1) and (S2) clearly holds. (S3) follows from the
fact that (S3) holds for $(i,H,\pi, \alpha,\beta,\mu)$ and $A_i=A_j$.
To see that (S4) holds, observe that if $x,y\in B_i$, then $(x,y)\in
\alpha'$ implies $(x,y)\in \alpha$, which implies $(x,y) \in \beta$,
which implies $(x,y)\in \beta'$. Furthermore, if $(x,v)\in \alpha'$,
then by $(x,v)\in \beta'$ by the definition of $\beta$.  (S5) is clear
from the definition of $\mu'$.

We show that $F$ is a solution of $(j,H',\pi, \alpha',\beta',\mu')$.

\noindent (C1): Clear from the definition of $H'$.

\noindent (C2):  Clear from the
definition of $\alpha'$.

\noindent (C3)--(C5): Follow from the fact that $F+\beta$ and
$F+\beta'$ have the same connected components.

\noindent (C6): Follows from the
definition of $\mu'$.
\fi

\ifmain
\iffull
\subsection{Constructing the partitions}
\label{sec:constr-part}
\fi
\ifabstract Next we construct a
polynomial-size collection $\Pi$ such that there is a
$(1+\epsilon)$-approximate solution that conforms to $\Pi$. Putting
together these two results, we get a PTAS for the Steiner forest
problem on bounded treewidth graphs. 
\fi
Recall that the collection $\Pi$ contains a set of partitions $\Pi_i$
for each $i\in I$.
Each partition in
$\Pi_i$ is defined by a sequence $((S_1,r_1),\dots,(S_\myell,r_\myell))$
of at most $k+1$ pairs and a partition $\rho$ of $\{1,\dots, \myell\}$.
We will denote by $\rho(j)$ the class of the partition containing $j$.
The pair $(S_j,r_j)$ consists of a set $S_j$ of
$O((k+1)(1+1/\epsilon))$ vertices of $G_i$ and a
nonnegative integer $r_j$, which equals the distance between two
vertices of $G$. There are at most
$|V(G)|^{O((k+1)(1+1/\epsilon))}\cdot |V(G)|^2$ possible
pairs $(S_j,r_j)$ and hence at most
$|V(G)|^{O((k+1)^2(1+1/\epsilon))}$ different
sequences. The number of possible partitions $\rho$ is $O(k^k)$. Thus if we construct $\Pi_i$ by considering all possible
sequences constructed from every possible choice of $(S_j,r_j)$, the
size of $\Pi_i$ is polynomial in $|V(G)|$ for every fixed $k$ and $\epsilon$.

We construct the partition $\pi$ corresponding to a particular
sequence and $\rho$ the following way. Each pair $(S_j,r_j)$ can be
used to define a group $R_j=\Group_G(A_i,S_j,r_j)$ of $A_i$. Roughly
speaking, for each class $P$ of $\rho$, there is a corresponding class
of $\pi$ that contains the union of $R_j$ for every $j\in P$. However,
the actual definition is somewhat more complicated. We want $\pi$ to
be a partition, which means that the subsets of $A_i$ corresponding to
the different classes of $\rho$ should be disjoint. In order to ensure
disjointness, we define $R'_j:=R_j\setminus \bigcup_{j'=1}^{j-1} R_{j'}$. The
partition $\pi$ of $A_i$ is constructed as follows: for each class $P$
of $\rho$, we let $\bigcup_{j\in P}R_j$ be a class of $\pi$. Note that
these classes are disjoint by construction. If these classes fully
cover $A_i$, then we put the resulting partition $\pi$ into $\Pi_i$;
otherwise, the sequence does not define a partition.  This finishes
the construction of $\Pi_i$.

Before showing that there is a good approximate solution conforming to
the collection $\Pi$ defined above, we need a further definition.  For
two vertices $u$ and $v$, we denote by $u < v$ the fact that the
topmost bag containing $u$ is a proper descendant of the topmost bag
containing $v$. Note that each bag is the topmost bag of at most one
vertex in a nice tree decomposition (recall that we can assume that
the root bag contains only a single vertex). Thus if $u$ and $v$
appear in the same bag, then $u<v$ or $v<u$ holds, i.e., this relation
defines an ordering of the vertices in a bag. We can extend this
relation to connected subset of vertices: for two disjoint connected
sets $K_1$, $K_2$, $K_1<K_2$ means that $K_2$ has a vertex $v$ such
that $u<v$ for every vertex $u\in K_1$, in other words, $K_1<K_2$
means that the topmost bag where vertices from $K_1$ appear is a
proper descendant of the topmost bag where vertices from $K_2$ appear.
If there is a bag containing vertices from both $K_1$ and $K_2$, then
either $K_1<K_2$ or $K_2<K_1$ holds.  The reason for this is that the
bags containing vertices from $K_1\cup K_2$ form a connected subtree
of the tree decomposition, and if the topmost bag in this subtree
contains vertex $v\in K_1\cup K_2$, then $u<v$ for every other vertex
$u$ in $K_1\cup K_2$.

\begin{lemma}\label{lem:approx}
There is a $(1+k\epsilon)$-approximate solution conforming to $\Pi$.
\end{lemma}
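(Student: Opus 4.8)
The plan is to start from an optimal solution $\OPT$ and repair it, by a sequence of cheap \emph{joins} (adding a short path merging two components), into a solution that conforms to $\Pi$ and whose cost is at most $(1+k\epsilon)\OPT$. Together with Lemma~\ref{lem:findconforming}, which computes a minimum‑cost conforming solution, this proves the lemma.

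Fix the current solution $F$ (initially $F=\OPT$) and a bag $B_i$. For each component $C$ of $F$ meeting $A_i$, restrict $C$ to $G_i$ and look at the pieces of $C\cap G_i$ that touch $B_i$; these are vertex‑disjoint on $B_i$, so there are at most $k+1$ of them, and only those containing a vertex of $A_i$ are relevant (a vertex of $A_i$ lies in a component that leaves $V_i$, hence in a piece meeting $B_i$). For such a piece $K$, apply Lemma~\ref{lem:treecluster} to $X:=A_i\cap V(K)$, which is spanned by the tree $K$ of cost $\ell(K)$: this gives a set $S_K\subseteq X$ of $O(1+1/\epsilon)$ vertices and a radius $r_K\le\epsilon\ell(K)$ (which may be taken to be an actual distance of $G$) with $\Group_G(X,S_K,r_K)=X$. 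Put $S_C:=\bigcup_K S_K$ (size $O((k+1)(1+1/\epsilon))$) and $r_C:=\max_K r_K\le\epsilon\ell(C)$; by Proposition~\ref{prop:joincluster} we get $\Group_G(A_i\cap V(C),S_C,r_C)=A_i\cap V(C)$. Hence the sequence of pairs $(S_C,r_C)$ over the active components $C$ at $B_i$, together with the partition $\rho$ into singletons, produces exactly $\pi_i(F)$ \emph{unless} some active vertex is \emph{misclassified}: there is a $v\in A_i$ and an active component $C$ with $\dist_G(v,S_C)\le r_C$ but $v\notin V(C)$. (Coverage of $A_i$ by the groups is automatic: every $v\in A_i$ lies in a piece $K$ of its own component, hence within $r_K$ of $S_K$.)

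Whenever $F$ has a vertex $v$ misclassified with respect to $C$ at some bag, add to $F$ a shortest $v$‑to‑$V(C)$ path; it has length at most $r_C\le\epsilon\ell(C)$ and merges the component of $v$ into $C$. Each such join strictly decreases the number of components, so the process terminates; recomputing the center sets for the final components, no active vertex is misclassified anywhere, so every $\pi_i\in\Pi_i$ and the resulting $F'$ conforms to $\Pi$. A join never spoils a bag that was already conforming: merging two components merely merges two classes of $\pi_i$, and the coarsened partition is obtained by coarsening the corresponding $\rho$ while keeping the same pairs $(S_C,r_C)$ (whose defining data are unchanged by adding edges). Finally $F'$ is feasible, since it contains $\OPT$ and only adds paths.

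What remains, and is the heart of the matter, is to bound $\ell(F')-\OPT$ by $k\epsilon\,\OPT$. Two ingredients are easy: (i) at most $k$ joins are ever performed at a fixed bag $B_i$, because both components involved in such a join meet $A_i$, at most $k+1$ components meet $A_i$ at any moment, and this count strictly decreases with each join at $B_i$; and (ii) a join at $B_i$ costs at most $\epsilon\,\ell(K)\le\epsilon\,\ell(C)$ for the piece $K$ it uses. The delicate step is to assemble these into a \emph{global} bound without losing a factor equal to the number of bags: one organizes the joins and charges the cost $\epsilon\,\ell(K)$ of each join to the edges of $\OPT$ lying on $K$, proportionally to their lengths, and then argues—using the ordering $<$ on connected vertex sets induced by the tree decomposition, the fact that pieces sharing a bag are $<$‑comparable, and the fact that the piece structure at any bag only coarsens as joins are applied—that every edge $e$ of $\OPT$ receives total charge at most $k\epsilon\,c_e$. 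Summing over $e\in E(\OPT)$ then yields $\ell(F')\le(1+k\epsilon)\OPT$. I expect this charging argument to be the main obstacle: pinning down precisely why the per‑bag bound $k$ does not multiply by the depth of the decomposition is exactly where the structure of bounded‑treewidth graphs (and the clean form of the nice tree decomposition) must be used.
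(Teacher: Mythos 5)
Your setup matches the paper's: repair an optimal $F$ by adding shortest paths whenever a terminal is misclassified, observe that each join costs at most $\epsilon$ times the length of some component, and then account for the total. But the accounting step --- which you yourself flag as ``the heart of the matter'' and ``the main obstacle'' --- is exactly the part you have not supplied, and it is the only genuinely nontrivial part of the lemma. Your per-bag bound (``at most $k$ joins at a fixed bag'') is the wrong invariant to aggregate: summed naively over bags it gives a bound proportional to the number of bags, and your proposed fix (charge $\epsilon\,\ell(K)$ to the edges of $\OPT$ on $K$ and hope each edge is charged at most $k$ times) is stated as a goal, not proved.

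The missing idea in the paper is to charge each join not to a bag and not to edges, but to an \emph{ordered pair} $(K_1,K_2)$ of components of the \emph{original} $F$ with $K_1<K_2$ that meet a common bag. Three facts make this work. First, when a misclassification is detected one shows (by taking the smallest index $j$ whose group is defective and arguing it cannot be ``too small'') that the offending pair satisfies $j^*>j$, so the added path has length at most $\epsilon\,\ell(K_j)$ with $K_j$ the $<$-\emph{smaller} component, and the charge goes to $(K_j,K_{j^*})$. Second, each pair is charged at most once, because the join puts $K_j$ and $K_{j^*}$ into the same current component, after which no misclassification between them can recur. Third, a fixed $K_1$ is the first element of at most $k$ chargeable pairs: if $K_1<K_2$ and they share a bag, then the topmost bag containing $K_1$ must contain a vertex of $K_2$, and that single bag of size $k+1$ meets at most $k$ components other than $K_1$. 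Summing gives $\sum_K k\cdot\epsilon\,\ell(K)=k\epsilon\,\ell(F)$, independent of the number of bags. A secondary issue in your write-up: you build the groups from the \emph{current} components $C$, so your radius bound $r_C\le\epsilon\,\ell(C)$ degrades as $C$ absorbs previously added paths; the paper instead always defines $S_j,r_j$ from the components of the original $F$ (using the partition $\rho$ to record which of them have already been merged), which is what keeps each individual charge bounded by $\epsilon\,\ell(K_j)$ for a fixed original component.
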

\begin{proof}
  Let $F$ be a minimum cost Steiner forest. We describe a procedure
  that adds further edges to $F$ to transform it into a Steiner forest
  $F'$ that conforms to $\Pi$ and has cost at most
  $(1+k\epsilon)\ell(F)$. We need a delicate charging argument to show
  that the total increase of the cost is at most $k\epsilon \cdot
  \ell(F)$ during the procedure. In each step, we charge the increase of
  the cost to an ordered pair $(K_1,K_2)$ of components of $F$. We
  are charging only to pairs $(K_1,K_2)$ having the property that
  $K_1<K_2$ and there is a bag containing vertices from both $K_1$ and
  $K_2$. Observe that if $B_i$ is the topmost bag where vertices from
  $K_1$ appear, then these properties imply that a vertex of $K_2$
  appears in this bag as well. Otherwise, if every bag containing
  vertices of $K_2$ appears above $B_i$, then there is no bag
  containing vertices from both $K_1$ and $K_2$; if every bag
  containing vertices from $K_2$ appears below $B_i$, then $K_1<K_2$
  is not possible. Thus a component $K_1$ can be the first component
  of at most $k$ such pairs $(K_1,K_2)$: since the components are
  disjoint, the topmost bag containing vertices from $K_1$ can
  intersect at most $k$ other components.  We will charge a cost
  increase of at most $\epsilon \cdot \ell(K_1)$ on the pair $(K_1,K_2)$,
  thus the total increase is at most $k\epsilon \cdot \ell(F)$.  It is a
  crucial point of the proof that we charge on (pairs of) components
  of the original solution $F$, even after several modification steps,
  when the components of $F'$ can be larger than the original
  components of $F$. Actually, in the proof to follow, we will refer
  to three different types of components:
\begin{enumerate}
\item[(a)] Components of the current solution $F'$.
\item[(b)] Each component of $F'$ contains one or more components of
  $F$.
\item[(c)] If a component of $F$ is restricted to the subset $V_i$,
  then it can split into up to $k+1$ components.
\end{enumerate}
To emphasize the different meanings, and be clear as well, we use the terms a-component,
b-component, and c-component.

Initially, we set $F':=F$ and it will be always true that $F'$ is a
supergraph of $F$, thus $F'$ defines a partition of the b-components of
$F$. Suppose that there is a bag $B_i$ such that the partition
$\pi_i(F')$ of $A_i$ induced by $F'$ is not in $\Pi_i$. Let $K_1 < K_2
<\dots <K_\myell$ be the b-components of $F$ intersecting $B_i$, ordered
by the relation $<$. Some of these b-components might be in the same
a-component of $F'$; let $\rho$ be the partition of $\{1,\dots,\myell\}$
defined by $F'$ on these b-components.

Let $A_{i,j}$ be the subset of $A_i$ contained in $K_j$. The
intersection of b-component $K_j$ with $V_i$ gives rise to at most $k+1$
c-components, each of cost at most $\ell(K_j)$. Thus by
Lemma~\ref{lem:treecluster}\iffull and Prop.~\ref{prop:joincluster}\fi, there is
a set $S_j\subseteq V(K_j)$ of at most
$O((k+1)(1+1/\epsilon))$ vertices such that
$A_{i,j}= \Group_{G_i}(A_{i,j},S_j,r_j)$ for some $r_j \le \epsilon\cdot
\ell(K_j)$. If the sequence $(S_1,r_1)$, $\dots$, $(S_\myell,r_\myell)$ and the
partition $\rho$ give rise to the partition $\pi_i(F')$, then
$\pi_i(F')\in \Pi_i$. Otherwise, let us investigate the reason
why this sequence and $\rho$ do not define the partition $\pi_i(F')$.
There are two possible problems: either a set $R'_j$ arising in the
construction is too small (i.e., $A_{i,j}\not\subseteq R'_j$) or too
large (i.e., $R_j$ contains a vertex from some $A_{i,j^*}$ such that
$j$ and $j^*$ are not in the same class of $\rho$).  If none of these
problems arise, then it is clear that the constructed partition $\pi$
is indeed a partition and it is the same as the partition $\pi_i(F')$.

Let $j$ be the smallest integer such that $R_j$ is too large or too
small.  By the choice of $S_j$, $R_j=\Group_{G_i}(A_{i},S_j,r_j)$
contains $A_{i,j}$. If no $R_{j^*}$ with $j^*<j$ is too large, then no
such $R_{j^*}$ contains vertices from $A_{i,j}$, which means that
$R'_j:=R_j\setminus \bigcup_{j^*=1}^{j-1} R_{j^*}$ fully contains
$A_{i,j}$. Thus $R'_j$ is not too small. Suppose now that $R'_j$ is
too large: it contains a vertex of $A_{i,j^*}$ for some $j^*\not\in
\rho(j)$. It is not possible that $j^*<j$: by assumption, $R'_{j^*}$
is not too small, thus $R'_{j^*}$ fully contains $A_{i,j^*}$ and hence
$R'_j$ is disjoint from $A_{i,j^*}$. Thus we can assume that $j^*>j$.
The fact that $\Group_{G_i}(A_{i},S_j,r_j)$ intersects $A_{i,j^*}$
means that there is a vertex $u\in S_j$ and vertex $v\in A_{i,j^*}$
such that $d_{G_i}(u,v)\le \epsilon \cdot \ell(K_j)$.  Note that $u$ is a
vertex of b-component $K_j$ (as $S_j\subseteq A_{i,j}$) and $v$ is a
vertex of b-component $K_{j^*}$.  We modify $F'$ by adding a shortest
path that connects $u$ and $v$. Clearly, this increases the cost of
$F'$ by at most $\epsilon \cdot \ell(K_j)$, which we charge on the pair
$(K_j,K_{j^*})$. Note that $K_j$ and $K_{j^*}$ both intersect the bag
$B_i$ and $K_j<K_{j^*}$, as required in the beginning of the proof.
Furthermore, $K_j$ and $K_{j^*}$ are in the same a-component of $F'$
after the modification, but not before. Thus we charge at most once on
the pair $(K_j,K_{j'})$.

Since the modification always extends $F'$, the procedure described
above terminates after a finite number of steps. At this point, every
partition $\pi_i(F')$ belongs to the corresponding set $\Pi_i$, that
is, the solution $F'$ conforms to $\Pi$.
\end{proof}
\fi

\iffull
\section{Algorithm for series-parallel graphs}\label{sec:algor-seri-parall}
\ifabstract
A series-parallel graph can be built form elementary blocks using two
operations: parallel connection and series connection.
The algorithm of Theorem~\ref{th:mainalg} uses dynamic programming on
the construction of the series-parallel graph. For each subgraph
arising in the construction, we find a minimum weight forest that
connects some of the terminal pairs, connects a subset of the
terminals to the ``left exit point'' of the subgraph, and connects the
remaining terminals to the ``right exit point'' of the
subgraph. The minimum weight depends on the subset of terminals connected to the left
exit point, thus it seems that we need to determine exponential many
values (one for each subset). Fortunately, it turns out that
the minimum weight is a submodular function of the
subset. Furthermore, we show that this function can be represented by
the cut function of a directed graph and this directed graph can be
easily constructed if the directed graphs corresponding to the
building blocks of the series-parallel subgraph are available. Thus,
following the construction of the series-parallel graph, we can build
all these directed graphs and determine the value of the optimum
solution by the computation of a minimum cut.
\fi

\iffull
A series-parallel graph is a graph that can be built using series and
parallel composition. 
\fi
Formally, a {\em series-parallel graph} $G(x,y)$ with distinguished vertices
$x,y$ is an undirected graph that can be constructed using the
following rules:
\begin{itemize}
\item An edge $xy$ is a series-parallel graph.
\item If $G_1(x_1,y_1)$ and $G_2(x_2,y_2)$ are series-parallel graphs, then the graph $G(x,y)$
  obtained by identifying $x_1$ with $x_2$ and $y_1$ with $y_2$ is a
  series-parallel graph with distinguished vertices $x:=x_1=x_2$ and $y:=y_1=y_2$
  {\em (parallel connection).}
\item If $G_1(x_1,y_1)$ and $G_2(x_2,y_2)$ are series-parallel
  graphs, then the graph $G(x,y)$ obtained by identifying $y_1$ with $x_2$
  is a series-parallel graph with distinguished vertices $x:=x_1$ and
  $y:=y_2$ {\em (series connection).}
\end{itemize}

We prove Theorem~\ref{th:mainalg} in this section by constructing a
polynomial-time algorithm to solve \prob{Steiner Forest} on
series-parallel graphs. It is well-known that the treewidth of a
graph is at most 2 if and only if it is a subgraph of a series
parallel graph\cite{Bodlaender98}. Since setting the length of an
edge to $\infty$ is essentially the same as deleting the edge, it
follows that \prob{Steiner Forest} can be solved in polynomial time
on graphs with treewidth at most 2.

Let $(G,\DD)$ be an instance of \prob{Steiner Forest} where $G$ is
series parallel.  For $i=1,\dots, m$, denote by $G_i(x_i,y_i)$ all the
intermediary graphs appearing in the series-parallel construction of
$G$. We assume that these graphs are ordered such that $G=G_m$ and if
$G_i$ is obtained from $G_{j_1}$ and $G_{j_2}$, then $j_1,j_2<i$. Let
$\DD_i\subseteq \DD$ contain those pairs $\{u,v\}$ where both vertices
are in $V(G_i)$.  Let $A_i$ be those vertices $v\in V(G_i)$ for which
there exists a pair $\{v,u\}\in \DD$ with $u\not\in V(G_i)$ (note that
$A_m=\emptyset$ and $\DD_m=\DD$).  For every $G_i$, we define two
integer values $a_i,b_i$ and a function $f_i$:
\begin{itemize}
\item Let $a_i$ be the minimum cost of a solution $F$ of the
  instance $(G_i,\DD_i)$ with the additional requirements that $x_i$ and
  $y_i$ are connected in $F$ and every vertex in $A_i$ is in the same
  component as $x_i$ and $y_i$.
\item Let $G'_i$ be the graph obtained from $G_i$ by identifying
  vertices $x_i$ and $y_i$. Let $b_i$ be the minimum cost of a solution $F$ of
  the instance $(G'_i,\DD_i)$ with the additional requirement that
  $A_i$ is in the same component as $x_i=y_i$.
\item For every $S\subseteq A_i$, let $f_i(S)$ be the minimum cost of a solution $F$ of the
  instance $(G_i,\DD_i)$ with the additional requirements that $x_i$
  and $y_i$ are not connected, every
  $v\in S$ is in the same component as $x_i$, and every $v\in A_i\setminus S$ is
  in the same component as $y_i$. (If there is no such $F$, then $f_i(S)=\infty$.)
\end{itemize}
The main combinatorial property that allows us to solve the problem in
polynomial time is that the functions $f_i$ are submodular. We prove
something stronger: the functions $f_i$ can be represented in a
compact way as the cut function of certain directed graphs.

If $D$ is a directed graph length on the edges and $X\subseteq V(D)$,
then $\delta_D(X)$ denotes the total length of the edges leaving $X$.
For $X,Y\subseteq V(D)$, we
denote by $\lambda_D(A,B)$ the minimum length of a directed cut that
separates $A$ from $B$, i.e., the minimum of $\delta_D(X)$, taken over
all $A\subseteq X \subseteq V(D)\setminus B$ (if $A\cap B\neq \emptyset$, then $\lambda_D(A,B)$ is defined
to be $\infty$).

\begin{definition}\label{def:represent}
Let $D_i$ be a directed graph with nonnegative lengths. Let $s_i$ and
$t_i$ be two distinguished vertices and let $A_i$ be a subset of
vertices of $D_i$. We say that $(D_i,s_i,t_i,A_i)$ {\em represents} $f_i$ if
$f_i(S)=\lambda_{D_i}(S\cup \{s_i\},(A_i\setminus S)\cup \{t_i\})$ for every
$S\subseteq A_i$. If $s_i$, $t_i$, $A_i$ are clear from the context,
then we simply say that $D_i$ represents $f_i$.
\end{definition}

A function $f$ defined on the subsets of a ground set $U$ is {\em
  submodular} if
\[
f(X)+f(Y)\ge f(X\cap Y)+f(X\cup Y)
\]
holds for every $X,Y\subseteq U$. For example, it is well known that
$\delta_G(X)$ is a submodular function on the subsets of $V(G)$.
Submodularity is a powerful unifying concept of combinatorial
optimization: classical results on flows, cuts, matchings, and
matroids can be considered as consequences of submodularity. The
following (quite standard) proposition shows that if a function  can be
represented in the sense of Definition~\ref{def:represent}, then the function
is submodular. In the proof of Theorem~\ref{th:mainalg}, we show that
every function $f_i$ can be represented by a directed graph, thus
it follows that every $f_i$ is submodular. Although we do not use this
observation directly in the paper, it explains in some sense why the
problem is polynomial-time solvable.

\begin{proposition}
If a function $f_i:2^{A_i}\to \mathbb{R}^+$ can be represented by
$(D_i,s_i,t_i,A_i)$ (in the sense of Definition~\ref{def:represent}), then  $f_i$ is submodular.
\end{proposition}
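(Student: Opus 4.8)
The plan is to reduce submodularity of $f_i$ to the standard submodularity of the directed cut function $\delta_{D_i}$ on subsets of $V(D_i)$, via the classical uncrossing argument. Fix $X,Y\subseteq A_i$. First I would fix optimal cuts witnessing the two represented values: choose $P$ with $X\cup\{s_i\}\subseteq P\subseteq V(D_i)\setminus((A_i\setminus X)\cup\{t_i\})$ and $\delta_{D_i}(P)=f_i(X)$, and similarly $Q$ with $Y\cup\{s_i\}\subseteq Q\subseteq V(D_i)\setminus((A_i\setminus Y)\cup\{t_i\})$ and $\delta_{D_i}(Q)=f_i(Y)$. Such sets exist whenever these values are finite (and if one of them is $\infty$ the desired inequality is trivial); note that for every $S\subseteq A_i$ the two sets $S\cup\{s_i\}$ and $(A_i\setminus S)\cup\{t_i\}$ are disjoint as long as $s_i,t_i\notin A_i$ and $s_i\neq t_i$, so in the intended situation the minimum in Definition~\ref{def:represent} is genuinely over a nonempty family.

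Next I would verify that $P\cap Q$ is a feasible cut separating $(X\cap Y)\cup\{s_i\}$ from $(A_i\setminus(X\cap Y))\cup\{t_i\}$, and that $P\cup Q$ is a feasible cut separating $(X\cup Y)\cup\{s_i\}$ from $(A_i\setminus(X\cup Y))\cup\{t_i\}$. For $P\cap Q$: it contains $s_i$ and, since $P\supseteq X$ and $Q\supseteq Y$, it contains $X\cap Y$; it avoids $t_i$; and for any $a\in A_i\setminus(X\cap Y)$ we have $a\notin X$ or $a\notin Y$, hence $a\notin P$ or $a\notin Q$, so $a\notin P\cap Q$. The check for $P\cup Q$ is symmetric: it contains $X\cup Y\cup\{s_i\}$, avoids $t_i$, and any $a\in A_i\setminus(X\cup Y)$ lies in neither $P$ nor $Q$. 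Therefore $f_i(X\cap Y)\le\delta_{D_i}(P\cap Q)$ and $f_i(X\cup Y)\le\delta_{D_i}(P\cup Q)$.

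Finally I would invoke submodularity of $\delta_{D_i}$, namely $\delta_{D_i}(P)+\delta_{D_i}(Q)\ge\delta_{D_i}(P\cap Q)+\delta_{D_i}(P\cup Q)$, and chain: $f_i(X)+f_i(Y)=\delta_{D_i}(P)+\delta_{D_i}(Q)\ge\delta_{D_i}(P\cap Q)+\delta_{D_i}(P\cup Q)\ge f_i(X\cap Y)+f_i(X\cup Y)$, which is exactly the submodularity inequality for $f_i$. I do not expect a real obstacle here; the only point requiring care is the bookkeeping in the feasibility check, i.e.\ confirming that both the ``source-side'' constraint ($s_i$ and the relevant part of $A_i$ inside) and the ``sink-side'' constraint ($t_i$ and the complementary part of $A_i$ outside) survive taking intersections and unions of $P$ and $Q$ — everything else is the off-the-shelf submodularity of cuts.
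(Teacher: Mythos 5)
Your proof is correct and follows essentially the same route as the paper's: take optimal cuts $P,Q$ witnessing $f_i(X)$ and $f_i(Y)$, check that $P\cap Q$ and $P\cup Q$ are feasible for $X\cap Y$ and $X\cup Y$ respectively, and apply submodularity of the directed cut function $\delta_{D_i}$. Your explicit handling of the infinite case and the feasibility bookkeeping is slightly more careful than the paper's, but the argument is the same.
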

\begin{proof}
Let $X,Y\subseteq A_i$ be arbitrary sets. Since $D_i$ represents
$f_i$, there exist appropriate sets $X'$ and $Y'$ with
$\delta_{D_i}(X')=f_i(X)$ and $\delta_{D_i}(Y')=f_i(Y)$. Now we have
\begin{align}
f_i(X)+f_i(Y)=\delta_{D_i}(X')+\delta_{D_i}(Y')\ge
\delta_{D_i}(X'\cap Y')+\delta_{D_i}(X'\cup Y'),  \label{eqn:represent-prop}
\end{align}
where the inequality follows from the submodularity of
$\delta_{D_i}$. Observe that $(X\cap Y)\cup \{s\}\subseteq X'\cap Y'$
and $X'\cap Y'\subseteq V(D_i)\setminus ((A_i\setminus X)\cup
(A_i\setminus Y)\cup\{t\})$. Thus we have $f_i(X\cap Y)\le
\delta_{D_i}(X'\cap Y')$. In a similar way, $f_i(X\cup Y)\le
\delta_{D_i}(X'\cup Y')$.
Together with Inequality~\eqref{eqn:represent-prop} obtained above,
  this proves that $f_i(X)+f_i(Y)\ge f_i(X\cap Y)+f_i(X\cup Y)$.
\end{proof}

\begin{proof}[Proof of Theorem~\ref{th:mainalg}] We assume that in
the given instance of \prob{Steiner Forest} each vertex appears only
in at most one pair of $\DD$. To achieve this, if a vertex $v$
appears in $k>1$ pairs, then we subdivide an arbitrary edge incident
to $v$ by $k-1$ new vertices such that each of the $k-1$ edges on
the path formed by $v$ and the new vertices has length 0. Replacing
vertex $v$ in a pair by any of the new vertices does not change the
problem.

For every $i=1,\dots,m$, we compute the values $a_i$, $b_i$, and a
representation $D_i$ of $f_i$. In the optimum solution $F$ for the
instance $(G_m,\DD)$, vertices
$x_m$ and $y_m$ are either connected or not. Thus the cost of the
optimum solutions is the minimum of $a_m$ and $f_m(\emptyset)$ (recall
that $A_m=\emptyset$). The value of $f_m(\emptyset)$ can be easily
determined by computing the minimum cost $s$-$t$ cut in $D_m$.

If $G_i$ is a single edge $e$, then $a_i$ and $b_i$ are trivial to
determine: $a_i$ is the length of $e$ and $b_i=0$. The directed graph
$D_i$ representing $f_i$ can be obtained from $G_i$ by renaming $x_i$ to $s_i$, renaming
$y_i$ to $t_i$, and either removing the edge $e$ (if $\DD_i=\emptyset$)
or replacing $e$ with a directed edge $\overrightarrow{s_it_i}$ of length
$\infty$ (if  $\{u,v\}\in \DD_i$).

If $G_i$ is not a single edge, then it is constructed from some
$G_{j_1}$ and $G_{j_2}$ either by series or parallel
connection. Suppose that $a_{j_\myell}$, $b_{j_\myell}$, and $D_{j_\myell}$
for $\myell=1,2$ are already known. We show how to compute $a_i$, $b_i$,
and $D_i$ in this case.

\textbf{Parallel connection.} Suppose that $G_i$ is obtained from
$G_{j_1}$ and $G_{j_2}$ by parallel connection. It is easy to see that
$a_i=\min\{a_{j_1}+b_{j_2},b_{j_1}+a_{j_2}\}$ and
$b_i=b_{j_1}+b_{j_2}$. To obtain $D_i$, we join $D_{j_1}$ and
$D_{j_2}$ by identifying $s_{j_1}$ with $s_{j_2}$ (call it $s_i$)
and by identifying $t_{j_1}$ with $t_{j_2}$ (call it
$t_i$). Furthermore, for every $\{u,v\}\in \DD_i\setminus
\{\DD_{j_1}\cup \DD_{j_2}\}$, we add directed edges
$\overrightarrow{uv}$ and $\overrightarrow{vu}$ with length $\infty$.

To see that $D_i$ represents $f_i$, suppose that $F$ is the subgraph
that realizes the value $f_i(S)$ for some $S\subseteq A_i$. We have to
show that there is an appropriate $X\subseteq V(D_i)$ certifying
$\lambda_{D_i}(S\cup\{s\},(A_i\setminus S)\cup \{t\})\le \ell(F)$. The
graph $F$ is the edge disjoint union of two graphs $F_1\subseteq
G_{j_1}$ and $F_2\subseteq G_{j_2}$. For $\myell=1,2$, let
$S^\myell\subseteq A_{j_\myell}$ be the set of those vertices that are
connected to $x_{j_\myell}$ in $F_\myell$, it is clear that $F_\myell$
connects $A_{j_\myell}\setminus S^\myell$ to $y_{j_\myell}$. Since $F_\myell$
does not connect $x_i$ and $y_i$, we have that $\ell(F_\myell)\ge
f_{j_\myell}(S^\myell)$. Since $D_{j_{\myell}}$ represents $f_{j_{\myell}}$,
there is a set $X_\myell$ in $D_{j_\myell}$ with $S^\myell\cup \{
s_{j_\myell}\} \subseteq X_\myell \subseteq V(D_i)\setminus
((A_{j_{\myell}}\setminus S^\myell)\cup \{t_{j_\myell}\})$, and
$\delta_{D_{j_\myell}}(X_\myell)=f_{j_\myell}(S^\myell)$. We show that
$\delta_{D_i}(X_1\cup
X_2)=\delta_{D_{j_1}}(X_1)+\delta_{D_{j_2}}(X_2)$. Since $D_i$ is
obtained from joining $D_{j_1}$ and $D_{j_2}$, the only thing that has
to be verified is that the edges with infinite length added after the
join cannot leave $X_1\cup X_2$. Suppose that there is such an edge
$\overrightarrow{uv}$, assume without loss of generality that $u\in
X_1$ and $v\in V(D_{j_2})\setminus X_2$. This means that $u\in S^1$
and $v\not \in S^2$. Thus $F$ connects $u$ with $x_i$ and $v$ with
$y_i$, implying that $F$ does not connect $u$ and $v$. However
$\{u,v\}\in \DD_i$, a contradiction. Therefore, for the set $X:=X_1\cup
X_2$, we have
\begin{multline*}
\delta_{D_i}(X)=\delta_{D_{j_1}}(X_1)+\delta_{D_{j_2}}(X_2)=f_{j_1}(S^1)+
f_{j_2}(S^2) \le \ell(F_1)+\ell(F_2)=\ell(F)=f_i(S),
\end{multline*}
proving the existence of the required $X$.

Suppose now that for some $S\subseteq A_i$, there is a set $X$ with $S\cup \{ s_i\}
\subseteq X \subseteq V(D_i)\setminus
((A_{i}\setminus S)\cup \{t_{i}\})$. We have to show that
$\delta_{D_i}(X)\ge f_i(S)$. For $\myell=1,2$, let $X_\myell=X \cap
V(D_{j_\myell})$ and $S^\myell=A_{j_\myell}\cap X_\myell$. Since $D_{j_\myell}$
represents $f_{j_\myell}$, we have that $\delta_{D_{j_\myell}}(X_\myell)\ge
f_{j_\myell}(S^\myell)$. Let $F_\myell$ be a subgraph of
$G_{j_\myell}$ realizing $f_{j_\myell}(S^\myell)$. Let $F=F_1\cup
F_2$; we show that $\ell(F)\ge f_i(A_i)$, since $F$ satisfies all the
requirements in the definition of $f_i(A_i)$. It is clear that $F$
does not connect $x_i$ and $y_i$. Consider a pair $\{u,v\}\in
\DD_i$. If $\{u,v\}\in \DD_{j_\myell}$, then $F$ connects $u$ and
$v$. Otherwise, let $\{u,v\}\in \DD_i\setminus \{\DD_{j_1}\cup
\DD_{j_2}\}$. Suppose that $F$
does not connect $u$ and $v$, without loss of generality, suppose that
$u\in X_1$ and $v\not \in X_2$. This means that there is an edge
$\overrightarrow{uv}$ of length $\infty$ in $D_i$ and
$\delta_{D_i}(X)=\infty\ge f_i(S)$ follows. Thus we proved $\ell(F)\ge f_i(A_i)$ and we
have
\begin{multline*}
\delta_{D_i}(X)=\delta_{D_{j_1}}(X_1)+\delta_{D_{j_2}}(X_2) \ge
f_{j_1}(S^1)+f_{j_2}(S^2)=
\ell(F_1)+\ell(F_2)=\ell(F) \ge f_i(S),
\end{multline*}
what we had to show.

\textbf{Series connection.}  Suppose that $G_i$ is obtained from
$G_{j_1}$ and $G_{j_2}$ by series connection and let
$\mu:=y_{j_1}=x_{j_2}$ be the middle vertex. It is easy to see that
$a_i=a_{j_1}+a_{j_2}$ (vertex $\mu$ has to be connected to both $x_i$
and $y_i$). To compute $b_i$, we argue as follows. Denote by
$G^R_{j_2}$ the graph obtained from $G_{j_2}$ by swapping the names of
distinguished vertices $x_{j_2}$ and $y_{j_2}$. Observe that the graph
$G'_i$ in the definition of $b_i$ arises as the parallel connection of
$G_{j_1}$ and $G^R_{j_2}$. It is easy to see that $a^R_{j_2}$,
$b^R_{j_2}$, and $f^R_{j_2}$ corresponding to $G^R_{j_2}$ can be
defined as $a^R_{j_2}=a_{j_2}$, $b^R_{j_2}=b_{j_2}$, and
$f^R_{j_2}(S)=f_{j_2}(A_{j_2}\setminus S)$. Furthermore, if $D_{j_2}$
represents $f_{j_2}$, then the graph $D^R_{j_2}$ obtained from
$D_{j_2}$ by swapping the names of $s_{j_2}$ and $t_{j_2}$ represents
$f^R_{j_2}$. Thus we have everything at our disposal to construct a
directed graph $D'_i$ that represents the function $f'_i$ corresponding
to the parallel connection of $G_{j_1}$ and $G^R_{j_2}$. Now it is
easy to see that $b_i=f'_i(A_i)$: graph $G'_i$ is isomorphic to the
parallel connection of $G_{j_1}$ and $G^R_{j_2}$ and the definition of
$b_i$ requires that $A_i$ is connected to $x_i=y_i$. The value of
$f'_i(A_i)$ can be determined by a simple minimum cut computation in
$D'_i$.

Let $T_1\subseteq A_{j_1}$ contain those vertices $v$ for
which there exists a pair $\{v,u\}\in \DD_i$ with $u\in A_{j_2}$ and
let $T_2\subseteq A_{j_2}$ contain those vertices $v$ for which there
exists a pair $\{v,u\}\in \DD_i$ with $u\in A_{j_1}$. Observe that
$A_i=(A_{j_1}\setminus T_1)\cup (A_{j_2}\setminus T_2)$ (here we are
using the fact that each vertex is contained in at most one pair, thus
a $v\in T_1$ cannot be part of any pair $\{v,u\}$ with $u\not\in
V(D_i)$). To construct $D_i$, we connect $D_{j_1}$
and $D_{j_2}$ with an edge $\overrightarrow{t_{j_1}s_{j_2}}$ of length
0 and set $s_i:=s_{j_1}$ and $t_i:=t_{j_2}$. Furthermore, we introduce
two new vertices $\gamma_1,\gamma_2$ and add the following edges (see Figure~\ref{fig:series}):
\begin{itemize}
\item $\overrightarrow{s_{j_1}\gamma_1}$ with length $a_{j_2}$,
\item $\overrightarrow{\gamma_1\gamma_2}$ with length $f_{j_1}(A_{j_1}\setminus
T_1)+f_{j_2}(T_{2})$,
\item $\overrightarrow{\gamma_2t_{j_2}}$ with length $a_{j_1}$,
\item $\overrightarrow{\gamma_2\gamma_1}$ with length $\infty$,
\item $\overrightarrow{\gamma_1v}$ with length $\infty$ for every $v\in V(D_{j_1})$,

\item $\overrightarrow{v\gamma_2}$ with length $\infty$ for every $v\in V(D_{j_2})$,
\item $\overrightarrow{v\gamma_1}$ with length $\infty$ for every $v\in
  T_1$, and
\item $\overrightarrow{\gamma_2v}$ with length $\infty$ for every $v\in T_2$.

\end{itemize}

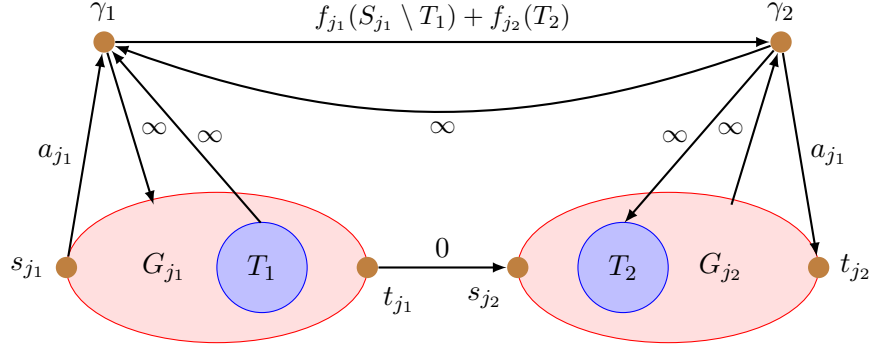
\begin{figure}
\begin{center}
\begin{tikzpicture}
\tikzstyle{elp}=[draw=red,fill=pink!50!white]
\tikzstyle{cir}=[draw=blue,fill=blue!25!white]
\tikzstyle{nod}=[circle,minimum size=1mm,inner sep=1mm,fill=brown]x
\tikzstyle{ar}=[thick,-latex]
\draw[elp] (1,0) ellipse (2cm and 1cm);
\draw[cir] (1,0) ++(.6,0) circle (.6cm);
\draw (.3,0) node {$G_{j_1}$};
\draw (1.6,0) node {$T_1$};
\draw (-1,0) node[nod,label=left:$s_{j_1}$] (N11) {};
\draw (3,0) node[nod,label=-60:$t_{j_1}$] (N12) {};
\draw (-.5,3) node[nod,label=above:$\gamma_1$] (N13) {};
\draw (.2,.7) node (N14) {};
\draw (1.7,.46) node (N15) {};
\draw[elp] (7,0) ellipse (2cm and 1cm);
\draw[cir] (7,0) ++(-.6,0) circle (.6cm);
\draw (8-.3,0) node {$G_{j_2}$};
\draw (8-1.6,0) node {$T_2$};
\draw (8+1,0) node[nod,label=right:$t_{j_2}$] (N21) {};
\draw (8-3,0) node[nod,label=-120:$s_{j_2}$] (N22) {};
\draw (8+.5,3) node[nod,label=above:$\gamma_2$] (N23) {};
\draw (8-.2,.7) node (N24) {};
\draw (8-1.7,.46) node (N25) {};
\draw[ar] (N11) -- node[left] {$a_{j_1}$} (N13);
\draw[ar] (N13) -- node[right] {\small$\infty$} (N14);
\draw[ar] (N15) -- node[right] {\small$\infty$} (N13);
\draw[ar] (N12) -- node[above] {$0$} (N22);
\draw[ar] (N23) -- node[right] {$a_{j_1}$} (N21);
\draw[ar] (N24) -- node[left] {\small$\infty$} (N23);
\draw[ar] (N23) -- node[left] {\small$\infty$} (N25);
\draw[ar] (N23) to [bend left=20] node[below] {\small$\infty$} (N13);
\draw[ar] (N13) -- node[above] {\small$f_{j_1}(S_{j_1}\setminus T_1)+f_{j_2}(T_2)$} (N23);
\end{tikzpicture}
\end{center}
\caption{Construction of $D_i$ in a series connection.}\label{fig:series}
\end{figure}

Suppose that $F$ is the subgraph that realizes the value $f_i(S)$ for
some $S\subseteq A_i$; we have to show that $D_i$ has an appropriate
cut with value $f_i(S)$. Subgraph $F$ is the edge-disjoint union of
subgraphs $F_1\subseteq G_{j_1}$ and $F_2\subseteq G_{j_2}$. We
consider 3 cases: in subgraph $F$, vertex $\mu$ is either connected to
neither $x_{j_1}$ nor $y_{j_2}$, connected only to $x_{j_1}$, or
connected only to $y_{j_2}$.

Case 1: $\mu$ is connected to neither $x_{j_1}$ nor
$y_{j_2}$. In this case, vertices of $A_{j_1}$ are not connected to
$y_i$ and vertices of $A_{j_2}$ are not connected to $x_i$, hence
$S=A_i\cap A_{j_1}=A_{j_1}\setminus T_1$ is the only possibility. Furthermore, $F$
connects both $T_1$ and $T_2$ to $\mu$. It follows that
$\ell(F)=\ell(F_1)+\ell(F_2)\ge f_{j_1}(A_{j_1}\setminus T_1)+f_{j_2}(T_2)$. Set $X=V(D_i)\cup
\{\gamma_1\}$: now we have $\delta_{D_i}(X)=f_{j_1}(A_{j_1}\setminus T_1)+f_{j_2}(T_2)\le \ell(F)$, $X$ contains
$(A_{j_1}\setminus T_1)\cup\{s_i\}$, and is disjoint from
$(A_{j_2}\setminus T_2)\cup
\{t_i\}$.

Case 2: $\mu$ is connected only to $x_{i}$. This is only possible if
$A_{j_1}\setminus T_1 \subseteq S$. Clearly, $\ell(F_1)\ge a_{j_1}$.
Subgraph $F_2$ has to connect every vertex in $(S\cap A_{j_2})\cup
T_2$ to $x_{j_2}$ and every vertex $A_{i}\setminus
S=A_{j_2}\setminus(S\cup T_2)$ to $y_{j_2}$. Hence $\ell(F_2)\ge
f_{j_2}((S\cap A_{j_2})\cup T_2)$ and let $X_2\subseteq V(D_{j_2})$ be
the corresponding cut in $D_{j_2}$. Set $X:=X_2\cup V(D_{j_1})\cup
\{\gamma_1,\gamma_2\}$, we have $\delta_{D_i}(X)=f_{j_2}((S\cap A_{j_2})\cup
T_2)+a_{j_1}\le \ell(F_2)+a_{j_1} \le \ell(F)$ (note that no edge with
infinite length leaves $X$ since $T_2\subseteq X_2$). As $X$ contains
$S$ and contains none of the vertices in $A_i\setminus S$, we proved
the existence of the required cut.

Case 3: $\mu$ is connected only to $y_{i}$. Similar to
case 2.

Suppose now that for some $S\subseteq A_i$, there is a set $X\subseteq
V(D_i)$ with $S\cup \{ s_i\} \subseteq X \subseteq V(D_i) \setminus ((A_{i}\setminus
S)\cup \{t_{i}\})$. We have to show that $\delta_{D_i}(X)\ge
f_i(S)$. If $\delta_{D_i}(X)=\infty$, then there is nothing to show.
  In particular, because of the edge $\overrightarrow{\gamma_2\gamma_1}$, we are
  trivially done if $\gamma_2\in X$ and $\gamma_1\not\in X$. Thus we have to
  consider only 3 cases depending which of $\gamma_1,\gamma_2$ are contained in
  $X$.

  Case 1: $\gamma_1\in X$, $\gamma_2\not\in X$. In this case, the edges having
  length $\infty$ ensure that $V(D_{j_1})\subseteq X$ and
  $V(D_{j_2})\cap X=\emptyset$, thus
  $\delta_{D_i}(X)=\ell(\overrightarrow{\gamma_1\gamma_2})+\ell(\overrightarrow{t_{j_1}s_{j_2}})=f_{j_1}(A_{j_1}\setminus T_1)+f_{j_2}(T_{j_2})$.
We also have $S=A_{j_1}\setminus T_1$. Now it is easy to see that
$f_i(S)\le f_{j_1}(A_{j_1}\setminus T_1)+f_{j_2}(T_{j_2})$: taking
the union of some $F_1$ realizing $f_{j_1}(A_{j_1}\setminus T_1)$ and
some $F_2$ realizing $f_{j_2}(T_{j_2})$, we get a subset $F$ realizing
$f_i(S)$.

Case 2: $\gamma_1,\gamma_2\in X$. The edges starting from $\gamma_1$ and having
length $\infty$ ensure that $V(D_{j_1})\subseteq X$. Furthermore,
$\gamma_2\in X$ ensures that $T_2\subseteq X$. Let $X_2:=X\cap
V(D_{j_2})$, we have $X_2\cap A_{j_2}=T_2\cup (S\cap A_{j_2})$, which
implies $\delta_{D_{j_2}}(X_2)\ge f_{j_2}(T_2\cup (S\cap A_{j_2}))$.
Observe that $\delta_{D_i}(X)=a_{j_1}+\delta_{D_{j_2}}(X_2)$ (the term
$a_{j_1}$ comes from the edge $\overrightarrow{\gamma_2t_{i}}$).
Let $F_1$ be a subset of $G_{j_1}$ realizing $a_{j_1}$ and let
$F_2$ be a subset of $G_{j_2}$ realizing $f_{j_2}(T_2\cup (S\cap
A_{j_2}))$. Let $F:=F_1\cup F_2$, note that $F$ connects vertices $S$
with $x_i$, vertices $A_i\setminus S$ with $y_i$, and vertices in
$T_1\cup T_2$ with $\mu$. Thus $f_{i}(S)\le \ell(F)=a_{j_1}+
f_{j_2}(T_2\cup (S\cap A_{j_2}))=\delta_{D_i}(X)$, what we had to
show.

Case 3: $\gamma_1,\gamma_2\not\in X$. Similar to Case 2.
\end{proof}

\section{Hardness for treewidth 3}\label{sec:hardness-treewidth-3}

In this section, we show that \prob{Steiner Forest} is \textup{NP}-hard on
graphs with treewidth at most 3. Very recently, this has been proved
independently by Gassner~\cite{Gassner2009}, but our compact proof perhaps better
explains what the reason is for the sharp difference between the
series-parallel and the treewidth 3 cases.

Consider the graph in Figure~\ref{fig:hardness} and let us define the function $f$
analogously to the function $f_i$ in
Section~\ref{sec:algor-seri-parall}: for every $S\subseteq \{1,2,3\}$,
let $f(S)$ be the minimum cost of a subgraph $F$ where $x$ and $y$ are
not connected,  $t_i$ is connected to $x$ for every $i\in S$, and
$t_i$ is connected to $y$ for every $i\in  \{1,2,3\}\setminus S$; if
there is no such subgraph $F$, then let $f(S)=\infty$. It is easy to
see that $f(\{1,2\})=f(\{2,3\})=f(\{1,2,3\})$, while
$f(\{2\})=\infty$. Thus, unlike in the case of series-parallel graphs,
this function is not submodular.
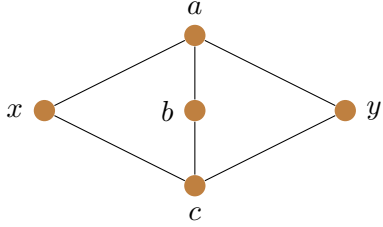
\begin{figure}
\begin{center}
\begin{tikzpicture}
\tikzstyle{nod}=[circle,minimum size=1mm,inner sep=1mm,fill=brown]x
\draw (-2,0) node[nod,label=left:$x$] (Nx) {};
\draw (2,0) node[nod,label=right:$y$] (Ny) {};
\draw (0,1) node[nod,label=above:$a$] (Na) {};
\draw (0,-1) node[nod,label=below:$c$] (Nc) {};
\draw (0,0) node[nod,label=left:$b$] (Nb) {};
\draw (Nx) -- (Na);
\draw (Nx) -- (Nc);
\draw (Ny) -- (Na);
\draw (Ny) -- (Nc);
\draw (Na) -- (Nb);
\draw (Nb) -- (Nc);
\end{tikzpicture}
\caption{The graph used in the proof Theorem~\ref{th:boundedhardness}.}\label{fig:hardness}
\end{center}
\end{figure}

We use the properties of the function $f$ defined in the previous
paragraph to obtain a hardness proof in a more or less ``automatic''
way. 
Let us define the Boolean relation $R(a,b,c):=(a=c)\vee (b=c)$.
Observe that for any $S\subseteq \{1,2,3\}$, we have $f(S)=3$ if
$R(1\in S,2\in S,3\in S)=1$ and $f(S)=\infty$ otherwise (here $1\in S$
means the Boolean variable that is 1 if and only if $1\in S$).
 An
{\em R-formula} is a conjunction of clauses, where each clause is the
relation $R$ applied to some Boolean variables or to the constants 0 and 1,
e.g., $R(x_1,0,x_4)\wedge R(0,x_2,x_1) \wedge R(x_3,x_2,1)$. In the
\prob{$R$-Sat} problem, the input is an $R$-formula and it has to be
decided whether the formula has a satisfying assignment.

\begin{lemma}
\prob{$R$-Sat} is \textup{NP}-complete.
\end{lemma}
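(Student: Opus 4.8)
The plan is as follows. Membership in \textup{NP} is immediate, since a satisfying assignment can be checked clause by clause in polynomial time, so all the work is to establish \textup{NP}-hardness. I would reduce from \emph{monotone Not-All-Equal 3-Sat} (equivalently, $2$-colourability of $3$-uniform hypergraphs): given a collection of triples of Boolean variables, decide whether there is a $0/1$ assignment under which no triple is constant. This problem is classically \textup{NP}-complete.

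The whole reduction rests on two elementary identities for the relation $R(a,b,c)=(a=c)\vee(b=c)$, which is false exactly when $a=b\neq c$. First, specialising the third coordinate to a constant shows that $R(a,b,0)$ is false iff $a=b=1$ and $R(a,b,1)$ is false iff $a=b=0$; hence the pair of clauses $R(x,\bar x,0)\wedge R(x,\bar x,1)$ is satisfied precisely when $\bar x=\neg x$. Second --- and this is the crux --- for any variables $x,y,z$ the clause $R(\bar x,\bar y,z)$ is satisfied unless $\bar x=\bar y\neq z$, which, once $\bar x=\neg x$ and $\bar y=\neg y$ have been enforced, unwinds to the statement that $x,y,z$ are not all equal. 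Thus a single $R$-clause over complemented variables is exactly a Not-All-Equal constraint.

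Given a monotone Not-All-Equal 3-Sat instance on variables $x_1,\dots,x_n$, I would therefore output the $R$-formula that (i) for every $i$ contains a fresh variable $\bar x_i$ together with the two clauses $R(x_i,\bar x_i,0)$ and $R(x_i,\bar x_i,1)$, and (ii) for every input triple $\{x_i,x_j,x_k\}$ contains the clause $R(\bar x_i,\bar x_j,x_k)$. This is clearly computable in polynomial time. For correctness: from any satisfying assignment of the $R$-formula, the clauses in (i) force $\bar x_i=\neg x_i$, and then each clause in (ii) witnesses that the triple $\{x_i,x_j,x_k\}$ is not constant, so the restriction to $x_1,\dots,x_n$ satisfies the original instance; conversely, any not-all-equal assignment of the input, extended by setting $\bar x_i:=\neg x_i$, satisfies every clause of the $R$-formula. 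This yields \textup{NP}-hardness, and together with membership in \textup{NP} the lemma follows.

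I do not anticipate a genuine obstacle here; the one place demanding care is bookkeeping the polarities --- in particular, remembering that the triple $\{x_i,x_j,x_k\}$ is encoded by $R(\bar x_i,\bar x_j,x_k)$, with complements taken on two of the three coordinates, and that the constants $0$ and $1$ (which the definition of an $R$-formula explicitly permits) are exactly what make the complement-forcing gadget work. If one preferred to start from ordinary $3$-Sat instead of the Not-All-Equal variant, the same two identities also allow building a ``$w\leftrightarrow(\ell_1\vee\ell_2)$'' gadget and hence encoding arbitrary $3$-clauses, at the cost of a slightly longer argument.
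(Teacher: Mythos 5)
Your reduction is correct and is essentially the paper's own argument: both reduce from monotone \prob{Not-All-Equal 3SAT}, use the pair of clauses $R(\cdot,\cdot,0)\wedge R(\cdot,\cdot,1)$ with the permitted constants to force a complemented copy of a variable, and then observe that a single $R$-clause on the right polarities expresses exactly a not-all-equal constraint. The only (immaterial) difference is that you introduce one complement variable $\bar x_i$ per variable, whereas the paper introduces one auxiliary variable $d$ per clause via $R(a,b,d)\wedge R(c,d,0)\wedge R(c,d,1)$.
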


\begin{proof}
  Readers familiar with Schaefer's Dichotomy Theorem (more precisely,
  the version allowing constants \cite[Lemma 4.1]{MR80d:68058}) can
  easily see that $R$-SAT is NP-complete.  It is easy to verify that
  the relation $R$ is neither weakly positive, weakly negative,
  affine, nor bijunctive. Thus the result of Schaefer immediately
  implies that \prob{$R$-Sat} is NP-complete.

  For completeness, we give a simple self-contained proof here.  The
  reduction is from \prob{Not-All-Equal 3SAT}\footnote{In
    \prob{Not-All-Equal 3SAT}, or \prob{NAE-3SAT} for short, we are
    given a 3SAT instance with the extra restriction that a clause is
    not satisfied if \emph{all} the literals in a clause are true.
    Similarly to \prob{3SAT}, the clause is not satisfied if all the
    literals are false, either.  Thus, the literals in each clause
    have to take both true and false values.}, which is known to be
  NP-complete even if there are no negated literals
  \cite{MR80d:68058}. Given a NAE-SAT formula, we replace each clause
  as follows. For each clause $\text{NAE}(a,b,c)$, we introduce a new
  variable $d$ and create the clauses $R(a,b,d)\wedge R(c,d,0) \wedge
  R(c,d,1)$. If $a=b=c$, then it is not possible that all three
  clauses are simultaneously satisfied (observe that the second and
  third clauses force $c\neq d$). On the other hand, if $a$, $b$, $c$
  do not have the same value, then all three clauses can be satisfied
  by an appropriate choice of $d$. Thus the transformation from
  NAE-SAT to \prob{$R$-Sat} preserves satisfiability.
\end{proof}

The main idea of the following proof is that we can simulate
arbitrarily many $R$-relations by joining in parallel copies of the
graph shown in Figure~\ref{fig:hardness}.
\begin{theorem}\label{th:boundedhardness}
\prob{Steiner Forest} is \textup{NP}-hard on planar graphs with treewidth at
most 3.
\end{theorem}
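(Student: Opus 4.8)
The plan is to reduce \prob{$R$-Sat} to \prob{Steiner Forest} on planar graphs of treewidth~$3$. Given an $R$-formula $\Phi=C_1\wedge\cdots\wedge C_m$ over variables $z_1,\dots,z_n$, I would first preprocess so that every variable occurs in at least two clauses (a clause containing a variable that occurs in no other clause is always satisfiable, so it may be discarded, and this is iterated; if no clauses remain, $\Phi$ is satisfiable) and so that no clause repeats a variable (this already holds for the $R$-formulas produced by the reduction used in the lemma above). I then build $G$ with two distinguished vertices $X$ and $Y$, playing the role of the truth values $1$ and $0$: for each surviving clause $C_j$ take a private copy $H_j$ of the graph of Figure~\ref{fig:hardness}, identify its $x$-vertex with $X$ and its $y$-vertex with $Y$, keep its three internal vertices $a_j,b_j,c_j$ (with $b_j$ the one of degree~$2$) private to $H_j$, and put the three terminals of $H_j$ in correspondence with the three literals of $C_j$, in the order for which $H_j$ \emph{realizes}~$R$ in the sense recorded before the theorem: for $X\not\sim Y$, a legal completion of cost~$3$ exists precisely when the three terminal sides satisfy $R$, and no legal completion exists otherwise. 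The demand set $\DD$ consists of, for each variable $z_i$, pairs chaining together all terminals corresponding to $z_i$ across the clauses, and, for each constant literal $1$ (resp.\ $0$), a pair joining the corresponding terminal to $X$ (resp.\ $Y$). All edge lengths are~$1$.

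Two easy structural facts underlie the reduction. First, $G$ is planar: each $H_j$ admits an embedding with $x$ and $y$ on the outer face, so the copies can be inserted as parallel ``pages'' between $X$ and $Y$, exactly as one blows up the middle vertices of a $K_{2,m}$ drawing into gadgets. Second, $\tw(G)\le 3$, witnessed by a tree decomposition rooted at the bag $\{X,Y\}$ with, for each $j$, a bag $\{X,Y,a_j,c_j\}$ attached to the root and a child bag $\{a_j,b_j,c_j\}$ attached to it; since the internal vertices of distinct gadgets are distinct, the subtree condition is met. The demands do not change the graph.

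The correctness claim is $\OPT\le 3m$ iff $\Phi$ is satisfiable. Forward: from a satisfying assignment $\sigma$, in each $H_j$ install the cost-$3$ legal completion realizing the partition of $H_j$'s terminals induced by $\sigma$ (it exists since $C_j$ is satisfied); the union over $j$ has cost~$3m$, keeps $X\not\sim Y$, and meets every demand, because terminals of one variable all land on one common side and the constant demands hold by construction. Converse: since the gadgets are pairwise edge-disjoint and meet only in $X,Y$, any solution $F$ satisfies $\ell(F)=\sum_j|F\cap E(H_j)|$; moreover every terminal has a demand partner outside its own gadget (or equal to $X$ or $Y$), and the only way out of the interior of $H_j$ is through $X$ or $Y$, so $F\cap E(H_j)$ must connect all of $a_j,b_j,c_j$ to $\{X,Y\}$. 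A forest doing this has at least $3$ edges, so $\ell(F)\ge 3m$ always; and if in addition $X\sim_F Y$, then the $X$--$Y$ path of $F$ enters the interior of some $H_{j_0}$, and a short case check on which of the six edges of $H_{j_0}$ are used shows $F\cap E(H_{j_0})$ then has at least $4$ edges, so $\ell(F)\ge 3m+1$. Hence if $\OPT\le 3m$, then $X\not\sim Y$ in the optimum and each $H_j$ uses exactly~$3$ edges; a $3$-edge subgraph of $H_j$ connecting $a_j,b_j,c_j$ to $\{X,Y\}$ without joining $X$ and $Y$ is a forest realizing some partition, which must be a legal one (the forbidden partition, with $b_j$ alone on one side, admits no legal completion at all), so by the gadget property $R$ holds on the corresponding triple; the chaining demands force all occurrences of each variable onto one side and the constant demands pin the constants, so reading off the side of each variable yields a satisfying assignment of $\Phi$. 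As the construction is clearly polynomial, this proves Theorem~\ref{th:boundedhardness}.

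The main obstacle is the converse, specifically the two edge-counting estimates --- ``every gadget uses at least $3$ edges'' and ``a gadget traversed by an $X$--$Y$ path uses at least $4$ edges'' --- which need a careful though short enumeration of the edge subsets of the five-vertex gadget, together with the companion fact that the forbidden partition cannot be realized without joining $X$ and $Y$. One must also be sure that the demand set genuinely forces \emph{every} internal gadget vertex to attach to $\{X,Y\}$, which is exactly what the preprocessing (every variable occurs at least twice, no clause repeats a variable) and the constant demands guarantee; and the planarity of the parallel amalgamation of the gadgets, while routine, should be argued explicitly.
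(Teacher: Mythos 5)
Your proof is correct and follows essentially the same route as the paper's: a reduction from \prob{$R$-Sat} using parallel copies of the gadget of Figure~\ref{fig:hardness} glued between two poles, with the gadget's non-submodular connection function enforcing the relation $R$ clause by clause. The only substantive difference is that the paper introduces a dedicated vertex for each variable (adjacent to both poles) and routes every clause occurrence to it, targeting cost $n+3m$, whereas you chain the occurrences of each variable directly by demands between gadget terminals and target cost $3m$; this works, but it is precisely why you need the preprocessing step ensuring every variable occurs in at least two clauses, a step the paper's construction does not require.
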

\begin{proof}
  The proof is by reduction from \prob{$R$-Sat}. Let $\phi$ be an
  $R$-formula having $n$ variables and $m$ clauses. We start the
  construction of the graph $G$ by introducing two vertices $v_0$ and
  $v_1$. For each variable $x_i$ of $\phi$, we introduce a vertex
  $x_i$ and connect it to both $v_0$ and $v_1$. We introduce 3 new
  vertices $a_i$, $b_i$, $c_i$ corresponding to the $i$-th clause.
  Vertices $a_i$ and $b_i$ are connected to both $v_0$ and $v_1$,
  while $c_i$ is adjacent only to $a_i$ and $b_i$. If the $i$-th
  clause is $R(x_{i_1},x_{i_2},x_{i_3})$, then we add the 3 pairs
  $\{x_{i_1},a_i\}$, $\{x_{i_2},b_i\}$, $\{x_{i_3},c_i\}$ to $\DD$. If
  the clause contains constants, then we use the vertices $v_0$ and
  $v_1$ instead of the vertices $x_{i_1}$, $x_{i_2}$, $x_{i_3}$, e.g.,
  the clause $R(0,x_{i_2},1)$ yields the pairs $\{v_0,a_i\}$,
  $\{x_{i_2},b_i\}$, $\{v_1,c_i\}$. The length of every edge is 1.
  This completes the description of the graph $G$ and the set of pairs
  $\DD$.

  We claim that the constructed instance of \prob{Steiner Forest}
  has a solution with $n+3m$ edges if and only if the $R$-formula
  $\phi$ is satisfiable. Suppose that $\phi$ has a satisfying
  assignment $f$. We construct the $F$ as follows. If $f(x_i)=1$, then
  let us add edge $x_iv_1$ to $F$; if $f(x_i)=0$, then
  let us add edge $x_iv_0$ to $F$. For each clause, we add 3 edges to
  $F$. Suppose that the $i$-th clause is
  $R(x_{i_1},x_{i_2},x_{i_3})$. We add one of $a_iv_0$ or $a_iv_1$ to
  $F$ depending on the value of $f(x_{i_1})$ and we add one of
  $b_iv_0$ or $b_iv_1$ to $F$ depending on the value of
  $f(x_{i_2})$. Since the clause is satisfied, either
  $f(x_{i_3})=f(x_{i_1})$ or $f(x_{i_3})=f(x_{i_2})$; we add $c_ia_i$
  or $c_ib_i$ to $F$, respectively (if $f(x_{i_3})$ is equal to both,
  then the choice is arbitrary). We proceed in an analogous manner for
  clauses containing constants. It is easy to verify that each pair
  is in the same connected component of $F$.

  Suppose now that there is a solution $F$ with cost $n+3m$. At
  least one edge is incident with each vertex $x_i$, since it cannot
  be isolated in $F$. Each vertex $a_i$, $b_i$, $c_i$ has to be
  connected to either $v_0$ or $v_1$, hence at least $3$ edges of $F$
  are incident with these 3 vertices. As $F$ has $n+3m$ edges, it
  follows that exactly one edge is incident with each $x_i$, and hence
  exactly 3 edges are incident with the set $\{a_i,b_i,c_i\}$. It
  follows that $v_0$ and $v_1$ are not connected in $F$. Define an
  assignment of $\phi$ by setting $f(x_i)=0$ if and only if vertex
  $x_i$ is in the same component of $F$ as $v_0$. To verify that a
  clause $R(x_{i_1},x_{i_2},x_{i_3})$ is satisfied, observe that $c_i$
  is in the same component of $F$ as either $a_i$ or $b_i$. If $c_i$
  is in the same component as, say, $a_i$, then this component also
  contains $x_{i_3}$ and $x_{i_1}$, implying $f(x_{i_3})=f(x_{i_1})$
  as required.
\end{proof}

\fi

\bibliographystyle{alpha}
\bibliography{main}
\ifabstract
\appendix
\appendixtrue
\mainfalse
\fullorapptrue
\ifabstract
\section{Missing basic definitions}\label{app:ContDef} We now define the basic notion of treewidth, as
introduced by Robertson and Seymour~\cite{RS86}.  To define this
notion, we consider representing a graph by a tree structure, called
a tree decomposition. More precisely, a \emph{tree decomposition} of
a graph $G(V,E)$ is a pair $(T,\B)$ in which $T(I,F)$ is a tree and
$\B=\{B_i\:|\:i\in I\}$ is a family of subsets of $V(G)$ such that
1) $\bigcup_{i\in I}B_i = V$; 2) for each edge $e=(u,v)\in E$, there
exists an $i\in I$ such that both $u$ and $v$ belong to $B_i$; and
3) for every $v\in V$, the set of nodes $\{i\in I\:|\:v\in B_i\}$
forms a connected subtree of $T$.

To distinguish between vertices of the original graph $G$ and
vertices of $T$ in the tree decomposition, we call vertices of $T$
\emph{nodes} and their corresponding $B_i$'s bags.  The \emph{width}
of the tree decomposition is the maximum size of a bag in $\B$ minus
$1$.  The \emph{treewidth} of a graph $G$, denoted $\tw(G)$, is the
minimum width over all possible tree decompositions of $G$.

For algorithmic purposes, it is convenient to define a restricted
form of tree decomposition. We say that a tree decomposition
$(T,\B)$  is {\em nice,} if the tree $T$ is a rooted tree such that
for every $i\in I$ either
\begin{enumerate}
\item $i$ has no children ($i$ is a {\em leaf node.}),
\item $i$ has exactly two children $i_1$, $i_2$ and
  $B_i=B_{i_1}=B_{i_2}$ holds ($i$ is a {\em join node.}),
\item $i$ has a single child $i'$ and $B_i=B_{i'}\cup\{v\}$ for
  some $v\in V$ ($i$ is an {\em introduce node.}), or
\item $i$ has a single child $i'$ and $B_i=B_{i'}\setminus \{v\}$ for
  some $v\in V$ ($i$ is a {\em forget node.}).
\end{enumerate}
It is well-known that every tree decomposition can be transformed
into a nice tree decomposition of the same width in polynomial time.

We will use the following lemma to obtain a nice tree decomposition
with some further properties (a related trick was used in
\cite{marx-gt04}, the proof is similar):
\begin{lemma}\label{lem:nicer}
Let $G$ be a graph having no adjacent degree 1 vertices. $G$ has a
nice tree decomposition of polynomial size having the following two
additional properties:
\begin{enumerate}
\item No introduce node introduces a degree 1 vertex.
\item The vertices in a join node have degree greater than 1.
\end{enumerate}
\end{lemma}

We also need a basic notion of embedding; see, e.g., \cite{RS94,
CM05}. In this paper, an \emph{embedding} refers to a \emph{$2$-cell
embedding}, i.e., a drawing of the vertices and edges of the graph
as points and arcs in a surface such that every face (connected
component obtained after removing edges and vertices of the embedded
graph) is homeomorphic to an open disk. We use basic terminology and
notions about embeddings as introduced in \cite{MT01}.  We only
consider compact surfaces without boundary.  Occasionally, we refer
to embeddings in the plane, when we actually mean embeddings in the
$2$-sphere.  If $S$ is a surface, then for a graph $G$ that is
($2$-cell) embedded in $S$ with $f$ facial walks, the number
$g=2-|V(G)|+|E(G)|-f$ is independent of $G$ and is called the
\emph{Euler genus} of $S$. The Euler genus coincides with the
crosscap number if $S$ is non-orientable, and equals twice the usual
genus if the surface $S$ is orientable.

\fi

\ifabstract
\section{Missing proofs from Section~\ref{sec:break}}

\begin{proof}[Proof of Lemma~\ref{lem:break:cost}]
 The strategy is to prove that the cost of this forest is at most $2\sum_{v\in S\subseteq V}y_{S,v} =2\sum_{v\in V}\phi_v$.
The equality follows from Equation~\eqref{eqn:lp:2}---it holds with equality at the end of the algorithm.
Recall that the growth phase has several events corresponding to an edge or set constraint going tight.
We first break apart $y$ variables by epoch.  Let $t_j$ be the
  time at which the $j^{\rm th}$ event point occurs in the growth phase ($0=t_0\leq t_1 \leq t_2 \leq \cdots$), so the $j^{\rm th}$ epoch is the interval of time from $t_{j-1}$ to $t_j$.  For each cluster $C$, let $y_{C}^{(j)}$  be the amount by which $y_C:=\sum_{v\in C}y_{C,v}$ grew during epoch $j$, which is  $t_j-t_{j-1}$ if it was active during this epoch, and zero otherwise.  Thus, $y_C = \sum_j y_{C}^{(j)}$.  Because each edge $e$ of $F_2$
  was added at some point by the growth stage when its edge packing constraint \eqref{eqn:lp:1} became tight, we can exactly apportion the cost
  $c_e$ amongst the collection of clusters $\{C : e\in \delta(C)\}$ whose
  variables ``pay for'' the edge, and can divide this up further
  by epoch.  In other words, $c_e = \sum_j \sum_{C:e\in \delta(C)}
  y_{C}^{(j)}$.  We will now prove that the total edge cost from $F_2$
  that is apportioned to epoch $j$ is at most $2 \sum_{C} y_{C}^{(j)}$.  In other words, during each epoch,
  the total rate at which edges of $F_2$ are paid for by all active
  clusters is at most twice the number of active clusters.
  Summing over the epochs yields the desired conclusion.

  We now analyze an arbitrary epoch $j$.  Let $\C_j$ denote the set
  of clusters that existed during epoch $j$.
Consider the graph $F_2$, and then collapse each cluster
  $C \in \C_j$ into a supernode.  Call the resulting graph $H$.
Although the nodes of
  $H$ are identified with clusters in $\C_j$, we will continue to refer
  to them as clusters, in order to to avoid confusion with the nodes of
  the original graph.  Some of the clusters are active and some may be
  inactive.  Let us denote the active and inactive clusters in $\C_j$
  by $\C_{act}$ and $\C_{dead}$, respectively.
  The edges of $F_2$ that are being partially paid for during epoch $j$
  are exactly those edges of $H$ that are incident to an active cluster,
  and the total amount of these edges that is paid off during epoch
  $j$ is $(t_j-t_{j-1}) \sum_{C \in \C_{act}} \deg_H(C)$.
  Since every active cluster grows by exactly $t_j-t_{j-1}$ in epoch
  $j$, we have $\sum_{C} y_{C}^{(j)} \geq \sum_{C \in\C_j}y_{C}^{(j)} = (t_j-t_{j-1}) |\C_{act}|$.   Thus, it suffices to show that $\sum_{C
    \in \C_{act}} \deg_H(C) \leq 2 |\C_{act}|$.

First we
  must make some simple observations about $H$.  Since $F_2$ is a subset of the edges in $F_1$, and each cluster represents a
  disjoint induced connected subtree of $F_1$, the contraction to $H$ introduces  no cycles.  Thus, $H$ is a forest.
  All the leaves of $H$ must
  be alive, because otherwise the corresponding cluster $C$ would be
  in $\B$ and  hence would have been pruned away.

  With this information about $H$, it is easy to bound $\sum_{C \in
    \C_{act}} \deg_H(C)$.
  The total degree in $H$ is at most $2(|\C_{act}|+|\C_{dead}|)$.
  Noticing that the degree of dead clusters is at least two,
  we get $\sum_{C\in\C_{act}}\deg_H(C) \leq 2(|\C_{act}|+|\C_{dead}|) - 2|\C_{dead}| = 2|\C_{act}|$ as desired.
\end{proof}

\begin{proof}[Proof of Lemma~\ref{lem:color}]
 The cost of $G'(V,E)$ is
\begin{align*}
 \sum_{e\in E'} c_e
  &\geq  \sum_{e\in E'}\sum_{S: e\in\delta(S)} y_S &\text{by \eqref{eqn:lp:1}}\\
  &=     \sum_S |E'\cap\delta(S)|y_S \\
  &\geq  \sum_{S: E'\cap\delta(S)\neq\emptyset} y_S \\
  &=     \sum_{S: E'\cap\delta(S)\neq\emptyset}\sum_{v\in S} y_{S,v} \\
  &=     \sum_{v}\sum_{S\ni v: E'\cap\delta(S)\neq\emptyset} y_{S,v} \\
  &\geq  \sum_{v\in L}\sum_{S\ni v: E'\cap\delta(S)\neq\emptyset} y_{S,v} \\
  &=     \sum_{v\in L}\sum_{S\ni v} y_{S,v}, \\\intertext{because $y_{S,v}=0$ if $v\in L$ and $E'\cap\delta(S)=\emptyset$,}
  &=     \sum_{v\in L}\phi_v  &\text{by a tight version of \eqref{eqn:lp:2}}. &\qedhere
\end{align*}
\end{proof}

\fi

\fi

\end{document}